\begin{document}

\title{Probabilistic Stable Functions on Discrete Cones \\ are Power Series (long version).}

\author{Rapha\"elle Crubill\'e}

\maketitle
\begin{abstract}
  \hide{
  In ~\cite{pcsaamohpc} Danos and Ehrhard developed the category $\pcoh$ of probabilistic coherence spaces as a model of linear logic allowing to express \emph{discrete} probabilities.
  Recently~\cite{pse}, Ehrhard, Pagani and Tasson put forward the cartesian closed category $\cstabm$ of measurable cones, and measurable, stables functions between cones, as a model for continuous probabilities.
  Here, we look at $\cstabm$ as a model for \emph{discrete probabilities}, by showing the existence of a full and faithful functor embedding $\pcoh_!$ into $\cstabm$. The proof is based on a generalization of Bernstein's theorem in real analysis allowing to see stable functions between cones as generalized power series.

  }
  We study the category $\cstabm$ of measurable cones and measurable stable functions\textemdash a denotational model of an higher-order language with \emph{continuous} probabilities and full recursion~\cite{pse}. We look at $\cstabm$ as a model for \emph{discrete} probabilities, by showing the existence of a full and faithful functor preserving cartesian closed structure which embeds probabilistic coherence spaces\textemdash a \emph{fully abstract} denotational model of an higher language with full recursion and \emph{discrete} probabilities~\cite{EPT15}\textemdash into $\cstabm$. The proof is based on a generalization of Bernstein's theorem in real analysis allowing to see stable functions between discrete cones as generalized power series.
\end{abstract}

\section{Introduction}

Probabilistic reasoning allows us to describe the behavior of systems with inherent uncertainty, or on which we have an incomplete knowledge. To handle statistical models, one can employ probabilistic programming languages: they give us tools to build, evaluate and transform them.
While for some applications it is enough to consider \emph{discrete} probabilities, we sometimes want to model systems where the underlying space of events has inherent \emph{continuous} aspects: for instance in hybrid control systems~\cite{alur1996hybrid}, as used e.g. in flight management.
In the machine learning community~\cite{gordon2014probabilistic,goodman2013principles}, statistical models are also used to express our \emph{beliefs} about the world, that we may then update using \emph{Bayesian inference}\textemdash the ability to condition values of variables via observations.

As a consequence, several probabilistic \emph{continuous} languages have been introduced and studied, such as Church~\cite{church}, Anglican~\cite{anglican}, as well as formal operational semantics for them~\cite{borgstrom2016lambda}. Giving a \emph{fully abstract} \emph{denotational} semantics to a higher-order probabilistic language with full recursion, however, has proved to be harder than in the non-probabilistic case. For discrete probabilities, there have been two such fully abstract models: in~\cite{danos2002probabilistic}, Danos and Harmer introduced a fully abstract denotational semantics of a probabilistic extension of idealized Algol, based on game semantics; and in~\cite{pcsaamohpc} Ehrhard, Pagani and Tasson showed that the category $\pcoh$ of \emph{probabilistic coherence spaces} gives a fully abstract model for $\pcf_\oplus$, a discrete probabilistic variant of Plotkin's $\pcf$.

While there is currently no known fully abstract denotational semantics for a higher-order language with full recursion and continuous probabilities, several denotational models have been introduced. The pioneering work of Kozen~\cite{kozen1979semantics} gave a denotational semantics to a \emph{first-order} while-language endowed with a random real number generator. In~\cite{staton2016semantics}, Staton et al
give a denotational semantics to an higher-order language: they first develop a distributive category based on measurable spaces as a model of the first-order fragment of their language, and then extend it into a cartesian closed category using a standard construction based on the functor category.

Recently, Ehrhard, Pagani and Tasson introduced in~\cite{pse} the category $\cstabm$, as a denotational model of an extension of $\pcf$ with continuous probabilities. It is presented as a refinement with \emph{measurability constraints} of the category $\cstab$ of abstract cones and so-called \emph{stable} functions between cones, consisting in a generalization of \emph{absolutely monotonous functions} from real analysis.

Here, we look at the category $\cstabm$ from the point of view of \emph{discrete} probabilities. It was noted in~\cite{pse} that there is a natural way to see any probabilistic coherent space as an object of $\cstab$. In this work, we show that this connection leads to a full and faithful functor $\functor$ from $\pcoh_!$\textemdash the Kleisli category of $\pcoh$\textemdash into $\cstab$. It is done by showing that every stable function between probabilistic coherent spaces can be seen as a power series, using the extension to an abstract setting of Bernstein's theorem for absolutely monotonous functions shown by McMillan~\cite{mcmillan}.  We then show that the functor $\functor$ we have built is cartesian closed, i.e. respects the cartesian closed structure of $\pcoh_!$. In the last part, we turn $\functor$ into a functor $\functorm:\pcoh_! \rightarrow \cstabm$, and we show that $\functorm$ too is cartesian closed.

To sum up, the contribution of this paper is to show that there is a cartesian closed full embedding from $\pcoh_!$ into $\cstabm$. Since $\pcoh_!$ is known to be a fully abstract denotational model of $\pcf_\oplus$, a corollary of this result is that $\cstabm$ too is a fully abstract model of $\pcf_\oplus$.

\section{Discrete and Continuous Probabilistic Extension of $\pcf$: an Overview.}\label{sect:overview}
A simple way to add probabilities to a (higher-order) programming language is to add a fair probabilistic choice operator to the syntax.
Such an approach has been applied to various extensions of the $\lambda$-calculus~\cite{DLZ}. To fix ideas, we give here the syntax of a (minimal) probabilistic variant of Plotkin's $\pcf$~\cite{plotkin1977lcf}, that we will call $\pcf_\oplus$. It is a typed language, whose types are given by: $A \bnf N \midd A \rightarrow A$, where $N$ is the base type of naturals numbers. The programs are generated as follows:
\begin{align*}
  \termone, & \termtwo \in \pcf_\oplus \bnf\; \varone \midd \lambda {\varone^A}\cdot \termone \midd (\termone \termtwo) \midd (Y \termtwo) \\ &
\midd \text{ifz }(\termone, \termtwo, \termthree) \midd \text{let}(\varone, \termone, \termtwo)\midd \termone \oplus \termtwo \\ &  \midd \underline{n} \midd \text{ succ }(\termone) \midd \text{ pred }(\termone)
\end{align*}
The operator $\oplus$ is the fair probabilistic choice operator, $Y$ is a recursion operator, and $n$ ranges over natural numbers. The $\text{ifz}$ construct tests if its first argument (of type $N$) is $0$, reduces to its second argument if it is the case, and to its third otherwise. We endow this language with a natural operational semantics~\cite{EPT15}, that we choose to be call-by-name. However, for expressiveness we need to be able to simulate a call-by-value discipline on terms of ground type $N$: it is enabled by the $\text{let}$-construct.  

We can see that the kind of probabilistic behavior captured by $\pcf_\oplus$ is \emph{discrete}, in the sense that it manipulates distributions on countable sets.
In~\cite{pcsaamohpc}, Ehrhard and Danos introduced a model of Linear Logic designed to lead to denotational models for discrete higher-order probabilistic computation: the category $\pcoh$ of \emph{probabilistic coherence spaces (PCSs)}. It was indeed shown in~\cite{EPT15} that $\pcoh_!$, the Kleisli category of $\pcoh$ is a \emph{fully abstract} model of $\pcf_\oplus$, while the Eilenberg-Moore Category of $\pcoh$ is a \emph{fully abstract} model of a probabilistic variant of Levy's call-by-push-value calculus.

We are going to illustrate here on examples the ideas behind the denotational semantics of $\pcf_\oplus$ in $\pcoh_!$.
The basic idea is that the denotation of a program consists of a vector on $\Rp{X}$, where $X$ is the countable sets of possible outcomes. For instance, the denotation of the program $\underline{0} \oplus \underline{1}$ of type $N$ is the vector $x \in \Rp{\mathbb{N}}$, with $x_0 = \frac 1 2$, $x_1 = \frac 1 2$, and $x_k = 0$ for $k \not \in \{0,1\}$. Morphisms in $\pcoh_!$, on the other hand, can be seen as analytic functions (i.e. power series) between real vector spaces. Let us look at the denotation of the simple $\pcf_\oplus$ program below.
$$M := \lambda {\varone^N}\cdot \left(\underline{0} \oplus \text{ifz}(x,\underline{1}, \text{ifz}(x,\underline{0},\Omega)) \right),$$
where $\Omega$ is the usual encoding of a never terminating term using the recursion operator.
The denotation of $M$ consists of the following function $\Rp{\mathbb{N}} \rightarrow \Rp{\mathbb{N}}$:
$$f(x)_k = \begin{cases}
  \frac 1 2 + \frac 1 2 \sum_{i \neq 0} x_i \cdot x_0 \qquad \text{ if } k=0 \\
  \frac 1 2 x_0 \qquad \text{ if }k = 1 \\
  0 \qquad \text{ if }k \not \in \{0,1\}
\end{cases}
$$
We can see that $f(x)_k$ corresponds indeed to the probability of obtaining $\underline{k}$ if we pass to $M$ a term $N$ with $x$ as denotation. Observe that $f$ here is a polynomial in $x$; however since we have recursion in our language, there are programs that do an unbounded number of calls to their arguments: then their denotations are not polynomials anymore, but they are still analytic functions. The analytic nature of $\pcoh_!$ morphisms plays a key role in the proof of full abstraction for $\pcf_\oplus$.

Observe that this way of building a model for $\pcf_\oplus$ is utterly dependent on the fact that we consider discrete probabilities over a countable sets of values. In recent years, however, there has been much focus on continuous probabilities in higher-order languages. The aim is to be able to handle classical mathematical distributions on reals, as for instance normal or Gaussian distributions, that are widely used to build generic physical or statistical models, as well as transformations over these distributions.

We illustrate the basic idea here by presenting the language $\pcf_{\texttt{sample}}$, following~\cite{pse}, that can be seen as the continuous counterpart to the discrete language $\pcf_\oplus$. It is a typed language, with types generated as $A \bnf R \midd A \rightarrow A$, and terms generated as follows:
\begin{align*}
  \termone \in \pcf_{\text{sample}} \bnf\;& \varone \midd \lambda {\varone^A}\cdot \termone \midd (\termone \termtwo) \midd (Y \termtwo) \\ &
\midd \text{ifz }(\termone, \termtwo, \termthree) \midd \text{let}(\varone, \termone, \termtwo) \\
& \midd \underline{r} \midd \texttt{sample} \midd \underline{f}(\termone_1, \ldots, \termone_n) 
 \end{align*}
where $r$ is any real number, and $f$ is in a fixed countable set of measurable functions $\mathbb{R}^n \rightarrow \mathbb{R}$. The constant \texttt{sample} stands for the uniform distribution over $[0,1]$.
Observe that admitting every measurable functions as primitive in the language allows to encode every distribution that can be obtained in a measurable way from the uniform distribution, for instance Gaussian or normal distributions. This language is actually expressive enough to simulate other probabilistic features, as for instance Bayesian conditioning, as highlighted in~\cite{pse}.
Moreover, we can argue it is also \emph{more general} than $\pcf_\oplus$: first it allows to encode integers (since $\mathbb{N} \subseteq \mathbb{R}$) and basic arithmetic operations over them. Secondly, since the orders operator $ \geq : \mathbb{R} \times \mathbb{R} \rightarrow \{0,1\} \subseteq \mathbb{R}$ is measurable, we can construct in $\pcf_{\text{sample}}$ terms like this one:
$$\text{ifz}(\underline{\geq}(\texttt{sample}, \frac 1 2), \termone, \termtwo),$$
which encodes a fair choice between $\termone$ and $\termtwo$.

We see, however, that $\pcoh_!$ cannot be a model for $\pcf_{\texttt{sample}}$: indeed it doesn't even seem possible to write a probabilistic coherence space for the real type.
In~\cite{pse}, Ehrhard, Pagani and Tasson introduced the cartesian closed category $\cstabm$ of measurable cones and measurable stables functions, and showed that it provides an adequate and sound denotational model for $\pcf_{\texttt{sample}}$. The denotation of the base type $R$ is taken as the set of finite measures over reals, and the denotation of higher-order types is then built naturally using the cartesian closed structure. From there, it is natural to ask ourselves: how \emph{good} $\cstabm$ is as a model of probabilistic higher-order languages ?

The present paper is devoted to give a partial answer to this question: in the case where we restrict ourselves to a \emph{discrete fragment} of $\pcf_{\texttt{sample}}$. To make more precise what we mean, let us consider a continuous language with an \emph{explicit} discrete fragment which has both $R$ and $N$ as base types: we consider the language $\pcf_{\oplus,\texttt{sample}}$ with all syntactic constructs of both $\pcf_{\oplus}$ and $\pcf_{\texttt{sample}}$, as well as an operator $\texttt{real}$ with the typing rule:
$$
\AxiomC{$\strut \wfjt {\contone} \termone N $}
\UnaryInfC{$\wfjt {\contone} {\mathtt{real}(\termone)} R  $}
\DisplayProof,
$$
designed to enable the continuous constructs to act on the discrete fragment, by giving a way to see any distribution on $\NN$ as a distribution on $\RR$.  
We see that we can indeed extend in a natural way the denotational semantics of $\pcf_{\texttt{sample}}$ given in~\cite{pse} to $\pcf_{\texttt{sample}, \oplus}$: in the same way that the denotational semantics of $R$ is taken as the set of all finite measures on $\RR$, we take the denotational semantics of $N$ as the set $\meas \NN$ of all finite measures over $\NN$. We take as denotational semantics of the operator $\texttt{real}$ the function:
$$\semm {\texttt{real}}: \mu \in \meas \NN \mapsto \left(A \in \Sigma_\RR \mapsto \sum_{n \in \NN \cap A} \mu(n) \right) \in \meas {\RR}.$$
We will see later that this function is indeed a morphism in $\cstabm \allowbreak(\meas{\NN}, \meas {\RR})$. What we would like to know is: what is the structure of the sub-category of $\cstabm$ given by the \emph{discrete types of $\pcf_{\texttt{sample, }\oplus}$}, i.e the one generated inductively by $\semm N$, $\Rightarrow$, $\times$ ?

The starting point of our work is the connection highlighted in~\cite{pse} between PCSs and complete cones: every PCSs can be seen as a complete cone, in such a way that the denotational semantics of $N$ in $\pcoh_!$ becomes the set of finite measures over $\NN$.  We formalize this connection by a functor $F^m : \pcoh_! \rightarrow \cstabm$. 
However, to be able to use $\pcoh_!$ to obtain information about the discrete types sub-category of $\cstabm$, we need to know whether this connection is preserved at higher-order types: does the $\Rightarrow$ construct in $\cstabm$ make some wild functions not representable in $\pcoh_!$ to appear, e.g. not analytic? The main technical part of this paper consists in showing that this is not the case, meaning that the functor $F^m$ is full and faithful, and cartesian closed. It tells us that the discrete types sub-category of $\cstabm$ has actually the same structure as the subcategory of $\pcoh_!$ generated by $\sem \NN^\pcoh$, $\Rightarrow$ and $\times$. Since $\pcoh_!$ is a fully abstract model of $\pcf_\oplus$, it tells us that the discrete fragment of $\pcf_{\texttt{sample},\oplus}$ is fully abstract in $\cstabm$.

\section{Cones and Stable Functions}
The category of measurable cones and measurable, stable functions ($\cstabm$), was introduced by Ehrhard, Pagani, Tasson in~\cite{pse} in the aim to give a model for $\pcf_{\texttt{sample}}$.

They actually introduced it as a refinement of the category of complete cones and stable functions, denoted $\cstab$.
 Stable functions on cones are a generalization of well-known \emph{absolutely monotonic functions} in real analysis: they are those functions $f: [0, \infty) \rightarrow \Rp{}$ which are infinitely differentiable, and such that moreover all their derivatives are non-negative. The relevance of such functions comes from a result due to Bernstein: every absolutely monotonic function coincides with a power series. 
Moreover, it is possible to characterize absolutely monotonic functions without explicitly asking for them to be differentiable: it is exactly those functions such that all the so-called \emph{higher-order differences}, which are quantities defined only by sum and subtraction of terms of the form $f(x)$, are non-negative. (see~\cite{widder}, chapter 4). The definition of \emph{pre-stable functions} in~\cite{pse} generalizes this characterization.  

In this section, we first recall basic facts about cones and stable functions, all extracted from ~\cite{pse}. Then we will prove a generalization of Bernstein's theorem for pre-stable functions over a particular class of cones, which is the main technical contribution of this paper. We will do that following the work of McMillan on a generalization of Bernstein's theorem for functions ranging over abstract domains endowed with partition systems, see~\cite{mcmillan}.




\subsection{Cones}
The use of a notion of cones in denotational semantics to deal with probabilistic behavior goes back to Kozen in~\cite{kozen1979semantics}.
 We take here the same definition of cone as in~\cite{pse}.

\begin{definition}
  A cone $\coneone$ is a $\Rp{}$-semimodule given together with an $\Rp{}$ valued function $\norm \coneone {\cdot}$ called \emph{norm of $\coneone$}, and verifying:
  \begin{align*}
    & \left(x + y = x + y'\right)\,  \Rightarrow\, y = y' && \norm \coneone {\alpha x} = \alpha {\norm \coneone x} \\
    & \norm {\coneone}{x + x'} \leq \norm \coneone x + \norm \coneone {x'} &&
    \norm \coneone x = 0 \Rightarrow x = 0\\
    & \norm \coneone x \leq \norm \coneone {x + x'}
    \end{align*}
\end{definition}

The most immediate example of cone is the non-negative real half-line, when we take as norm the identity. Another example is the positive quadrant in a 2-dimensional plan, endowed with the euclidian norm. In a way, the notion of cones is the generalization of the idea of a space where all elements are \emph{non-negative}.
This analogy gives us a generic way to define a pre-order, using the $+$ of the cone structure.
\begin{definition}\label{def:coneorder}
Let be $\coneone$ a cone. Then we define a partial order $\order \coneone$ on $\coneone$ by: $x {\order \coneone} y $ if there exists $z \in \coneone$, with $y = x + z$.
\end{definition}
We define $\boule \coneone$ as the set of elements in $\coneone$ of norm smaller or equal to $1$. We will sometimes call it the \emph{unit ball} of $\coneone$. Moreover, we will also be interested in the \emph{open unit ball} $\bouleopen \coneone$, defined as the set of elements of $\coneone$ of norm smaller than $1$.

In~\cite{pse}, the authors restrict themselves to cones verifying a completeness criterion: it allows them to define the denotation of the recursion operator in $\pcf_{\text{sample}}$, thus enforcing the existence of fixpoints.
\begin{definition}\label{def:cc}
  A cone $\coneone$ is said to be:
  \begin{itemize}
  \item \emph{sequentially complete} if any non-decreasing sequence $(x_n)_{n \in \NN}$ of elements of $\boule \coneone$ has a least upper bound $\sup_{n \in \NN} x_n \in \boule \coneone$.
  \item \emph{directed complete} if for any directed subset $D$ of $\boule \coneone$, $D$ has a least upper bound $\sup D \in\boule \coneone$.
    \item \emph{a lattice cone} if any two elements $x,y$ of $\coneone$ have a least upper bound $x \vee y $.
    \end{itemize}
\end{definition}
Observe that a directed-complete cone is always sequentially complete.
\longv{
  \begin{lemma}\label{lemma:lattice}
    Let be $\coneone$ a lattice cone. Then it holds that:
    \begin{itemize}
    \item Any two element $x,y$ of $\coneone$ have a greatest lower bound $x \wedge y$.
      \item Decomposition Property: if $z \leq x +y$, there there exists $z_1, z_2 \in \coneone$ such that $z = z_1 + z_2$, and $z_1 \leq x$, and $z_2 \leq y$.
      \end{itemize}
  \end{lemma}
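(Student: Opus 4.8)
The plan is to treat $\coneone$ as a cancellative, lattice-ordered commutative monoid and import the classical Riesz-decomposition argument, being careful that the only tools available are the cancellation axiom and the existence of binary joins. The workhorse is the following elementary consequence of cancellativity: whenever $x \preceq y$ there is a \emph{unique} $z$ with $y = x + z$, which I write $y - x$; and the translation $t \mapsto a + t$ is an order embedding, i.e. $a + s \preceq a + t$ iff $s \preceq t$, both implications being immediate from cancellation (together with commutativity). Since $x + y$ is always an upper bound of $\{x,y\}$ (witnessed by $y$ and by $x$ respectively), we have $x \vee y \preceq x + y$, so I may \emph{define} $x \wedge y := (x + y) - (x \vee y) \in \coneone$, and the whole proof consists in verifying that this candidate is the greatest lower bound and then using it to decompose.

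For the greatest lower bound I would first check that $x \wedge y$ is a lower bound. Writing $x \vee y = x + b$ (legitimate since $x \preceq x \vee y$) and substituting into $(x \wedge y) + (x \vee y) = x + y$, cancellation of $x$ yields $(x \wedge y) + b = y$, so $x \wedge y \preceq y$; the case $x \wedge y \preceq x$ is symmetric. For maximality, let $v$ be any common lower bound and write $x = v + p$. The key observation\textemdash and the one genuinely non-formal step\textemdash is that $p + y$ is again an upper bound of $\{x,y\}$: that $y \preceq p + y$ is trivial, and $x = v + p \preceq y + p$ follows from $v \preceq y$ after cancelling $p$. Hence $x \vee y \preceq p + y$, and translating by $v$ gives $v + (x \vee y) \preceq v + p + y = x + y = (x \wedge y) + (x \vee y)$; cancelling $x \vee y$ leaves $v \preceq x \wedge y$, as required.

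For the decomposition property I would assume $z \preceq x + y$ and set $z_1 := z \wedge x$ (which exists by the first part) and $z_2 := (z \vee x) - x$. Then $z_1 \preceq x$ is immediate, and $z_1 + z_2 = z$ follows by adding $x$ to both sides and using $(z \wedge x) + (z \vee x) = z + x$ together with cancellation. It remains to see $z_2 \preceq y$: since both $z$ and $x$ lie below $x + y$, their join satisfies $z \vee x \preceq x + y$, so writing $x + y = (z \vee x) + s$ and $z \vee x = x + z_2$ and cancelling $x$ gives $y = z_2 + s$, i.e. $z_2 \preceq y$. This exhibits the desired decomposition $z = z_1 + z_2$ with $z_1 \preceq x$ and $z_2 \preceq y$.

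I expect the main obstacle to be a matter of discipline rather than depth: because there is no ambient vector space, every subtraction must be licensed by an explicit order relation and every cancellation by the cone axiom, so the real work is in choosing candidates ($x \wedge y := (x+y) - (x \vee y)$, and $z_2 := (z \vee x) - x$) that keep all intermediate quantities inside $\coneone$, and in spotting that $p + y$ is the right comparison upper bound in the maximality argument. Once the translation-invariance of $\preceq$ under cancellation is in hand, everything else is bookkeeping.
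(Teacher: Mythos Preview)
Your proof is correct and follows essentially the same approach as the paper: the candidate $x\wedge y := (x+y)-(x\vee y)$ is identical, and for the decomposition your $z_2 := (z\vee x)-x$ is the paper's choice verbatim, while your $z_1 := z\wedge x$ coincides with the paper's $z_1 := z - z_2$ (both equal $(z+x)-(z\vee x)$). You simply fill in the details the paper omits, in particular the maximality argument for $x\wedge y$ via the auxiliary upper bound $p+y$.
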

  \begin{proof}
    Recall that, if $a \geq b$, we denote by $a-b$ the element $c$ such that $a = b+c$.
    \begin{itemize}
    \item We consider $z = x+y -(x  \vee y)$, and we show that $z$ is indeed the greatest lower bound of $x$ and $y$.
      \item We take $z_2 = (x \vee z)-x$, and $z_1 = z - z_1$.  First, we see that $z_2 \leq (x+y) - x$, and so $z_2 \leq y$. Moreover, $z_1 = x -( (x \vee z)-z) \leq x$.
    \end{itemize}
    \end{proof}
  }
We illustrate Definition~\ref{def:cc} by giving the complete cone used in~\cite{pse} as the denotational semantics of the base type $R$ in $\pcf_{\text{sample}}$.
\begin{example}\label{ex:R}
  We take $\meas \RR$ as the set of \emph{finite} measures over $\mathbb{R}$, and the norm as $\norm {\meas {\RR}} \mu = \mu(\mathbb{R})$. $\meas \RR$ is a directed-complete cone. For every $r \in \mathbb{R}$, the denotational semantics of the term $\underline{r}$ in~\cite{pse} is $\delta_r$, the \emph{Dirac measure with respect to $r$} defined by taking
  $\delta_r(U) = 1 \text{ if } r \in U$, and $\delta_r(U) = 0$ otherwise. 
\end{example}
\hide{
  \begin{proof}
  Let be $D$ a directed subset of $\boule {C_R}$. Recall that the elements of $\boule{C_R}$ are the finite measures $\nu$ on $\mathbb{R}$ such that $\nu(\RR^n) \leq 1$. We define a function $\mu : \Sigma_\RR \rightarrow \Rp{}$, as $\mu(X) = \sup_{\nu \in X} \nu(X)$.
  We are going to show that $\mu$ is a finite measure on $\mathbb{R}$, and that moreover it is the lub of $D$.
  We first show that $\mu$ is a measure: it is direct to see that it is non-negative, and that $\mu(\emptyset) = 0$. We show now the countable additivity: let be $(X_n)_{n \in \NN}$ a sequence of disjoints measurable subsets of $\mathbb{R}$. First, we see that:
    \begin{align*}
      \mu(\sum_{n \in \NN} X_n) & = \sup_{\nu \in D} \nu(\sum_{n \in \NN} X_n) \\
      & =  \sup_{\nu \in D} \sum_{n \in \NN} \nu( X_n)  \text{ since every }\nu\text{ is a measure}\\
      & \leq \sup_{\nu \in D} \sum_{n \in \NN} \mu( X_n)  = \sum_{n \in \NN} \mu( X_n).
    \end{align*}
    We have still to show the reverse inequality. Let be $\nu_0 \in D$.
    Let be $\epsilon > 0$. Since $\nu_0$ is a \emph{finite} measure, there exists $N \in \NN$, such that $\nu_0(\sum_{N \leq n } X_n) \leq \epsilon$. It means that for every $\nu \in D$ with $\nu \leq \nu_0$, $\nu(\sum_{N \leq n } X_n) \leq \epsilon$.
    As a consequence:
    \begin{align*}
      \mu(\sum_{n \in \NN} X_n) & = \inf_{\nu \in D \text{ with }\nu \leq \nu_0} \nu(\sum_{n \in \NN} X_n) \\
      & \leq \inf_{\nu \in D \text{ with }\nu \leq \nu_0} (\sum_{n \leq N} \nu( X_n) + \epsilon)\\  
      & =  \sum_{n \leq N} \mu( X_n)  + \epsilon \leq \sum_{n \in \NN} \mu(X_n) + \epsilon
    \end{align*}
    Since it is true for every $\epsilon$, it means that $\mu(\sum_{n \in \NN} X_n) \leq \sum_{n \in \NN} \mu(X_n)$. Since we have already show the other inequality, it holds that $\mu$ is a measure. Looking at the definition of $\mu$, it is immediate that it is finite, and the greatest lower bound of $D$.
  \end{proof}}
In a similar way, we define $\meas X$ as the directed-complete cone of finite measures over $X$, for any measurable space $X$.

In~\cite{pse}, the authors ask for the cones they consider only to be sequentially complete. It is due to the fact they want to add measurability requirements to their cones, and as a rule, sequential completeness interacts better with measurability than directed completeness since measurable sets are closed under \emph{countable} unions, but not \emph{general unions}.
\longv{
We illustrate this point in the example below.
\begin{example}
  Let be $A$ a measurable space, and $\mu$ a finite measure on $A$.
We consider the cone of measurable functions $A \rightarrow \RR_+$. We take $\norm{}f = \int_{A} f d\mu$. Lebesgues Monotone Convergence Theorem shows that this cone is sequentially complete, but it is not directed complete.
\end{example}}
In this work however, we are only interested in cones arising from \emph{probabilistic coherence spaces} in a way we will develop in Section~\ref{sect:pcs}. Since those cones have an underlying \emph{discrete} structure, we will be able to show that they are actually \emph{directed complete}. We will need this information, since we will apply McMillan's results~\cite{mcmillan} obtained in the more general framework of abstract domains with partitions, in which he asks for directed completeness. That's because
directed completness allows to also enforce the existence of \emph{infinum}, as stated in the lemma below, whose proof can be found in the long version.
\begin{lemma}
  If a cone $\coneone$ is:
  \begin{itemize}
    \item \emph{sequentially} complete, then  every non-increasing sequence $(x_n)_{n \in \NN}$ has a greatest lower bound $\inf (x_n)_{n \in \NN}$.
  \item \emph{directed} complete, then for every $D \subseteq \coneone$ directed for the reverse order, $D$ has a greatest lower bound $\inf D$.
    \end{itemize}
\end{lemma}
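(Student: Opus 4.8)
The plan is to reduce the existence of infima to the existence of suprema, which completeness supplies directly, by ``reflecting'' the downward structure across a fixed dominating element. Throughout I use the cancellation law in the following form: whenever $x \order\coneone a$ there is a \emph{unique} element, which I write $a - x$, with $a = x + (a - x)$. The key auxiliary observation is that on the set $\{x : x \order\coneone a\}$ the map $x \mapsto a - x$ \emph{reverses} the order. Indeed, if $x \order\coneone y \order\coneone a$, say $y = x + u$, then $a = x + (a-x) = y + (a-y) = x + u + (a-y)$, and cancelling $x$ gives $a - x = u + (a - y)$, whence $a - y \order\coneone a - x$. I also use the norm axioms: if $x \order\coneone a$ then $\norm\coneone{a - x} \leq \norm\coneone{(a-x)+x} = \norm\coneone a$, and scaling by a real $\lambda > 0$ is an order-isomorphism of $\coneone$ (since $y = x + z$ iff $\lambda y = \lambda x + \lambda z$) that multiplies norms by $\lambda$, hence preserves and reflects least upper bounds.

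For the sequential case, let $(x_n)_{n\in\NN}$ be non-increasing, so that $x_0$ dominates every $x_n$. If $\norm\coneone{x_0} = 0$ then $x_0 = 0$, so every $x_n = 0$ and $\inf(x_n) = 0$; otherwise put $\lambda = \norm\coneone{x_0} > 0$ and define $y_n = (x_0 - x_n)/\lambda$. By the order-reversal remark the sequence $(y_n)$ is non-decreasing, and from $\norm\coneone{x_0 - x_n} \leq \norm\coneone{x_0} = \lambda$ each $y_n$ lies in $\boule\coneone$. Sequential completeness gives $s' = \sup_n y_n \in \boule\coneone$, and since scaling by $\lambda$ is an order-isomorphism, $s = \lambda s' = \sup_n (x_0 - x_n)$ exists; moreover $s \order\coneone x_0$, because $x_0$ is an upper bound of all the $x_0 - x_n$. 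I then claim $\inf_n x_n = x_0 - s$. That $x_0 - s$ is a lower bound follows from one cancellation: writing $s = (x_0 - x_n) + w_n$ and combining $x_0 = x_n + (x_0 - x_n)$ with $x_0 = s + (x_0 - s)$ yields $x_n = w_n + (x_0 - s)$, so $x_0 - s \order\coneone x_n$. Dually, for any lower bound $t$ one has $t \order\coneone x_0$ and, by order-reversal, $x_0 - x_n \order\coneone x_0 - t$ for all $n$, hence $s \order\coneone x_0 - t$; a final cancellation then gives $t \order\coneone x_0 - s$, so $x_0 - s$ is the \emph{greatest} lower bound.

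For the directed case the same scheme applies after one preliminary reduction. Fix $a \in D$ and set $D_a = \{x \in D : x \order\coneone a\}$. Since $D$ is directed for the reverse order, $D_a$ is cofinal in $D$ downwards, so (using transitivity of $\order\coneone$) $D$ and $D_a$ have exactly the same lower bounds, and it suffices to compute $\inf D_a$; note $D_a$ is itself downward directed and bounded above by $a$. Assuming $\norm\coneone a > 0$ (the zero case being trivial as above), I form $E = \{\, (a - x)/\norm\coneone a : x \in D_a \,\}$, which by order-reversal is directed for $\order\coneone$ and, by the norm bound, contained in $\boule\coneone$. Directed completeness yields $\sup E$, hence $s = \norm\coneone a \cdot \sup E = \sup\{a - x : x \in D_a\}$ exists with $s \order\coneone a$, and the two cancellation arguments above, carried out verbatim with $a$ in place of $x_0$ and $x$ ranging over $D_a$ in place of $x_n$, show $\inf D = \inf D_a = a - s$.

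The routine content is the chain of cancellations; the one genuinely delicate point is matching the reductions to the exact form of completeness assumed in Definition~\ref{def:cc}, which speaks only of monotone sequences or directed subsets \emph{inside $\boule\coneone$}. This forces the rescaling by the norm of the dominating element, and in the directed case it additionally forces the passage from $D$ to $D_a$: the subtraction $a - x$ is defined only when $x \order\coneone a$, whereas an arbitrary element of a downward-directed set need not be comparable to the chosen $a$. Once these two reductions are in place, the reflection argument transports suprema to infima without further difficulty.
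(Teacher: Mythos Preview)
Your proof is correct and follows essentially the same approach as the paper: reflect through a dominating element by forming $E=\{(a-x)/\norm\coneone a : x\in D,\ x\order\coneone a\}$, apply completeness to obtain $\sup E$, and take $a-\norm\coneone a\cdot\sup E$ as the infimum. Your version is considerably more careful than the paper's sketch (which contains a couple of typos), in particular in making explicit the cofinality of $D_a$ and the order-reversal and cancellation steps.
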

\longv{\begin{proof}
We do the proof when $\coneone$ is \emph{directed} complete, but it is exactly the same in the sequentially complete case.
    Let be $D$ a reverse directed set. If all elements of $D$ are zero, then $\inf D = 0$. Otherwise, let be $x>0 \in D$. We consider the subset $E = \{\frac {x - y}{\norm{\coneone} x} \mid y \leq x \wedge y \in D \}$. It is easy to see it is a directed subset of $\boule \coneone$, which means that, since $\coneone$ is directed complete, it has a supremum. So we can take $z = \norm {\coneone}x \cdot{(x - \sup E)}$, and we show that it is the least upper bound of $D$.
    \end{proof}
}
It is shown in~\cite{pse} that the addition and multiplication by a scalar are Scott-continuous in complete cones, in a sequential sense. In directed complete cones, it holds also in a \emph{directed sense}.
\begin{lemma}
  The addition $+ : \coneone \times \coneone \rightarrow \coneone$ and the scalar multiplication $\cdot : \Rp{} \times \coneone \rightarrow \coneone$ are Scott-continuous:
  \begin{itemize}
  \item for any directed subsets $D$ and $E$ of $\coneone$, and $K$ of $\Rp{}$:
    \begin{align*}
      &\sup{\{x + y \mid x \in D, y \in E\}} = \sup D + \sup E;\\
      \text{and }\quad & \sup\{\lambda \cdot x \mid \lambda \in K, \, x \in E  \} = \sup K \cdot \sup E.
      \end{align*}
  \item for any reverse directed subsets $D$, $E$ of $\coneone$, and $K$ of $\Rp{}$:
\begin{align*}
  &\inf{\{x + y \mid x \in D, y \in E\}} = \inf D + \inf E; \\
  \text{and }\quad &\inf\{\lambda \cdot x \mid \lambda \in K, \, x \in E  \} = \inf K \cdot \inf E.
  \end{align*}
    \end{itemize}
\end{lemma}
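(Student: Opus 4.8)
The plan is to reduce the two joint (two‑variable) statements to one‑variable continuity, i.e. to continuity of the maps $y \mapsto x + y$, $x \mapsto x + y$, $\lambda \mapsto \lambda \cdot x$ and $x \mapsto \lambda \cdot x$ with the other argument held fixed. Both $+$ and $\cdot$ are \emph{monotone} in each argument: from Definition~\ref{def:coneorder}, if $x \order\coneone x'$ and $y \order\coneone y'$, writing $x' = x + a$ and $y' = y + b$ gives $x' + y' = (x+y) + (a+b)$, so $x + y \order\coneone x' + y'$, and monotonicity of $\cdot$ follows the same way using distributivity of scalar multiplication over $+$ and over addition of scalars. For a monotone operation the image of a product of two directed (resp. reverse directed) sets is again directed (resp. reverse directed), and one checks by the usual argument that its supremum (resp. infimum) equals the iterated one; the existence of every sup and inf I manipulate is guaranteed by (directed) completeness together with the triangle inequality for the norm, which bounds $\{x+y\}$ and $\{\lambda\cdot x\}$ uniformly. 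Hence it suffices to prove the one‑variable statements and then combine them, e.g. $\sup\{x+y\} = \sup_{x \in D}(x + \sup E) = \sup D + \sup E$.

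The engine for the additive one‑variable statements is the \emph{reflection property}: if $x + a \order\coneone x + b$ then $a \order\coneone b$, which is immediate from the cancellation axiom. With $x$ fixed, $x + \sup E$ clearly upper bounds $\{x + y : y \in E\}$; conversely, if $u$ is any upper bound then $x \order\coneone x+y \order\coneone u$, so the (unique) element $u - x$ with $u = x + (u-x)$ exists, and $x + y \order\coneone x + (u-x)$ yields $y \order\coneone u - x$ by reflection, whence $\sup E \order\coneone u - x$ and $x + \sup E \order\coneone u$. The infimum case (for reverse directed $E$, whose infimum exists by the preceding lemma) is dual: $\{x+y\}$ is reverse directed by monotonicity, its infimum $g$ satisfies $x \order\coneone g$, and writing $g = x + h$ the reflection property turns $x + h \order\coneone x + y$ into $h \order\coneone y$, so $h$ lower bounds $E$ and $g = x + h \order\coneone x + \inf E$, the reverse inequality being clear. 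Iterating over the two arguments settles the additive part.

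For scalar multiplication the \emph{cone} variable is again handled order‑theoretically: with $\lambda > 0$ fixed, multiplying an inequality by $\lambda^{-1}$ (monotone and inverse to multiplication by $\lambda$) shows at once that $\lambda \cdot \sup E = \sup\{\lambda\cdot x\}$, and dually for $\inf$, the case $\lambda = 0$ being trivial. The delicate point is the \emph{scalar} variable with $x$ fixed, since $\coneone$ need not be first countable and one cannot divide by a varying scalar; here I would exploit that $\Rp{}$ is totally ordered. A bounded directed $K$ admits a nondecreasing sequence $\lambda_n \uparrow \sup K$ cofinal in it, so the sequential Scott‑continuity of $\cdot$ from~\cite{pse} gives $\sup\{\lambda\cdot x\} = \sup_n \lambda_n \cdot x = (\sup K)\cdot x$. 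For the infimum I would take $\lambda_n \downarrow \inf K =: s$ cofinal and write $\lambda_n \cdot x = s\cdot x + (\lambda_n - s)\cdot x$; by the additive infimum statement already proved, $\inf_n \lambda_n \cdot x = s\cdot x + \inf_n (\lambda_n - s)\cdot x$, and since $\norm\coneone{(\lambda_n - s)\cdot x} = (\lambda_n - s)\,\norm\coneone{x} \to 0$ the norm axiom $\norm\coneone{z}=0 \Rightarrow z = 0$ forces $\inf_n (\lambda_n - s)\cdot x = 0$, whence $\inf\{\lambda\cdot x\} = s\cdot x$. Combining the two one‑variable results as before yields the scalar‑multiplication part. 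The main obstacle throughout is precisely this scalar‑variable continuity, together with the bookkeeping needed to ensure that every supremum and infimum manipulated really exists; once the reflection property and the cofinal‑sequence reduction are in place, the rest is routine.
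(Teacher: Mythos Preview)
The paper does not actually supply a proof of this lemma: it cites \cite{pse} for the sequential case and simply asserts that the directed version follows, with no argument given. Your proposal is therefore not competing with a paper proof but filling in a genuine gap, and the argument you outline is correct. The reduction to one-variable continuity via monotonicity, the use of cancellation (your ``reflection property'') for the additive cases, and the order-isomorphism trick for $x\mapsto\lambda x$ with $\lambda>0$ are all standard and sound; your handling of the scalar variable---extracting a monotone sequence from a directed $K\subseteq\Rp{}$ and invoking the sequential result of \cite{pse}, then for the infimum bounding $\norm\coneone{\inf_n(\lambda_n-s)x}\leq(\lambda_n-s)\norm\coneone{x}\to 0$ via the axiom $\norm\coneone{a}\leq\norm\coneone{a+b}$---is the right idea and works.

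Two small points worth tightening. First, directed completeness in Definition~\ref{def:cc} is stated only for directed subsets of $\boule\coneone$, so the sups and infs you manipulate exist only after a rescaling argument; you gesture at this with the triangle inequality, but the lemma as stated does not assume $D,E,K$ bounded, and strictly speaking one should either add that hypothesis or note explicitly that the equalities are asserted whenever the right-hand side makes sense. Second, when you build the cofinal sequence $\lambda_n\to\sup K$ (resp.\ $\lambda_n\to\inf K$) you should say a word about why it can be taken monotone inside $K$: this uses that $K$ is (reverse) directed, so one can inductively pick $\lambda_{n+1}\in K$ dominating (resp.\ dominated by) both $\lambda_n$ and an element within $1/(n+1)$ of the extremum. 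With those two clarifications the proof is complete.
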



\subsection{Pre-Stable Functions between Cones}\label{subsect:cstab}

As said before, the notion of pre-stable function is a generalization of the notion of \emph{absolutely monotonic} real functions. More precisely, the idea is to define so-called \emph{higher-order differences}, and to specify that they must be all non-negative.

First, we want to be able to talk about those $\vec u = (u_1, \ldots ,u_n)$, such that $\norm{\coneone}{x + \sum u_i} \leq 1$ for a fixed $x \in \boule \coneone$, and $n \in \NN$. To that end, we introduce a cone $\coneone_x^n$ whose unit ball is exactly such elements. It is an adaptation of the definition given in~\cite{pse} for the case where $n=1$, and we show in the same way that it is indeed a cone.

\begin{definition}[Local Cone]
  Let be $\coneone$ a cone, $n \in \NN$, and $x \in \bouleopen \coneone$. We call \emph{$n$-local cone at $x$}, and we denote $\coneone_x^n$ the cone $\coneone^n$ endowed with the following norm:
  $$\norm {\coneone_x^n} {(u_1, \ldots, u_n)} =  \inf {\{ \frac 1 r \mid  x + r \cdot \sum_{1 \leq i \leq n} u_i \in \boule \coneone \wedge r>0 \}}.$$
  \end{definition}
We can show that whenever $\coneone$ is a directed-complete cone, $\coneone_x^n$ is also directed-complete.
\remarque{est ce qu'on en a besoin ?}

For $n\in \NN$, we use $\parteps{+} {n}$ (respectively $\parteps{-}n$) for the set of all subsets $I$ of $\{1,\ldots, n\}$ such that $n - {\card I}$ is even (respectively odd).

We are now ready to introduce \emph{higher-order differences}. Since we have only explicit addition, not subtraction, we define separately the positive part $\diff + n$ and the negative part $\diff - n$ of those differences:
For $f: \boule \coneone \rightarrow \conetwo$, $x \in \boule \coneone$, $\vec u \in \boule \coneone_x^n$, and $\epsilon \in \{-,+\}$, we define:
  \begin{align*}
\diff \epsilon n (f)(x \mid \vec u) &= \sum_{I \in \parteps \epsilon n} f(x + \sum_{i \in I } u_i)
  \end{align*}

\begin{definition}
  We say that $f: \boule \coneone \rightarrow \conetwo$ is \emph{pre-stable} if, for every $n \in \NN$, for every   $x \in \boule \coneone$, $\vec u \in \boule \coneone_x^n$, it holds that:
  $$\diff - n (f)(x \mid \vec u) \leq \diff + n (f)(x \mid \vec u).$$
  \end{definition}
If $f$ is pre-stable, we will set $\diff{}n f (x \mid \vec u) = \diff + n f (x \mid \vec u) - \diff - n f (x \mid \vec u)$. Observe that the quantity $\diff{}n f(x \mid \vec u)$ is actually symmetric in $\vec u$, i.e. stable under permutations of the coordinates of $\vec u$.
\begin{definition}
A function $f: \boule \coneone \rightarrow \conetwo$ is called a \emph{stable function from $\coneone$ to $\conetwo$} if it is pre-stable, sequentially Scott-continuous, and moreover there exists $\lambda \in \RR_+$ such that $f(\boule \coneone) \subseteq \lambda \cdot \boule \conetwo$.
  \end{definition}

\begin{definition}
$\cstab$ is the category whose objects are sequentially complete cones, and morphisms from $\coneone$ to $\conetwo$ are the stable functions $f$ from $\coneone$ to $\conetwo$ such that $f(\boule \coneone) \subseteq \boule \conetwo$.
  \end{definition}
It was shown in~\cite{pse} that it is possible to endow $\cstab$ with a cartesian closed structure.
The product cone is defined as $\prod_{i \in I} \coneone_i = \{(x_i)_{i \in I} \mid \forall i \in I, x_i \in \coneone_i \}  $, and $\norm{\prod_{i \in I}\coneone_i}{x} = \sup_{i \in I} \norm {\coneone_i}{x_i}$.
The function cone $\coneone \Rightarrow \conetwo$ is the set of all stable functions, with $\norm {\coneone \Rightarrow \conetwo} f = \sup_{x \in \boule \coneone} \norm \conetwo {f(x)}$. It was also shown in~\cite{pse} that these cones are indeed sequentially complete, and that the lub in $\coneone \Rightarrow \conetwo$ is computed pointwise. We will use also the cone of pre-stable functions from $\coneone$ to $\conetwo$, which is also sequentially complete.
\hide{
\begin{figure}[!h]
\begin{center}
  \fbox{
  \begin{minipage}{0.45 \textwidth}
    \footnotesize
    \begin{center}
      ${\prod_{i \in I} \coneone_i = \{(x_i)_{i \in I} \mid \forall i \in I, x_i \in \coneone_i \} \quad \norm{\prod_{i \in I}\coneone_i}{x} = \sup_{i \in I} \norm {\coneone_i}{x_i}} $
       ${\coneone \Rightarrow \conetwo = \{ f: \boule \coneone \rightarrow \conetwo \text{ stable functions } \} \quad \norm {\coneone \Rightarrow \conetwo} f = \sup_{x \in \boule \coneone} \norm \conetwo {f(x)}}$
    \end{center}
  \end{minipage}
}
\end{center}
\caption{Cartesian Structure of $\cstab$.}
  \end{figure}
}
\subsection{A generalization of Bernstein's theorem for pre-stable functions}\label{subsect:bernstein}

We are now going to show an analogue of Bernstein's Theorem for pre-stable functions on directed-complete cones.
The idea is to first define an analogue of derivatives for pre-stable functions, and to show that pre-stable functions can be written as the infinite sum generated by an analogue of Taylor expansion on $\bouleopen \coneone$. This result is actually an application of McMillan's work~\cite{mcmillan} in the setting of abstract domains. Here, we give the main steps of the construction directly on cones, and highlight some properties of the Taylor series which are true for cones, but not in the general framework McMillan considered.

\subsubsection{Derivatives of a pre-stable function}
We are now going, following McMillan~\cite{mcmillan}, to construct derivatives for pre-stable functions on directed complete cones. This construction is based on the use of a notion of \emph{partition}: a \emph{partition} of $x \in \boule \coneone$ is a multiset $\pi = [u_1, \ldots,u_n] \in \mfin{\coneone}$ such that $x = \sum_{1 \leq i \leq n} y_i $. We write $\partit {\pi} x$ when the multiset $\pi$ is a partition of $x$. We will denote by $+$ the usual union on multiset: $[y_1, \ldots, y_n] + [z_1, \ldots,z_m] = [y_1, \ldots, y_n,z_1, \ldots,z_m]$. We call $\parts x$ the set of partitions of $x$.

\begin{definition}[Refinement Preorder]
  If $\pi_1$, $\pi_2$ are in $\parts x$, we says that $\pi_1 \leq \pi_2$ if $\pi_1 = [u_1, \ldots, u_n]$, and $\pi_2 = \alpha_1 + \ldots + \alpha_n$ with each of the $\alpha_i$ a partition of $u_i$.
\end{definition}

Observe that when $\pi_1$ and $\pi_2$ are partition of $x$, $\pi_2 \leq \pi_1$ means that $\pi_1$ is a more \emph{finely grained} decomposition of $x$. 
If $\vec u$ is an $n$-tuple in $\boule \coneone$, we extend the refinement order to $ \parts {\vec u} = \parts{u_1} \times \ldots \times \parts{u_n}$.
\begin{lemma}\label{lemma:refinmentdirected}
  Let be $\coneone$ a lattice cone.
  Then for every $x \in \coneone$, $\parts x$ is a directed set.
\end{lemma}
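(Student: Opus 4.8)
The plan is to show that $\parts x$ is nonempty and that any two of its elements have a common upper bound for the refinement preorder. Nonemptiness is immediate, since the trivial multiset $[x]$ is a partition of $x$. For directedness, fix two partitions $\pi_1 = [u_1, \ldots, u_n]$ and $\pi_2 = [v_1, \ldots, v_m]$ of $x$, so that $x = \sum_i u_i = \sum_j v_j$. I will construct a \emph{common refinement} $\pi_3$ with $\pi_1 \leq \pi_3$ and $\pi_2 \leq \pi_3$ by producing a ``decomposition matrix'': a family $(w_{ij})_{1 \leq i \leq n,\, 1 \leq j \leq m}$ of elements of $\coneone$ such that $u_i = \sum_j w_{ij}$ for every $i$ and $v_j = \sum_i w_{ij}$ for every $j$. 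Intuitively $w_{ij}$ is the part of $x$ that $u_i$ and $v_j$ have in common.

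The existence of such a matrix rests entirely on the decomposition property established in Lemma~\ref{lemma:lattice}, which is where the lattice hypothesis on $\coneone$ is used. First I would record a many-summand version of it: if $u \leq v_1 + \cdots + v_m$, then there exist $w_1, \ldots, w_m$ with $w_j \leq v_j$ and $u = \sum_j w_j$. This follows by induction on $m$, splitting off $v_m$ via the two-summand decomposition property and recursing on the remainder; here the differences $v_j - w_j$ are well defined thanks to the cancellation axiom of cones. I would then build the full matrix by induction on $n$. Since $u_1 \leq x = \sum_j v_j$, the many-summand decomposition gives $w_{1j} \leq v_j$ with $u_1 = \sum_j w_{1j}$. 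Setting $v_j' = v_j - w_{1j}$, one has $\sum_j v_j' = x - u_1 = \sum_{i \geq 2} u_i$, so the induction hypothesis applied to $[u_2, \ldots, u_n]$ and $[v_1', \ldots, v_m']$ yields the remaining rows $(w_{ij})_{i \geq 2}$; summing each column back over $i$ recovers $v_j = w_{1j} + \sum_{i \geq 2} w_{ij}$.

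Given the matrix, define $\pi_3 = [\, w_{ij} \mid 1 \leq i \leq n,\, 1 \leq j \leq m \,] \in \mfin{\coneone}$; since $\sum_{i,j} w_{ij} = \sum_i u_i = x$, this is a partition of $x$. Reading $\pi_3$ by rows, it decomposes as $\pi_3 = \alpha_1 + \cdots + \alpha_n$ with $\alpha_i = [w_{i1}, \ldots, w_{im}]$ a partition of $u_i$, which is exactly the condition $\pi_1 \leq \pi_3$. Reading it by columns, $\pi_3 = \beta_1 + \cdots + \beta_m$ with $\beta_j = [w_{1j}, \ldots, w_{nj}]$ a partition of $v_j$, giving $\pi_2 \leq \pi_3$. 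Hence $\pi_3$ is the desired common upper bound, and $\parts x$ is directed.

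I expect the main obstacle to be the decomposition-matrix lemma, i.e.\ the double induction above: the two-summand decomposition property from Lemma~\ref{lemma:lattice} must first be bootstrapped to arbitrarily many summands, and then iterated across the rows while keeping track of the running remainders $v_j'$. Everything else\textemdash nonemptiness, and the bookkeeping that turns the matrix into a refinement in each of the two directions\textemdash is routine. It is worth stressing that the meet operation itself is never invoked directly; the lattice assumption enters only through the Riesz-style decomposition property it guarantees.
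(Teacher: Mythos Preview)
Your argument is correct and complete: the many-summand Riesz decomposition follows from the two-summand case by a straightforward induction, and your row-by-row construction of the matrix $(w_{ij})$ then yields a common refinement of $\pi_1$ and $\pi_2$. This is the standard proof that the Riesz decomposition property implies the refinement property for finite partitions.

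The paper proceeds differently. Rather than invoking the decomposition property of Lemma~\ref{lemma:lattice} as a black box, it works directly with meets: it runs an iterative procedure that, at each step, selects some $a\in\theta_1$ and $b\in\theta_2$ with $a\wedge b\neq 0$, adds $a\wedge b$ to the partition under construction, and replaces $a,b$ by the residuals $a-(a\wedge b)$, $b-(a\wedge b)$. Termination is argued by showing that the number of non-orthogonal pairs strictly decreases. Your approach is shorter and conceptually cleaner, and, as you rightly note, it isolates the exact hypothesis needed\textemdash the Riesz decomposition property rather than the full lattice structure. The paper's approach, on the other hand, is more explicit: it produces a concrete description of the refining pieces as iterated meets, and comes with a worked example in $\mathbb{R}_+^2$. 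Both routes are valid; yours is the more economical one.
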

\shortv{The proof of Lemma~\ref{lemma:refinmentdirected} may be found in the long version. }
\longv{\begin{proof}
  We are going to use the following notion: we say that two non-zero elements $x$ and $y$ of $\coneone$ are \emph{orthogonal}, and we note $x \perp y$, if $x \wedge y =0$.
  Let be $\pi_1, \pi_2 \in \parts x$. We first show that it cannot exist $z \in \pi_1$ which is orthogonal to all the element of $\pi_2$. Indeed, suppose that it is the case: we take $y_1, \ldots,y_n$ such that $\pi_2 = [y_1, \ldots,y_n]$. Then by hypothesis, $z \leq x = \sum_{1 \leq i \leq n} y_i$. We can now use the decomposition property from Lemma~\ref{lemma:lattice}. It means that $z = \sum_{1\leq i \leq n} z_i$, with $z_i \leq y_i$. But since for all $i$, $z \perp y_i$, it folds that $z_i=0$ for all $i$, and so $z=0$, and we have a contradiction.

  Now, we are going to present a procedure to construct a partition $\pi$ of $x$ with $\pi \leq \pi_1$, and $\pi \leq \pi_2$. We can suppose that all elements of $\pi_1$ and $\pi_2$ are non-zero. We start form $\pi=[]$, $\theta_1 = \pi_1$, $\theta_2= \pi_2$, and $w = x$, $v=0$. Through the procedure, we guarantee:
  \begin{itemize}
  \item $\theta_1, \theta_2 \in \parts w$, $\pi \in \parts v$, and $w + v = x$;
  \item  all the elements of $\Theta_1$ and $\Theta_2$ are non-zero;
  \item  $\pi + \Theta_1 \leq \pi_1$, and $\pi + \Theta_2 \leq \pi_2$ (for the refinment order).
\end{itemize}
    Then at each step of the procedure, if $\theta_1$ is non empty, we do the following: let $\theta_1 = [a_1, \ldots, a_n]$, and $\theta_2 = [b_1, \ldots ,b_m]$. Then we know that there is a $j$, such that $a_1$ and $b_j$ are not orthogonal. We modify the variables as follows:
  \begin{align*}
    \pi &= \pi + [a_1 \wedge b_j] \\
    v &= v + a_1 \wedge b_j \\
    \theta_1 &= \begin{cases}[a_1 - a_1 \wedge b_j, a_2, \ldots a_n] \text{ if } a_1 \wedge b_j \neq a_1\\
      [a_2, \ldots, a_n] \text{ otherwise.}
      \end{cases}\\
    \theta_2 &= \begin{cases}
      [b_1, \ldots, b_{j-1}, b_j - a_1 \wedge b_j, b_{j+1}, \ldots, b_m] \text{ if }a_1 \wedge b_j \neq b_j\\
      [b_1, \ldots, b_{j-1},  b_{j+1}, \ldots, b_m] \text{ otherwise.}
      \end{cases}\\
  x &= x - a_1 \wedge b_j
  \end{align*}
  At every step of the procedure presented above, the quantity:
  $$\card{(i,j) \mid \text{ not }(a_i\perp b_j)  } $$
  decreases. Indeed:
  \begin{itemize}
  \item or we remove either $a_1$ of $\Theta_1$, or $b_j$ of $\Theta_2$, and then the statement above holds.
  \item or we replace $a_1$ by $(a_1 - a_1 \wedge b_j)$, and $b_j$ by $(b_j - a_1 \wedge b_j)$. Then we see that $(a_1 - a_1 \wedge b_j) \perp (a_1 - a_1 \wedge b_j)$. Moreover, the pairs that were orthoganal before are still orthogonal: indeed for every $z$ with $z \perp a_1$ it holds that $z \perp a_1 - a_1 \wedge b_j$, and the same for $b_j$.  
  \end{itemize}
  As a consequence, the procedure will terminates. It means that we reach a state where $\Theta_1$ is empty, and all the invariants presented above hold. Then we see that $\pi \in \parts x$, and $\pi \leq \pi_1, \pi_2$.

  We are going to illustrate the procedure above on a very basic example. We consider the cone consisting of the positive quadrant of $\RR^2$, endowed by the order defined as: $x \leq y$ if $x_1 \leq y_1$, and $x_2 \leq y_2$. We take two partitions of a vector $x \in \RR^2$: $\pi_1 = [b_1, b_2]$, and $\pi_2 = [a_1, a_2]$, where $a_1, a_2,b_1,b_2$ are taken as pictured in Figure~\ref{fig:firststepprod}. We are going to apply our procedure in order to obtain a refinment of both $\pi_1$ and $\pi_2$. At the beginning, we have $\Theta_1 = \pi_1$, $\Theta_2 = \pi_2$, $w=x$, $v=0$.
  \begin{itemize}
  \item The first step is represented in Figure~\ref{fig:firststepprod}. Observe that the procedure is actually non-deterministic: we may choose any $(a,b)$ with $a \in \pi_1$, $b \in \pi_2$, and $a$ and $b$ not orthonal. Here, we choose to start from $(b_1, a_1)$. We take $v = a_1 \wedge b_1$ (and we represent it by a red vector in Figure~\ref{fig:firststepprod}): it is going to be the first element of our new partition $\pi$. Accordingly, we take $\pi = [v]$. We know update the partition $\Theta_1$ and $\Theta_2$ into partitions of $w = x-v$:  $\Theta_2$ becomes $[a'_1, a_2]$, and  $\Theta_1$ becomes $[b'1,b_2]$ where $a'_1 = a_1 - b_1 \wedge a_1$ and $b'_1 = b_1 - a_1 \wedge b_1$ are represented also in red in Figure~\ref{fig:firststepprod}.
  \item The second step is represented in Figure~\ref{fig:sndstepprod}. Observe that now $a'_1$ and $b'_1$ are orthogonal, so we have to choose another pair. Here, we choose $(b'_1, a_2 )$. As before, we add to $\pi$ the glb of $b'_1$ and $a_2$: we obtain $\pi = [a_1 \wedge b_1, b'_1 \wedge a_2]$. Observe that now (as can be seen on Figure~\ref{fig:sndstepprod}, $b'_1 \leq a_2$, and so $b'_1 \wedge a_2 = b'_1$. So when we update the partition $\Theta_1$ and $\Theta_2$, we take:
    $\Theta_2 = [b_2]$, and $\Theta_1=[a'1,a'_2]$ where $a'_2 = a_2 - b'1$ is represented in purple in Figure~\ref{fig:sndstepprod}.
    \item By doing again two steps of the procedure, we see that the final partition $\pi$ is $[a_1 \wedge b_1, b'_1, a'_1, a'_2]$. We cen see by looking at Figure~\ref{fig:sndstepprod} that it is indeed a refinment of both  $\pi_1$ and $\pi_2$.. 
    \end{itemize}

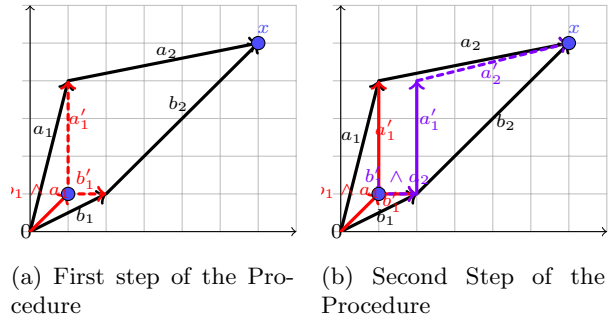
\begin{figure}[!h]
    \centering
    \begin{subfigure}[b]{0.25\textwidth}
        \definecolor{ffqqqq}{rgb}{1.,0.,0.}
\definecolor{ududff}{rgb}{0.30196078431372547,0.30196078431372547,1.}
\definecolor{cqcqcq}{rgb}{0.7529411764705882,0.7529411764705882,0.7529411764705882}
\begin{tikzpicture}[line cap=round,line join=round,x=0.5cm,y=0.5cm]
\draw [color=cqcqcq,, xstep=0.5cm,ystep=0.5cm] (-0.2,0.) grid (7.,6.);
\draw[->,color=black] (0.,0.) -- (7.,0.);
\draw[->,color=black] (0.,0.) -- (0.,6.);

\draw[color=black] (-0.5,0) node[right] {\footnotesize $0$};
\clip(-0.5,0.) rectangle (7.,7.);
\draw [->,line width=1.2pt] (0.,0.) -- (1.,4.);
\draw [->,line width=1.2pt] (1.,4.) -- (6.,5.);
\draw [->,line width=1.2pt] (0.,0.) -- (2.,1.);
\draw [->,line width=1.2pt] (2.,1.) -- (6.,5.);
\draw [->,line width=1.2pt,color=ffqqqq] (0.,0.) -- (1.,1.);
\draw [->,line width=1.2pt,color=ffqqqq,dash pattern=on 2pt off 2pt] (1.,1.) -- (1.,4.);
\draw [->,line width=1.2pt,color=ffqqqq,dash pattern=on 2pt off 2pt] (1.,1.) -- (2.,1.);
\begin{scriptsize}
\draw [fill=ududff] (6.,5.) circle (2.5pt);
\draw[color=ududff] (6.14,5.37) node {$x$};
\draw[color=black] (0.36,2.65) node {$a_1$};
\draw[color=ffqqqq] (1.3,3) node {$a'_1$};
\draw[color=black] (3.6,4.75) node {$a_2$};
\draw[color=black] (1.46,0.43) node {$b_1$};
\draw[color=ffqqqq] (1.46,1.4) node {$b'_1$};
\draw[color=black] (3.9,3.39) node {$b_2$};
\draw [fill=ududff] (1.,1.) circle (2.5pt);
\draw[color=ffqqqq] (0.2,1.1) node {$b_1 \wedge a_1$};
\end{scriptsize}
\end{tikzpicture}
        \caption{First step of the Procedure}
        \label{fig:firststepprod}
    \end{subfigure}
    ~ 
    \begin{subfigure}[b]{0.25\textwidth}
      \input
      \definecolor{xfqqff}{rgb}{0.4980392156862745,0.,1.}
\definecolor{ffqqqq}{rgb}{1.,0.,0.}
\definecolor{ududff}{rgb}{0.30196078431372547,0.30196078431372547,1.}
\definecolor{cqcqcq}{rgb}{0.7529411764705882,0.7529411764705882,0.7529411764705882}
\begin{tikzpicture}[line cap=round,line join=round,x=0.5cm,y=0.5cm]
\draw [color=cqcqcq,, xstep=0.5cm,ystep=0.5cm] (-0.2,0.) grid (7.,6.);
\draw[->,color=black] (0.,0.) -- (7.,0.);
\draw[->,color=black] (0.,0.) -- (0.,6.);
\draw[color=black] (-0.5,0) node[right] {\footnotesize $0$};
\clip(-0.5,0.) rectangle (8.,8.);
\draw [->,line width=1.2pt] (0.,0.) -- (1.,4.);
\draw [->,line width=1.2pt] (1.,4.) -- (6.,5.);
\draw [->,line width=1.2pt] (0.,0.) -- (2.,1.);
\draw [->,line width=1.2pt] (2.,1.) -- (6.,5.);
\draw [->,line width=1.2pt,color=ffqqqq] (0.,0.) -- (1.,1.);
\draw [->,line width=1.2pt,color=ffqqqq] (1.,1.) -- (1.,4.);
\draw [->,line width=1.2pt,color=ffqqqq] (1.,1.) -- (2.,1.);
\draw [->,line width=1.2pt,dash pattern=on 2pt off 2pt,color=xfqqff] (1.,1.) -- (2.,1.);
\draw [->,line width=1.2pt,color=xfqqff] (2.,1.) -- (2.,4.);
\draw [->,line width=1.2pt,dash pattern=on 2pt off 2pt,color=xfqqff] (2.,4.) -- (6.,5.);
\begin{scriptsize}
\draw [fill=ududff] (6.,5.) circle (2.5pt);
\draw[color=ududff] (6.14,5.37) node {$x$};
\draw[color=black] (0.29,2.5) node {$a_1$};
\draw[color=black] (3.43,4.96) node {$a_2$};
\draw[color=black] (1.19,0.38) node {$b_1$};
\draw[color=black] (4.35,2.9) node {$b_2$};
\draw [fill=ududff] (1.,1.) circle (2.5pt);
\draw[color=ffqqqq] (0.2,1.1) node {$b_1 \wedge a_1$};
\draw[color=ffqqqq] (1.19,2.7) node {$a'_1$};
\draw[color=ffqqqq] (1.35,0.8) node {$b'_1$};
\draw[color=xfqqff] (1.5,1.46) node {$b'_1 \wedge a_2$};
\draw[color=xfqqff] (2.35,2.98) node {$a'_1$};
\draw[color=xfqqff] (3.97,4.2) node {$a'_2$};
\end{scriptsize}
\end{tikzpicture}
        \caption{Second Step of the Procedure}
        \label{fig:sndstepprod}
    \end{subfigure}
    \caption{Illustration of the Proof of Lemma\ref{lemma:refinmentdirected}}\label{fig:refinment}
\end{figure}

\end{proof}}
Observe that, as a consequence, the refinement preorder turnsalso $\parts {\vec u}$ into a directed set.
\begin{definition}[from~\cite{mcmillan}]\label{def:dersn1}
  Let $\coneone$ be a lattice cone, $\conetwo$ a cone, and let $f: \boule \coneone \rightarrow \conetwo$ be a pre-stable function. Then for every $x \in \boule \coneone$, and $\vec u = (u_1, \ldots ,u_n)\in \boule \coneone_x^n $,
  we define $\Phi_{x,\vec u}^{f,n} : \parts {\vec{u}} \rightarrow \conetwo$ as:
  $$\Phi_{x,\vec u}^{f,n} (\pi_1, \ldots \pi_n) = \sum_{y_1 \in \pi_1} \ldots \sum_{y_n \in \pi_n} \diff{} n f (x \mid y_1, \ldots, y_n) .$$
\end{definition}

It holds (see~\cite{mcmillan} for more details) that $\Phi_{x, \vec u}^{f,n}$ is a non-increasing function whenever $f$ is pre-stable\longv{ (it is shown in Lemma 3.2 of~\cite{mcmillan} by looking at the definition of higher-order differences)}. Since $\parts {\vec{u}}$ is a directed set, $\Phi_{x,\vec u}^{f,n}$ has a greatest lower bound whenever $\conetwo$ is a directed-complete lattice cone.

\begin{definition}[from~\cite{mcmillan}]\label{def:dersn}
  Let be $\coneone$ a lattice cone, $\conetwo$ a directed-complete lattice cone, and $f: \boule \coneone \rightarrow \conetwo$ a pre-stable function. 
  Let be $\vec u \in \boule \coneone_x^n $. Then the \emph{derivative of $f$ in $x$ at rank $n$ towards the direction $\vec u$} is the function $\ders n f(x \mid \cdot): \boule{\coneone_x^n} \rightarrow \conetwo$ defined as
  $$\ders n f (x \mid \vec u) = \inf_{\vec \pi \in \parts{\vec{u}} }  \Phi_{x,\vec u}^f (\vec \pi).$$
\end{definition}

We are now going to illustrate Definition~\ref{def:dersn} on a basic case where we take $f:\Rp{} \rightarrow \Rp{}$, in order to highlight the link with differentiation in real analysis.
\begin{example}\label{ex:derdiff}
  We take $\coneone$ and $\conetwo$ as the positive real half-line, and $x \in [0,1[$. Let be $h$ such that $x + h \leq 1$. Then:
      $$\ders 1 f (x \mid h) =\inf_{\pi \text{ with } \partit {\pi}{h} } \sum_{y \in \pi} f(x+y) - f(x) $$
      We know already, since $f$ is pre-stable hence absolutely monotone as function on reals, that $f$ is convex, and moreover differentiable (see~\cite{widder}). From there, by considering a particular family of partitions, we can show that $\ders 1 f (x \mid h) = h \cdot f'(x)$\shortv{ (the proof can be found in the long version)}.
      \longv{
        \begin{proof}
        First, let $\pi$ be any partition of $y$. Since $f$ is differentiable and convex, it holds that:
        $$\forall z , \, f(x+z) - f(x) \geq f'(x)\cdot z. $$
        As a consequence, we see that for any partition $\pi$ of $h$, it holds that $ \sum_{y \in \pi} f(x+y) - f(x) \geq f'(x) \cdot h$, and it implies that $\ders 1 f (x \mid h) \geq f'(z)\cdot h$. To show the reverse inequality, it is enough to consider the particular family of partition $\pi_n = [\frac h n, \ldots, \frac h n]$ of $h$: we see that
\begin{align*}
  \sum_{y \in \pi_n} f(x+y) - f(x) & = n \cdot f(x + \frac{h}{n} )- f(x)\\ & = h \cdot \frac{ f(x + \frac{h}{n}) - f(x)}{\frac h n} \rightarrow_{n \rightarrow \infty} h \cdot f'(x).
  \end{align*}
        \end{proof}
        }
\end{example}
\begin{lemma}\label{lemma:aux1}
  Let be $\coneone$ a lattice cone, $\conetwo$ a directed complete cone, $f$ a pre-stable function from $\coneone$ to $\conetwo$. Let be $x \in \bouleopen \coneone$.
  Then $\ders n f(x \mid \cdot)$ is a symmetric function $\boule{({\coneone_x^n})} \rightarrow \conetwo$ such that moreover:
      \begin{itemize}
        \item $0 \leq \ders n f (x \mid \vec u) \leq \diff{}n f (x \mid \vec u)$.
        \item Both $\vec u \mapsto \ders n f (x \mid \vec u )$ and $\vec u \mapsto \diff {} n f(x \mid \vec u) - \ders n f (x \mid \vec u)$ are pre-stable functions from $\coneone_x^n$ to $\conetwo$. 
  \end{itemize}
\end{lemma}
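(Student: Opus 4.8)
The plan is to dispatch symmetry and the two inequalities first and then concentrate on pre-stability, which is the real content. For symmetry, a permutation $\sigma$ of $\{1,\dots,n\}$ induces a bijection $\parts{\vec u}\to\parts{(u_{\sigma(1)},\dots,u_{\sigma(n)})}$, $(\pi_1,\dots,\pi_n)\mapsto(\pi_{\sigma(1)},\dots,\pi_{\sigma(n)})$, under which $\Phi_{x,\vec u}^{f,n}$ is invariant because $\diff{}n f(x\mid\cdot)$ is symmetric (as already noted), so the two infima agree. For the bounds, each summand $\diff{}n f(x\mid y_1,\dots,y_n)$ is $\geq0$ by pre-stability of $f$, hence $\Phi_{x,\vec u}^{f,n}(\vec\pi)\geq0$ for every $\vec\pi$ and therefore $\ders n f(x\mid\vec u)\geq0$; evaluating at the coarsest partition $\vec\pi_0=([u_1],\dots,[u_n])$ gives $\Phi_{x,\vec u}^{f,n}(\vec\pi_0)=\diff{}n f(x\mid\vec u)$, so the infimum is $\leq\diff{}n f(x\mid\vec u)$.

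For pre-stability, each claim reduces to showing that the $m$-th higher-order difference in the cone $\coneone_x^n$, at a base $\vec u$ and in admissible directions $\vec w_1,\dots,\vec w_m$ (with $\vec w_j=(w_{j,1},\dots,w_{j,n})$), is non-negative. The engine is an operator identity for $f$: writing $\delta_u F(y)=F(y+u)-F(y)$ and $\tau_a F(y)=F(y+a)$, one has $\diff{}n f(x\mid u_1,\dots,u_n)=(\delta_{u_1}\cdots\delta_{u_n}f)(x)$, together with $\delta_{a+b}=\delta_a+\tau_a\delta_b$ and $\tau_a\delta_b=\delta_b\tau_a$. Expanding $\diff{}m(\diff{}n f(x\mid\cdot))(\vec u\mid\vec w_1,\dots,\vec w_m)$ by replacing the $i$-th slot with $u_i+\sum_{j\in J}w_{j,i}$ and applying these identities slot by slot collapses every surviving monomial to a genuine higher-order difference $\diff{}k f$ of $f$ at a shifted base point, all of which are $\geq0$; thus $\vec u\mapsto\diff{}n f(x\mid\vec u)$ is itself pre-stable.

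To transfer pre-stability to $\ders n f(x\mid\cdot)$ I would first prove it is additive in each coordinate (multi-additive), the cone analogue of the $n$-th derivative being $n$-linear (compare Example~\ref{ex:derdiff}, where $\ders1 f(x\mid h)=h\cdot f'(x)$). This is where the lattice and directed-complete structure enters: splitting the first slot as $u_1+u_1'$, the refinements of $\parts{u_1+u_1'}$ that separate a partition of $u_1$ from one of $u_1'$ are cofinal by the Decomposition Property of Lemma~\ref{lemma:lattice}, so by Scott-continuity of $+$ for reverse-directed sets the infimum of $\Phi_{x,\vec u}^{f,n}$ splits additively, giving $\ders n f(x\mid u_1+u_1',\dots)=\ders n f(x\mid u_1,\dots)+\ders n f(x\mid u_1',\dots)$. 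Granting multi-additivity and $\ders n f\geq0$, the standard inclusion-exclusion expansion shows $\diff{}m(\ders n f(x\mid\cdot))(\vec u\mid\vec w_1,\dots,\vec w_m)=\sum_\psi\ders n f(x\mid\cdots)$, the sum running over the coordinate-choices $\psi$ that use every direction $\vec w_j$ at least once, hence $\geq0$; so $\ders n f(x\mid\cdot)$ is pre-stable.

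The hardest point is the remainder $R=\diff{}n f(x\mid\cdot)-\ders n f(x\mid\cdot)$, since pre-stability is not preserved by subtraction. Multi-additivity already settles the high-order differences: when $m>n$ one has $\diff{}m(\ders n f(x\mid\cdot))=0$ (no $\psi$ can use all $m>n$ directions), so $\diff{}m R=\diff{}m(\diff{}n f(x\mid\cdot))\geq0$ by the second paragraph. The obstacle is the range $m\leq n$, where the subtracted derivative term does not vanish. Here I would exploit the refinement structure: a one-step refinement of a single slot into $[a,b]$ satisfies $\Phi_{x,\vec u}^{f,n}(\vec\pi_0)-\Phi_{x,\vec u}^{f,n}(\vec\pi)=\diff{}{n+1}f(x\mid a,b,u_2,\dots,u_n)\geq0$ by the additivity-up-to-remainder identity, exhibiting $R$ as a supremum over refinements of non-negative combinations of $(n{+}1)$-st and higher differences of $f$; one then rewrites the signed $m$-th difference of $R$, using the same operator calculus together with monotonicity of $\ders n f$ in its base point and directed-completeness (to exchange the signed difference with the infimum defining $\ders n f$), as a non-negative combination of higher differences of $f$. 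Making this exchange rigorous for $m\leq n$ is the main obstacle of the proof.
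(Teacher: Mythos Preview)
The paper does not actually prove this lemma: its ``proof'' is a one-line citation to Lemma~3.31 of McMillan~\cite{mcmillan}, with the remark that it ``comes almost directly from Definition~\ref{def:dersn}''. So there is no detailed argument in the paper to compare your reconstruction against; what can be said is whether your reconstruction is correct and how it relates to the surrounding logical structure.

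Your treatment of symmetry and of the bounds $0\leq\ders n f(x\mid\vec u)\leq\diff{}n f(x\mid\vec u)$ is correct and routine. Your operator-calculus argument that $\vec u\mapsto\diff{}n f(x\mid\vec u)$ is pre-stable is also correct. For the pre-stability of $\ders n f(x\mid\cdot)$ you take a route that the paper does \emph{not} take here: you first establish multi-additivity of $\ders n f(x\mid\cdot)$ (via cofinality of ``split'' partitions and the Decomposition Property), and then deduce that its higher differences are non-negative sums of values. This is sound, and your computation that $\diff{}m(\ders n f(x\mid\cdot))=0$ for $m>n$ is right. Note, however, that in the paper's presentation multi-additivity is recorded only \emph{after} the present lemma (as Lemma~\ref{lemma:additivity}, again by citation to McMillan); you are therefore reordering the development. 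That is harmless logically, but it means your proof of this lemma is doing strictly more work than the paper acknowledges.

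The genuine gap is exactly where you flag it: pre-stability of $R=\diff{}n f(x\mid\cdot)-\ders n f(x\mid\cdot)$ in the range $m\leq n$. Your idea of writing $R(\vec u)=\sup_{\vec\pi}\bigl[\Phi_{x,\vec u}^{f,n}(\vec\pi_0)-\Phi_{x,\vec u}^{f,n}(\vec\pi)\bigr]$ and recognising each bracket as a non-negative combination of $\diff{}{n+1}f,\diff{}{n+2}f,\dots$ is the right picture, but as you say, the partition $\vec\pi$ depends on $\vec u$, so one cannot naively commute the signed $m$-th difference past the supremum. This is not a mere technicality: without an additional device the argument does not close. One such device, used elsewhere in McMillan and alluded to later in the paper (see the proof sketch of Lemma~\ref{lemma:auxb1}), is the reduction that pre-stability of a function need only be checked on higher differences \emph{at the base point $0$}; combined with your multi-additivity this localises the comparison $\diff{}m(\diff{}n f)\geq\diff{}m(\ders n f)$ to arguments built solely from the directions $\vec w_j$, where a uniform choice of partitions becomes available. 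Absent that lemma or an equivalent, your proposal for the remainder remains a sketch with a real hole at $m\leq n$.
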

\begin{proof}
The proof is given in Lemma 3.31 in~\cite{mcmillan}. It comes almost directly from Definition~\ref{def:dersn}.
      \end{proof}

We have seen in Example~\ref{ex:derdiff} that our so-called derivatives of pre-stable functions play the same role as the differential of a differentiable function, which are actually \emph{linear} operators $df_x^n: \mathbb{R}^n \rightarrow \mathbb{R}$. While the abstract domains considered in~\cite{mcmillan} do not have to be $\RR_+$ semi-modules, so have no notion of linearity, we are able to show in the complete cone case\shortv{ (see the long version for the proof)} that the $\ders n f$ are linear in the sense of Lemma~\ref{lemma:additivity} below.
\begin{lemma}\label{lemma:additivity}
  Let $\coneone$, $\conetwo$ be two  directed complete lattice cones,  $x \in \bouleopen \coneone$.
  \begin{itemize}
  \item Let $f:\boule \coneone \rightarrow \conetwo$ be a pre-stable function. Then $\ders n f(x \mid \cdot) : \boule {(\coneone_x^n)} \rightarrow \conetwo$ is $n$-linear, in the sense that, for each of its arguments, it commutes with the sum and multiplication by a scalar.
  \item For any $\vec u \in  \boule {(\coneone_x^n)}$, the function $f \in \cstab(\coneone, \conetwo) \mapsto \ders n f(x \mid \vec u) \in \conetwo$ is linear and directed Scott-continuous.
  \end{itemize}
     \end{lemma}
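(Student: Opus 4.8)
The plan is to derive everything from the infimum formula $\ders n f(x \mid \vec u) = \inf_{\vec\pi \in \parts{\vec u}} \Phi^{f,n}_{x,\vec u}(\vec\pi)$ of Definition~\ref{def:dersn}, using three tools: directedness of $\parts{\vec u}$ (Lemma~\ref{lemma:refinmentdirected}), the decomposition property of lattice cones (Lemma~\ref{lemma:lattice}), and the Scott-continuity of $+$ and $\cdot$ in both the directed and reverse-directed senses. Since $\ders n f(x \mid \cdot)$ is symmetric (Lemma~\ref{lemma:aux1}), for the first item it suffices to prove additivity and homogeneity in the last coordinate. For additivity I would fix $\pi_1, \dots, \pi_{n-1}$ and let $\pi_n$ range over $\parts{u_n + u_n'}$. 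The key observation is that the refinements of the two-block partition $[u_n, u_n']$, which are exactly the multisets $\alpha + \alpha'$ with $\alpha \in \parts{u_n}$ and $\alpha' \in \parts{u_n'}$, are cofinal in $\parts{u_n + u_n'}$: any $\pi$ admits, by directedness, a common refinement with $[u_n,u_n']$, and the decomposition property guarantees its existence. As $\Phi$ is non-increasing under refinement, restricting the infimum to this cofinal set leaves it unchanged. On such partitions the innermost sum splits as $\Phi(\dots,\alpha+\alpha') = \Phi(\dots,\alpha) + \Phi(\dots,\alpha')$, so additivity reduces to the identity $\inf_{a}(G(a)+H(a)) = \inf_a G(a) + \inf_a H(a)$ for two non-increasing nets on the common directed index $\parts{u_1}\times\cdots\times\parts{u_{n-1}}\times\parts{u_n}\times\parts{u_n'}$; this follows from reverse-directed Scott-continuity of $+$ together with the diagonalisation that a common upper bound $b$ of $a,a'$ satisfies $G(b)+H(b) \leq G(a)+H(a')$.

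For homogeneity I would first treat nonnegative rationals: $\ders n f(x \mid \dots, m u_n) = m\,\ders n f(x \mid \dots, u_n)$ for integers $m$ is immediate from additivity, and dividing yields the claim for all rationals (all intermediate multiples stay in the ball, being dominated by $\lambda u_n$). For an arbitrary real $\lambda \geq 0$ I would use that $\ders n f(x \mid \cdot)$ is pre-stable, hence monotone (Lemma~\ref{lemma:aux1}): for rationals $q < \lambda$ one has $q u_n \leq \lambda u_n$, so $q\,\ders n f(x \mid \dots, u_n) \leq \ders n f(x \mid \dots, \lambda u_n)$, and taking the supremum over $q \uparrow \lambda$ gives $\lambda\,\ders n f(\dots) \leq \ders n f(\dots, \lambda u_n)$ by Scott-continuity of scalar multiplication. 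The reverse inequality follows by applying the same bound with base point $\lambda u_n$ and scalar $1/\lambda$, which keeps every point inside the ball and so avoids needing rationals above $\lambda$.

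For the second item, linearity in $f$ again reduces to the infimum formula: $\diff{}n$ is manifestly $\Rp{}$-linear in $f$ (each of $\diff + n, \diff - n$ being a sum of values of $f$), hence so is $\Phi$, and the same diagonalisation/Scott-continuity argument splits $\inf_{\vec\pi}(\Phi^f+\Phi^g)$ and pulls out nonnegative scalars. Directed Scott-continuity is more delicate. Writing $f = \sup_j f_j$ for a directed family (the supremum computed pointwise in $\coneone \Rightarrow \conetwo$), I would first show $\diff{}n$ is itself directed Scott-continuous in $f$. Each $\diff \pm n(\cdot)$ commutes with the pointwise supremum by Scott-continuity of finite sums; the subtraction hidden in $\diff{}n = \diff + n - \diff - n$ is handled by noting that $f \mapsto \diff{}n f$ is monotone, since $f_j \leq f_{j'}$ makes $f_{j'} - f_j$ again pre-stable, whence $\diff{}n f_{j'} = \diff{}n f_j + \diff{}n(f_{j'} - f_j) \geq \diff{}n f_j$. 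Then with $A = \sup_j \diff + n f_j$, $B = \sup_j \diff - n f_j$, $C = \sup_j \diff{}n f_j$, a cancellation computation $B + C = \sup_j(\diff - n f_j + \diff{}n f_j) = \sup_j \diff + n f_j = A$ yields $C = A - B = \diff{}n f$. Consequently $\Phi^f(\vec\pi) = \sup_j \Phi^{f_j}(\vec\pi)$ for every fixed $\vec\pi$.

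It then remains to commute the infimum over partitions with the supremum over $j$, that is to prove $\inf_{\vec\pi}\sup_j \Phi^{f_j}(\vec\pi) = \sup_j \inf_{\vec\pi}\Phi^{f_j}(\vec\pi)$, and this is the step I expect to be the main obstacle. One inequality is free: monotonicity of $f \mapsto \ders n f(x\mid\vec u)$, inherited from that of $\Phi$ in $f$, gives $\sup_j \ders n f_j(x \mid \vec u) \leq \ders n f(x \mid \vec u)$. The reverse inequality is a genuine minimax exchange, and since $\Phi^{f_j}(\vec\pi)$ is non-increasing in $\vec\pi$ and non-decreasing in $j$ it does \emph{not} follow from Scott-continuity of the cone operations alone; this is exactly the kind of property that holds for cones but not in McMillan's abstract domains. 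My strategy would be to mimic the one-dimensional picture of Example~\ref{ex:derdiff}, where $\ders 1 f(x\mid h) = h f'(x)$ and the exchange reduces to monotone convergence for nonnegative power-series coefficients: the aim is to exhibit, using directed completeness of $\conetwo$ and the uniform squeeze $0 \leq \ders n f_j \leq \diff{}n f_j$ (Lemma~\ref{lemma:aux1}), partitions approximating $\ders n f$ from above whose $\Phi$-values already lie, in the limit, below $\sup_j \ders n f_j$. Making this approximation rigorous in the abstract cone setting — rather than any of the algebraic manipulations above — is the crux of the argument.
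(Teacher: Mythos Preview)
Your argument for $n$-linearity is sound and matches the paper's route (the paper cites McMillan for additivity and uses a squeeze by rationals for scalars; your variant for the upper bound, rescaling by $1/\lambda$ from the point $\lambda u_n$, is a clean alternative to the paper's $\lambda/2$ trick). Linearity in $f$ is also fine.

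The gap is in directed Scott-continuity in $f$. You have correctly isolated the obstacle as the minimax exchange $\inf_{\vec\pi}\sup_j \Phi^{f_j}(\vec\pi) \leq \sup_j \inf_{\vec\pi}\Phi^{f_j}(\vec\pi)$, but your proposed strategy (``exhibit partitions approximating $\ders n f$ from above whose $\Phi$-values already lie below $\sup_j \ders n f_j$'') is not a proof, and I do not see how to make it one in the abstract cone setting. The paper does \emph{not} attack this exchange. Instead it proves a short auxiliary lemma: if $g$ is linear, non-decreasing, and preserves infima of \emph{reverse}-directed sets, then $g$ is directed Scott-continuous. (Proof: for directed $E$ set $F=\{\sup E - x \mid x\in E\}$; then $F$ is reverse-directed with $\inf F=0$, and by linearity $g(\sup E)-\sup g(E)=\inf g(F)=g(\inf F)=g(0)=0$.) Applying this to $g(f)=\ders n f(x\mid\vec u)$ reduces the task to showing $\ders n(\inf E)(x\mid\vec u)=\inf_{f\in E}\ders n f(x\mid\vec u)$ for reverse-directed $E$. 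But now both indices run through infima: since $\diff{}n$ is linear in $f$ and finite sums are reverse-Scott-continuous, $\Phi^{\inf E}_{x,\vec u}(\vec\pi)=\inf_{f\in E}\Phi^{f}_{x,\vec u}(\vec\pi)$, and $\inf_{\vec\pi}\inf_{f}=\inf_{f}\inf_{\vec\pi}$ is automatic. The whole sup/inf clash evaporates once you use linearity to trade directed continuity for reverse-directed inf-preservation; this is the idea your proposal is missing.
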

\longv{\begin{proof}
    We are going to use the following auxiliary lemma:
    \begin{lemma}[from~\cite{mcmillan}]\label{lemma:scdersf}
      Let $\coneone$ and $\conetwo$ be two directed cones, and $f:\coneone \rightarrow \conetwo$ linear and non-decreasing, such that moreover for all subset $F$ of $\coneone$ directed for the reverse order, $f(\inf F) = \inf f(F)$. Then $f$ is directed Scott-continuous.
    \end{lemma}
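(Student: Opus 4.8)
The plan is to prove directly that $f$ preserves suprema of directed sets, i.e.\ that $f(\sup D) = \sup f(D)$ for every directed $D \subseteq \boule \coneone$. One inequality is free: since $f$ is non-decreasing, each $f(x) \order \conetwo f(\sup D)$, so $f(\sup D)$ bounds $f(D)$ and hence $\sup f(D) \order \conetwo f(\sup D)$. All the content lies in the reverse inequality. Writing $s = \sup D$ (which exists in $\boule \coneone$ by directed completeness of $\coneone$), the key idea is to \emph{convert} this supremum-preservation problem into the given infimum-preservation hypothesis by passing to the order-reversing ``complement'' map $x \mapsto s - x$, which is well defined because $x \order \coneone s$ and addition is cancellative.

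First I would introduce the set $F = \{\, s - x \mid x \in D \,\}$. Because $D$ is directed for $\order \coneone$ and $x \mapsto s - x$ reverses the order (if $x \order \coneone x'$ then cancellativity gives $s - x' \order \coneone s - x$), the set $F$ is directed for the reverse order; since $F \subseteq \boule \coneone$, its greatest lower bound $\inf F$ exists by the completeness lemma for reverse-directed sets. The crucial computation is that $\inf F = 0$. Indeed, if $\ell$ is any lower bound of $F$, then $\ell \order \coneone s - x$, i.e.\ $x + \ell \order \coneone s$, for every $x \in D$; thus $s$ bounds $\{\, x + \ell \mid x \in D\,\}$, and by Scott-continuity of addition on directed families this supremum equals $\sup D + \ell = s + \ell$. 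Hence $s + \ell \order \coneone s$, while also $s \order \coneone s + \ell$ since $0 \order \coneone \ell$, so antisymmetry gives $s + \ell = s$ and cancellativity forces $\ell = 0$.

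Next I would apply the hypothesis to the reverse-directed set $F$, obtaining $\inf f(F) = f(\inf F) = f(0) = 0$, where $f(0)=0$ follows from linearity. It remains to translate this back. Linearity applied to $s = x + (s-x)$ yields $f(s-x) = f(s) - f(x)$, so $f(F) = \{\, f(s) - f(x) \mid x \in D\,\}$. Set $t := \sup f(D)$, which exists because $f(D)$ is directed (monotonicity of $f$) and bounded above by $f(s)$, an element of finite norm; note $t \order \conetwo f(s)$, so $f(s)-t$ is defined. For every $x$ we have $f(x) \order \conetwo t \order \conetwo f(s)$, and cancellativity gives $f(s)-t \order \conetwo f(s)-f(x)$, so $f(s)-t$ is a lower bound of $f(F)$ and therefore $f(s)-t \order \conetwo \inf f(F) = 0$. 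Combined with $0 \order \conetwo f(s)-t$ and antisymmetry of $\order \conetwo$, this gives $f(s)-t=0$, hence $f(\sup D) = f(s) = t = \sup f(D)$, as required.

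The step I expect to be the main obstacle is the identity $\inf F = 0$ and its mirror in the final reflection step: both rest essentially on cancellativity of the cone and on Scott-continuity of addition for directed and reverse-directed families, and one must check that every supremum and infimum invoked actually exists. This is guaranteed here since $F \subseteq \boule \coneone$ and both $f(D)$ and $f(F)$ are bounded above by $f(s)$, which has finite norm, so after rescaling they lie in the relevant unit balls. The conceptual heart is simply that the order-reversing bijection $x \mapsto s-x$ exchanges $\sup D$ with the infimum $0$ of $F$, and through linearity exchanges $\sup f(D)$ with $\inf f(F)$; the given infimum-preservation hypothesis then delivers exactly the directed Scott-continuity we want.
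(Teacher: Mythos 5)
Your proposal is correct and follows essentially the same route as the paper's own proof: both pass to the reverse-directed set $F = \{\sup D - x \mid x \in D\}$, establish $\inf F = 0$, apply the infimum-preservation hypothesis together with linearity, and translate back via $f(\sup D) - \sup f(D) = \inf f(F) = 0$. Your write-up merely fills in details the paper asserts without proof (the cancellativity argument for $\inf F = 0$ and the final antisymmetry step), which is a faithful elaboration rather than a different approach.
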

    \begin{proof}
      Let be $E$ a directed subset of $\coneone$.
      We define $F = \{ \sup E - x \mid x \in E \}$. Since $E$ is directed, $F$ is directed for the reverse order, and as a consequence:$\inf f(F) = f(\inf F)$. But we see that $\inf F = 0$. Therefore, since $f$ is linear, $f(\inf F) = 0$. As a consequence (and again by linearity of $f$): $f(\sup E) - \sup f(E) = \inf f(F) = 0$.
    \end{proof}
    We are now going to show Lemma~\ref{lemma:additivity}.
    \begin{itemize}
\item We first show that $\ders n f(x \mid \cdot) : \boule {(\coneone_x^n)} \rightarrow \conetwo$ is $n$-linear.
  The additivity is given by Lemma 3.72 of ~\cite{mcmillan}. The commutation with scalar multiplication  is not proved on this form in ~\cite{mcmillan} because they have a more general notion of a system of partition. We first show that the result holds when $\lambda$ is a rational number. To do that, we use the fact that $\pi = [\frac x n , \ldots, \frac x n]$ is always a partition of $x$. Then, let $\lambda \in \Rp{}$ and $\vec u =(u_1, \ldots,u_n)$ such that both $\vec u$ and $\vec v = (u_1, \ldots, \lambda u_i, \ldots ,u_n)$ are in $\boule {\coneone_x^n}$. Let be $\overline r = (r_m)_{m \in \NN}, \overline q = (q_m)_{m \in \NN}$ two sequences of rational number such that $\overline r$ tends to $\lambda$ by below, and $\overline q$ tends to $\lambda$ by above. We see that:
  $$\ders n f(x \mid \vec v) = 2\cdot \ders n f (x \mid u_1, \ldots, \frac \lambda 2 \cdot u_i, \ldots u_n) .$$
  We take $N$ such that for every $m \geq N$, $q_m \leq 2 \cdot \lambda$: since $\ders n f(x\mid \cdot)$ is non-decreasing, we see that:
  \begin{align*}
  \ders n f &(x \mid u_1, \ldots, \frac {r_m} 2 \cdot u_i, \ldots ,u_n) \\&\leq 
  \ders n f (x \mid u_1, \ldots, \frac \lambda 2 \cdot u_i, \ldots, u_n) \\ &\leq
  \ders n f (x \mid u_1, \ldots, \frac {q_m} 2 \cdot u_i, \ldots ,u_n)
  \end{align*}
  Applying now the linearity for rational numbers, we see that for every $m \geq N$:
  \begin{align*}
  r_m \cdot \ders n f &(x \mid u_1, \ldots, \frac {1} 2 \cdot u_i, \ldots ,u_n) \\&\leq 
  \ders n f (x \mid u_1, \ldots, \frac \lambda 2 \cdot u_i, \ldots, u_n) \\ &\leq
  q_m \cdot \ders n f (x \mid u_1, \ldots, \frac {1} 2 \cdot u_i, \ldots ,u_n)
  \end{align*}
  As a consequence:
   \begin{align*}
     \sup_{m \geq N} r_m \cdot \ders n f &(x \mid u_1, \ldots, \frac {1} 2 \cdot u_i, \ldots ,u_n) \\&
     \leq \ders n f (x \mid u_1, \ldots, \frac \lambda 2 \cdot u_i, \ldots, u_n) \\ &\leq \inf_{m \in N}
  q_m \ders n f (x \mid u_1, \ldots, \frac {1} 2 \cdot u_i, \ldots ,u_n)
   \end{align*}
   and by Scott-continuity of $\cdot$, it tells us that $\ders n f (x \mid u_1, \ldots, \frac \lambda 2 \cdot u_i, \ldots, u_n) = \lambda \ders n f (x \mid u_1, \ldots, \frac {1} 2 \cdot u_i, \ldots ,u_n) $. We can now conclude: recall that $
   \ders n f(x \mid \vec v) = 2\cdot \ders n f (x \mid u_1, \ldots, \frac \lambda 2 \cdot u_i, \ldots u_n) .$  Therefore:
   \begin{align*}
     \ders n f(x \mid \vec v) &=2 \cdot \lambda \cdot \ders n f (x \mid u_1, \ldots, \frac {1} 2 \cdot u_i, \ldots ,u_n)\\
     & = \lambda \cdot \ders n f(x \mid \vec u)  \quad \text{ since }\frac 1 2 \in \mathbb{Q}.
     \end{align*}
\item We show now that $f \in \cstab(\coneone, \conetwo) \mapsto \ders n f(x \mid \vec u) \in \conetwo$ is linear and Scott-continuous. It is immediate that it is linear, since every one of the $f \mapsto \diff {}n f{(x\mid u)}$ is. We are now going to use~\ref{lemma:scdersf} to show the Scott-continuity: it tells us that we have only to check that for every $E \subseteq \coneone \Rightarrow_m \conetwo$ directed for the reverse order, $\ders n f (\inf E \mid \vec u) = \inf \ders n f (E \mid \vec u)$. Observe that:
  \begin{align*}
    \ders n {(\inf E)} (x \mid \vec u) &= \inf_{\vec \pi \in \parts{\vec{u}} } \sum_{y_1 \in \pi_1} \ldots \sum_{y_n \in \pi_n} \diff {}n {(\inf E)} (x \mid y_1, \ldots, y_n) \\
    &=  \inf_{\vec \pi \in \parts{\vec{u}} } \sum_{y_1 \in \pi_1} \ldots \sum_{y_n \in \pi_n} \inf_{f \in E} \{\diff{} n {f} (x \mid y_1, \ldots, y_n)\}\\
    &=  \inf_{\vec \pi \in \parts{\vec{u}} } \inf_{f \in E}\{\sum_{y_1 \in \pi_1} \ldots \sum_{y_n \in \pi_n} \diff{}n {f} (x \mid y_1, \ldots, y_n)\}\\
    & = \inf_{f \in E} \ders n f (x \mid \vec u) \text{ since the infs can be exchanged}.
    \end{align*}
  \end{itemize}
\end{proof}}
\shortv{The proof of Lemma~\ref{lemma:additivity} can be found in the long version.}
The linearity of the derivatives means that for every $x \in \bouleopen \coneone$, we can extend $\ders n f(x \mid \cdot)$ to a function $\coneone_n^x \rightarrow \conetwo$. We will use implicitly this extension in the following, especially in Definition~\ref{def:TSn}. 
\subsubsection{Taylor Series for pre-stable functions}
We have seen above that the $\ders n f$ are a notion of differential for pre-stable functions. Following further this idea, and the work of McMillan~\cite{mcmillan}, we define an analogue to the Taylor expansion.  In all this section $\coneone$ and $\conetwo$ are going to be directed complete lattice cones, and $f: \boule \coneone \rightarrow \conetwo$ a pre-stable function. 
\begin{definition}~\label{def:TSn}
       Let be $x \in \bouleopen \coneone$.
        We call \emph{ Taylor partial sum of $f$ in $x$ at the rank $N$} the function
           $Tf^N(x \mid \cdot): \boule {\coneone_x^1} \rightarrow \conetwo$ defined as:   $$Tf^N(x \mid  y) = f(x) +  \sum_{k=1}^N \frac 1 {k!} \ders k f (x \mid y, \ldots, y) .$$
\end{definition}
The next step consists in establishing that the $T^N f$ are actually a non-increasing bounded sequence in the cone of pre-stable functions from $\coneone$ to $\conetwo$, which will allow us to define the \emph{Taylor series of $f$}, as the supremum of the $T^Nf$ (see the long version for more details on the proof). 

\longv{
To that end, we are first going to establish an alternative characterization of the Taylor series, which is the one used in~\cite{mcmillan}, in the framework of abstract domains.  It consists in substituting each of the $\ders n f(x \mid y, \ldots, y)$ with its expression given by Lemma~\ref{lemma:dersn} below. The validity of Lemma~\ref{def:dersn}, and thus the equivalence of the two definitions, depends on the fact we work with directed-complete cones.

\begin{lemma}[Alternative Caracterisation of Derivatives]\label{lemma:dersn}
  Let $x \in \bouleopen \coneone, y \in \boule {\coneone_x^1}$, and $k \in \NN$. Then it holds that
$\ders k f (x \mid y, \ldots, y) $ is equal to: $$\sup_{\pi = [u_1, \ldots, u_n]  \in \parts y}\sum_{\sigma:[\![ 1, k ]\!] \hookrightarrow [\![ 1, n ]\!] } \ders k f (x \mid u_{\sigma(1)}, \ldots u_{\sigma(n)}) $$ 
\end{lemma}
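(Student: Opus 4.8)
The plan is to reduce the whole statement to the $k$-linearity of the derivative established in Lemma~\ref{lemma:additivity}. First I would fix a partition $\pi = [u_1, \ldots, u_n] \in \parts y$, so that $y = \sum_{i=1}^n u_i$, and expand the diagonal value of the $k$-linear form $\ders k f(x \mid \cdot)$ (using its multilinear extension to $\coneone_x^k$) over all index tuples:
$$\ders k f(x \mid y, \ldots, y) = \sum_{(j_1, \ldots, j_k) \in \{1,\ldots,n\}^k} \ders k f(x \mid u_{j_1}, \ldots, u_{j_k}).$$
The tuples with pairwise distinct coordinates are exactly the images of injections $\sigma : \{1,\ldots,k\} \hookrightarrow \{1,\ldots,n\}$, so their combined contribution is precisely the inner sum $S_\pi := \sum_\sigma \ders k f(x \mid u_{\sigma(1)}, \ldots, u_{\sigma(k)})$ occurring on the right-hand side. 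By Lemma~\ref{lemma:aux1} each derivative term is non-negative (extended to arguments possibly outside the unit ball by scaling, which preserves non-negativity through multilinearity), so the remaining repeated-index terms sum to a non-negative element $R_\pi$ with $\ders k f(x \mid y, \ldots, y) = S_\pi + R_\pi$. This already yields $S_\pi \leq \ders k f(x \mid y, \ldots, y)$ for every $\pi$, and hence $\sup_{\pi} S_\pi \leq \ders k f(x \mid y, \ldots, y)$.

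For the reverse inequality I would test the expansion on the uniform partitions $\pi_N = [\frac y N, \ldots, \frac y N]$ with $N \geq k$ copies. Since every piece equals $\frac y N$, $k$-linearity collapses every term to $\frac{1}{N^k}\ders k f(x \mid y, \ldots, y)$; counting the $\frac{N!}{(N-k)!}$ injective tuples gives $S_{\pi_N} = c_N \cdot \ders k f(x \mid y, \ldots, y)$ with $c_N = \frac{N(N-1)\cdots(N-k+1)}{N^k}$. The scalars $c_N$ increase to $1$, so by Scott-continuity of scalar multiplication (applied with $K = \{c_N\}$ and $E = \{\ders k f(x \mid y,\ldots,y)\}$) the supremum $\sup_N S_{\pi_N}$ equals $\ders k f(x \mid y, \ldots, y)$. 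Combining the two parts: $\ders k f(x \mid y, \ldots, y)$ is an upper bound of $\{S_\pi\}_\pi$ and is the supremum of the subfamily $\{S_{\pi_N}\}_N$, hence it is the least upper bound $\sup_{\pi} S_\pi$, which is exactly the claim.

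The routine parts are the multilinear expansion and the bookkeeping of injective tuples. The main obstacle is justifying the passage to the limit inside the cone: one cannot literally let $N \to \infty$, and must instead argue order-theoretically, checking that the $S_{\pi_N}$ form a non-decreasing (hence directed) family bounded by $\ders k f(x \mid y, \ldots, y)$ and invoking directed-completeness of $\conetwo$ together with Scott-continuity of $\cdot$ to identify their supremum with $1 \cdot \ders k f(x \mid y, \ldots, y)$. A secondary point to handle carefully is that the repeated-index terms are evaluated at tuples that may fall outside $\boule{\coneone_x^k}$, so the non-negativity and linearity of Lemma~\ref{lemma:aux1} and Lemma~\ref{lemma:additivity} must be applied to the multilinear extension rather than to the function restricted to the unit ball.
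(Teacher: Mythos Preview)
Your proposal is correct and follows essentially the same approach as the paper: for the upper bound you expand $\ders k f(x\mid y,\ldots,y)$ by $k$-linearity over all $k$-tuples and observe that the injective ones form a sub-sum, and for the lower bound you specialize to the uniform partitions $\pi_N=[\frac{y}{N},\ldots,\frac{y}{N}]$, obtaining the scalar $\frac{N!}{(N-k)!\,N^k}\to 1$ and concluding via Scott-continuity of scalar multiplication. The paper's proof is organized identically (it writes the two inequalities in the opposite order and is slightly terser about the non-negative remainder $R_\pi$), so there is no substantive difference.
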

\shortv{The proof of Lemma~\ref{lemma:dersn} can be found in the long version.}
\longv{\begin{proof}
    We first introduce the following notation: if $\pi = (u_1, \ldots, u_n)$ is a partition of $x$, and  $\sigma:[\![ 1, k ]\!] \hookrightarrow [\![ 1, n ]\!]$ an injective function, we denote $\sigma(\pi) = (u_{\sigma(1), \ldots, u_\sigma(k)})$.
  We denote by $A = \sup_{\pi  \in \parts y}\sum_{\sigma:[\![ 1, k ]\!] \hookrightarrow [\![ 1, n ]\!] } \ders k f (x \mid {\sigma(\pi)})$.
  We show separately the two inequalities.
  \begin{itemize}
  \item We first show that $A \geq \ders k f (x \mid y, \ldots, y)$.  For every $n \in \NN$, it holds that $\pi = (\frac 1 n \cdot y, \ldots, \frac 1 n \cdot y)$ is a partition of $y$. Therefore for every $n \in \NN$:
\begin{align*}
  A &\geq \sum_{\sigma:[\![ 1, k ]\!] \hookrightarrow [\![ 1, n ]\!] } \ders k f (x \mid \frac y n, \ldots, \frac y n)\\
  &  = \frac{n!}{(n-k)!} \ders k f (x \mid \frac y n, \ldots, \frac y n)\\
  &=  \frac{n!}{(n-k)! \cdot n^k}  \ders k f (x \mid y, \ldots, y)
\end{align*}
The sequence $\frac{n!}{(n-k)! \cdot n^k}$ tends to $1$ when $n$ tends to infinity (see in the long version). By Scott-continuity, it means that $A \geq \ders k f (x \mid y, \ldots, y)$.
\item Let us show now that $A \leq \ders k f (x \mid y, \ldots, y)$. Let be $\pi = (u_1, \ldots, u_n) \in \parts y$. Then:
\begin{align*}
  & \sum_{\sigma:[\![ 1, k ]\!] \hookrightarrow [\![ 1, n ]\!] } \ders k f (x \mid \sigma(\pi)) \\
  & \leq \sum_{i_1 \in \{1, n\}} \ldots \sum_{i_k \in \{1, \ldots, n\}} \ders k f (x \mid u_{i_1}, \ldots u_{i_k}) \\
  & = \ders k f(x \mid y, \ldots, y) \text{ by }n\text{-linearity of }\ders k f(x \mid \cdot)
  \end{align*}  
Since $A = \sup_{\pi  \in \parts y}\sum_{\sigma:[\![ 1, k ]\!] \hookrightarrow [\![ 1, n ]\!] } \ders k f (x \mid {\sigma(\pi)}) $, we see that $A \leq \ders k f(x \mid y, \ldots, y) $, which ends the proof. 
\end{itemize}
\end{proof}
 With this characterization,~\cite{mcmillan} shows that the sequence of functions $(x \in \boule \coneone_y^1 \mapsto Tf^n(x \mid y))$ is bounded  by $(x \in \boule \coneone_y^1 \rightarrow f(x + y))$ in the cone of pre-stable functions from $\coneone_x^1$ to $\conetwo$. } }
\begin{lemma}\label{lemma:tfN}
Let be $y$ is in $\bouleopen \coneone$, and $x$ in $\boule\coneone_y^1$. Then $\forall N \in \NN$, $Tf^N(x \mid y) \leq f(x + y)$, and the function  $(x \in \boule\coneone_y^1 \mapsto f(x + y) - Tf^N(x \mid y))$ is pre-stable.
\end{lemma}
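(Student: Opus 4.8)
The plan is to represent both $Tf^N(x\mid y)$ and $f(x+y)$ through partitions of the increment $y$, which turns the inequality into a termwise comparison and exhibits the remainder as an infimum of manifestly pre-stable functions of the base point. I will use two elementary combinatorial identities about higher-order differences, both proved by inclusion--exclusion together with the cancellation law of cones. First, the \emph{Newton expansion}: for any partition $[u_1,\ldots,u_n]\in\parts y$,
\[
  f\Big(x+\sum_{i=1}^n u_i\Big)=\sum_{I\subseteq\{1,\ldots,n\}}\diff{}{\card I}f(x\mid (u_i)_{i\in I}),
\]
where each summand is $\geq 0$ by pre-stability; this holds because, after expanding every $\diff{}{\card I}f$ by its definition, the coefficient of $f(x+\sum_{j\in J}u_j)$ collapses to the indicator of $J=\{1,\ldots,n\}$. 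Second, the \emph{composition identity}: for fixed directions $\vec w$,
\[
  \diff{}m\big[z\mapsto\diff{}k f(z\mid\vec w)\big](x\mid\vec v)=\diff{}{m+k}f(x\mid\vec v,\vec w),
\]
both sides expanding to the same signed sum of values of $f$. Since $f$ is pre-stable the right-hand side is $\geq 0$, giving the key building block: \textbf{for every fixed $\vec w$, the map $x\mapsto\diff{}k f(x\mid\vec w)$ is pre-stable}. Note this is pre-stability in the \emph{base} point, which is not what Lemma~\ref{lemma:aux1} provides.

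Using Lemma~\ref{lemma:dersn}, the $n$-linearity and symmetry of $\ders k f(x\mid\cdot)$ (Lemma~\ref{lemma:additivity}), and the directedness of $\parts y$ (Lemma~\ref{lemma:refinmentdirected}), I rewrite $\frac1{k!}\ders k f(x\mid y,\ldots,y)=\sup_{\pi\in\parts y}\sum_{\card I=k}\ders k f(x\mid (u_i)_{i\in I})$, the integrand being non-decreasing under refinement. The Scott-continuity of addition on directed families then lets me pull the finite sum over $k$ through the supremum, giving
\[
  Tf^N(x\mid y)=\sup_{\pi\in\parts y}\;\sum_{k=0}^{N}\sum_{\card I=k}\ders k f(x\mid (u_i)_{i\in I}),
\]
with the $k=0$ term equal to $f(x)$. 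For part (a) I fix a partition and bound the inner sum: by $0\leq\ders k f\leq\diff{}k f$ (Lemma~\ref{lemma:aux1}) and by completing it with the missing nonnegative terms of orders $k>N$, it is dominated by $\sum_{k=0}^{n}\sum_{\card I=k}\diff{}k f(x\mid (u_i)_{i\in I})=f(x+y)$, the Newton expansion. Taking the supremum over $\parts y$ yields $Tf^N(x\mid y)\leq f(x+y)$.

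For part (b) I first record that pre-stability is preserved by directed suprema and by reverse-directed infima: the operators $\diff \epsilon m$ commute with such (co)limits by Scott-continuity of addition, and the pointwise inequality $\diff - m\leq\diff + m$ passes to the (co)limit by monotonicity. Since the partition sums above are non-decreasing and bounded by $f(x+y)$, I obtain $f(x+y)-Tf^N(x\mid y)=\inf_{\pi\in\parts y} r_\pi$, where $r_\pi(x)=f(x+y)-\sum_{k=0}^N\sum_{\card I=k}\ders k f(x\mid(u_i)_{i\in I})$ forms a reverse-directed family; it thus suffices to prove each $r_\pi$ pre-stable in $x$. Substituting the Newton expansion for $f(x+y)$ gives
\[
  r_\pi=\sum_{k=0}^N\sum_{\card I=k}\big(\diff{}k f(\cdot\mid(u_i)_{I})-\ders k f(\cdot\mid(u_i)_{I})\big)+\sum_{k=N+1}^{n}\sum_{\card I=k}\diff{}k f(\cdot\mid(u_i)_{I}).
\]
The terms of the second sum are pre-stable by the building block. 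For the first sum I write, with $\Phi=\Phi_{x,\vec w}^{f,k}$ and $\vec\pi_0$ the coarsest partition, $\diff{}k f(\cdot\mid\vec w)-\ders k f(\cdot\mid\vec w)=\sup_{\vec\pi}\big(\Phi(\vec\pi_0)-\Phi(\vec\pi)\big)$, a directed supremum; and I expand each $\Phi(\vec\pi_0)-\Phi(\vec\pi)$, by iterating the splitting rule $\diff{}k f(x\mid a+b,\vec w')=\diff{}k f(x\mid a,\vec w')+\diff{}k f(x\mid b,\vec w')+\diff{}{k+1}f(x\mid a,b,\vec w')$, into a nonnegative sum of higher differences $\diff{}j f(\cdot\mid\cdots)$ with $j>k$. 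Each of these is pre-stable by the building block, hence so is the finite sum, the directed supremum, and finally $r_\pi$ and $\inf_\pi r_\pi$.

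The main obstacle is exactly the gap flagged above: Lemma~\ref{lemma:aux1} only yields pre-stability in the \emph{direction} argument $\vec u$, whereas the statement demands pre-stability in the \emph{base} point. The composition identity is what bridges this gap. The delicate point is that a difference of pre-stable functions such as $\diff{}k f-\ders k f$ need not itself be pre-stable; it is only its expansion into a genuine \emph{sum} of higher-order differences, combined with the closure of pre-stability under directed suprema and reverse-directed infima, that rescues the argument. The remaining care is bookkeeping: tracking which families are directed for the refinement order, and checking that every cone subtraction above is between genuinely comparable elements.
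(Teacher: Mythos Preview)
Your argument is correct and, for the inequality $Tf^N(x\mid y)\leq f(x+y)$, follows essentially the same route as the paper: rewrite the derivatives via Lemma~\ref{lemma:dersn}, pull the finite sum through the directed supremum over $\parts y$ by Scott-continuity, bound $\ders k f\leq\diff{}k f$ via Lemma~\ref{lemma:aux1}, and collapse using the Newton expansion (the paper's identity~\eqref{eq:idf}, written there with injections and a $1/k!$ factor instead of subsets).

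For the pre-stability of $x\mapsto f(x+y)-Tf^N(x\mid y)$, the paper simply defers to McMillan with the one-line hint that ``each inequality can be seen as an inequality in the cone of pre-stable functions.'' You unpack this fully: your composition identity $\diff{}m\bigl[z\mapsto\diff{}k f(z\mid\vec w)\bigr](x\mid\vec v)=\diff{}{m+k}f(x\mid\vec v,\vec w)$ is precisely the mechanism that makes the paper's hint work, since it shows that each building block $x\mapsto\diff{}k f(x\mid\vec w)$ is pre-stable in the base point. Your decomposition of $r_\pi$, the telescoping of $\Phi(\vec\pi_0)-\Phi(\vec\pi)$ into higher-order differences via the splitting rule, and the closure of pre-stability under directed sups and reverse-directed infs together constitute a clean, self-contained proof of what the paper only sketches. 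This is not a different approach so much as a careful execution of the same idea; what you gain is that your write-up stands on its own without the reader having to consult~\cite{mcmillan}.
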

\longv{
\begin{proof}
  Let be $x \in \bouleopen \coneone$ and  $y \in\boule\coneone_x$. We are able to express $f(x+y)$ by using $f(x)$ and finite differences on any partition of $y$: indeed, for every partition $\pi$ of $y$, it holds that:
   \begin{equation}\label{eq:idf}
 f(x + y) = f(x) + \sum_{1 \leq k\leq n} \frac 1 {k!}\sum_{\sigma:[\![ 1, k ]\!] \hookrightarrow [\![ 1, n ]\!] } \diff{} k f (x, \sigma(\pi))
   \end{equation}
   \shortv{
(The proof of~\eqref{eq:idf} is done in~\cite{mcmillan} in a purely combinatorial way, by induction on the cardinal of the partition $\pi$. It can be found in the long version for the case $n=2$.)}
   \longv{
\begin{proof}
  It is an algebraic calculation, done in~\cite{mcmillan}. We give here the proof for $n=2$. Let $\pi = [y_1, y_2]$ a partition of $y$. Then we see that:
\begin{align*}
  f(x) &+ \sum_{1 \leq k\leq n} \frac 1 {k!}\sum_{\sigma:[\![ 1, k ]\!] \hookrightarrow [\![ 1, n ]\!] } \diff{} k f (x, \sigma(\pi)) \\
  &= f(x) + \diff{} 1 f (x,y_1) + \diff{}1 f(x,y_2)  \\
  &  \qquad +\frac 1 2 \cdot (\diff{}2 f(x,y_1,y_2) + \diff{}2 f(x,y_2,y_1) )\\
  & = f(x) + (f(x+y_1) - f(x)) + (f(x+y_2)-f(x)) \\
  &  \qquad + (f(x+y_1+y_2) - f(x+y_1) - f(x+y_2)+f(x)) \\
  & = f(x+y_1+y_2) = f(x+y).
  \end{align*}
  \end{proof}
     }
   Moreover we are also able to express the derivatives of $f$ at $x$ towards the direction $y$ also using the partitions of $y$ (it is the sense of Lemma~\ref{lemma:dersn}). Accordingly:
   \begin{align*}
     &Tf^N(x \mid y) = f(x) +  \sum_{k=1}^N \frac 1 {k!} \ders k f (x \mid y, \ldots, y) \\
     &= f(x) +  \sum_{k=1}^N \frac 1 {k!} \sup_{\pi \in \parts y}\sum_{\sigma:[\![ 1, k ]\!] \hookrightarrow [\![ 1, n ]\!] } \ders k f (x \mid \sigma(\pi))
   \end{align*}
   Using Lemma~\ref{lemma:aux1}, we see that it implies:
   $$Tf^N(x \mid y) \leq f(x) +  \sum_{k=1}^N \frac 1 {k!} \sup_{\pi \in \parts y}\sum_{\sigma:[\![ 1, k ]\!] \hookrightarrow [\![ 1, \#(\pi) ]\!] } \diff{} k f (x \mid \sigma(\pi))$$
   We can now use the Scott continuity of $+$ and $\cdot$, and we obtain: 
     $$Tf^N(x \mid y) \leq \sup_{\pi \in \parts y} f(x) +  \sum_{k=1}^N \frac 1 {k!}\sum_{\sigma:[\![ 1, k ]\!] \hookrightarrow [\![ 1, \#(\pi) ]\!] } \diff{} k f (x \mid \sigma(\pi)) $$
     We can now conclude using~\eqref{eq:idf}:
     $$Tf^N(x\mid y) \leq \sup_{\pi \in \parts y} f(x+y) \leq f(x+y)$$
     
The proof of the pre-stability of the function can be found in~\cite{mcmillan}. It is based on the fact that each one of the above inequality can be seen as an inequality in the cone of pre-stable functions.
   \end{proof}}
Since we have shown that the partial sum of the Taylor series of $f$ was a bounded non-decreasing sequence in the complete cone of pre-stable functions from $\coneone_x^1$ to $\conetwo$, we can now define the \emph{Taylor series of $f$} as its supremum.
\begin{definition}~\label{def:TS}
  We define $Tf(x \mid \cdot): \boule {\coneone_x^1} \rightarrow \conetwo$ \emph{the Taylor series of $f$ in $x$}, and $Rf(x \mid \cdot): \boule {\coneone_x^1} \rightarrow \conetwo$ \emph{the Remainder of $f$ in $x$} as:
  \begin{align*}
    Tf(x \mid y) &= \sup_{N \in \NN}Tf^N(x \mid y)\\
    Rf(x \mid y) &= f(x + y) - Tf(x \mid y).
    \end{align*}
\end{definition}

 \subsubsection{Extended Bernstein's theorem}
 Our goal from here is to show that for any $x \in \bouleopen \coneone$, $Rf(0 \mid x) = 0$. We recall here the main steps of the proof of~\cite{mcmillan}. It is based on two technical lemmas, that analyze more precisely the behavior of the remainder of $f$. \longv{The first one is actually a summary of several technical results shown separately in~\cite{mcmillan}.}\shortv{ The proofs are done in~\cite{mcmillan}, and a sketch can be found in the long version.}

 \begin{lemma}\label{lemma:auxb1}
    Let be $x \in \bouleopen \coneone$.
    Then it holds that both:
    \begin{align*}
      Rf_x : y \in \boule{\coneone_x^1} &\mapsto Rf(x \mid y)  \in \conetwo \\
      \text{and }\quad Rf^y: x \in \boule{\coneone_y^1} &\mapsto Rf(x\mid y)  \in \conetwo 
      \end{align*}
    are pre-stable functions. Moreover $Rf_x(0) = 0$, and
for every $x \in \boule{\coneone_y^1}$, it holds that $T(Rf^y)(0 \mid x) = 0$.
 \end{lemma}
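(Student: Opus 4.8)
The plan is to carry out all four assertions inside the sequentially complete cone of pre-stable functions from $\coneone$ to $\conetwo$, exploiting that in a directed-complete lattice cone both suprema and infima are computed pointwise and that $+$ and $\cdot$ are Scott-continuous. The common mechanism is to write the remainder as an infimum of the \emph{partial remainders} $R^N(x\mid y) := f(x+y) - Tf^N(x\mid y)$. Since $Tf^N(x\mid y)$ is non-decreasing in $N$ with supremum $Tf(x\mid y)$ (Definition~\ref{def:TS}) and $Tf^N(x\mid y)\leq f(x+y)$ (Lemma~\ref{lemma:tfN}), a short cone computation gives $Rf(x\mid y) = \inf_{N} R^N(x\mid y)$, the infimum of a non-increasing sequence. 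Hence, to prove that $Rf_x$ (resp.\ $Rf^y$) is pre-stable it suffices to prove that each $R^N$ is pre-stable in the relevant variable and then invoke closure of the pre-stable cone under infima of non-increasing sequences, which exist and are pointwise by sequential completeness of that cone.

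For $Rf^y$ this is immediate: Lemma~\ref{lemma:tfN} states exactly that $x\mapsto R^N(x\mid y)$ is pre-stable, so $Rf^y = \inf_N R^N(\cdot\mid y)$ is pre-stable. For $Rf_x$ I need the symmetric fact that $y\mapsto R^N(x\mid y)$ is pre-stable in the \emph{direction} variable. I would obtain this by running the argument of Lemma~\ref{lemma:tfN} with the roles of base point and direction exchanged: the case $N=1$ is already Lemma~\ref{lemma:aux1}, since $u\mapsto \diff{}{1} f(x\mid u) - \ders{1}{f}(x\mid u) = R^1(x\mid u)$ is pre-stable there, and the higher ranks follow by combining the pre-stability of $\vec u\mapsto \diff{}{n} f(x\mid\vec u) - \ders{n}{f}(x\mid\vec u)$ from Lemma~\ref{lemma:aux1} across $1\leq n\leq N$, as in McMillan.

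The identity $Rf_x(0) = 0$ is a direct computation: by the $n$-linearity of $\ders{k}{f}(x\mid\cdot)$ (Lemma~\ref{lemma:additivity}), every derivative with a zero argument vanishes, so $\ders{k}{f}(x\mid 0,\ldots,0)=0$ and therefore $Tf^N(x\mid 0) = f(x)$ for every $N$; taking the supremum gives $Tf(x\mid 0)=f(x)$, whence $Rf_x(0) = f(x) - f(x) = 0$.

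The genuinely hard assertion is $T(Rf^y)(0\mid x) = 0$. The strategy is to compute the Taylor series of $Rf^y$ at base point $0$ using that $f\mapsto \ders{k}{f}(0\mid x,\ldots,x)$ is linear and directed Scott-continuous (Lemma~\ref{lemma:additivity}), applied to the decomposition of $Rf^y$ as $\big(x'\mapsto f(x'+y)\big)$ minus $\big(x'\mapsto Tf(x'\mid y)\big)$. Formally this rearranges the resulting double series, indexed by the two expansion ranks, into $T(Rf^y)(0\mid x) = Tf(y\mid x) - Tf(0\mid x+y)$, so the claim reduces to the \emph{coherence} of the Taylor expansion, namely $Tf(y\mid x) = Tf(0\mid x+y)$: expanding at $y$ along $x$ must agree with expanding at $0$ along $x+y$. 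I expect this exchange of iterated limits to be the main obstacle, since in the abstract cone setting it is not a formal manipulation but requires a Fubini-type argument over the directed set of refinements of the relevant partitions (Definition~\ref{def:dersn}, Lemma~\ref{lemma:dersn}), justified by directed completeness together with the Scott-continuity of $+$ and $\cdot$. Once the two iterated infima and suprema are shown to agree, the two sides coincide and the remainder's Taylor series collapses to $0$, following McMillan's treatment.
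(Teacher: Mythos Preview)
Your treatment of the first three assertions is essentially the paper's: write $Rf=\inf_N R^N$ with each $R^N$ pre-stable (in the base variable via Lemma~\ref{lemma:tfN}, in the direction variable via Lemma~\ref{lemma:aux1}) and pass to the infimum in the cone of pre-stable functions; the vanishing $Rf_x(0)=0$ follows from $n$-linearity of $\ders n f(x\mid\cdot)$. This matches the paper.

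The last assertion is where your plan has a genuine gap. Your ``coherence'' identity $Tf(y\mid x)=Tf(0\mid x+y)$ is not an innocent Fubini-type exchange: unwinding it, for each fixed $k$ it asserts that the pre-stable function $h_k\colon y'\mapsto \ders k f(y'\mid x^k)$ equals its own Taylor series at $0$, i.e.\ $h_k(y)=Th_k(0\mid y)$. But that is exactly Bernstein's theorem for $h_k$, and Lemma~\ref{lemma:auxb1} is a lemma \emph{towards} Bernstein (it is invoked in the proof of Proposition~\ref{prop:ebt}). So the route is circular; no amount of partition-refinement bookkeeping avoids this, because the two iterated expansions differ precisely by the remainders $Rh_k(0\mid y)$. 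Note also that the constant term of $T(Rf^y)(0\mid x)$ is $Rf^y(0)=Rf(0\mid y)$, which is already the conclusion of Bernstein, confirming that the decomposition you propose cannot close on its own.

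The paper's (McMillan's) mechanism is different: it computes the derivatives of the remainder \emph{term by term} using the iterated-derivative identity $\ders k\bigl(\ders n f(\cdot\mid \vec u)\bigr)(x_0\mid \vec v)=\ders{n+k}f(x_0\mid \vec u,\vec v)$, and\textemdash crucially\textemdash the version actually used in Proposition~\ref{prop:ebt} is $T(Rf_x)(0\mid y)=0$, i.e.\ the expansion in the \emph{direction} variable, whose constant term $Rf_x(0)$ vanishes by the third assertion. For that version the key step is that for the $n$-homogeneous form $y'\mapsto \frac1{n!}\ders n f(x\mid (y')^n)$ one has $\ders k(\cdot)(0\mid y^k)=\ders k f(x\mid y^k)$ when $k=n$ and $0$ otherwise (a multilinear form of degree $n$ has only its $n$-th derivative nonzero at the origin, by refining partitions). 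Summing over $n$ and using Scott-continuity of $f\mapsto\ders k f$, the derivatives of $Tf(x\mid\cdot)$ at $0$ match those of $y'\mapsto f(x+y')$ exactly, so every derivative of $Rf_x$ at $0$ vanishes. Your decomposition never isolates this polynomial-degree cancellation, which is why it is forced back onto the full coherence statement.
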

 \longv{\begin{proof}
     We give here only sketches of the proofs. The detailed proof can be found in~\cite{mcmillan}.
   \begin{itemize}
   \item For $Rf^y$, it is a consequence of the fact that both $f^y:(x \in \boule \coneone_y^1 \mapsto f(x+y) $ and $Tf^y:x \in \boule \coneone_y^1 \rightarrow Tf(x \mid y)$ are pre-stable functions, with $Tf^y \leq f^y$ in the cone of pre-stable functions, and $Rf^y = f^y - Tf^y$.
  \item The pre-stability of $Rf_x$ is stated in Theorem 4.1. of~\cite{mcmillan}. It is based on a previous technical lemma shown in~\cite{mcmillan}, which says it is sufficient for a function to be pre-stable, to have all its differences \emph{in 0} to be non-negative. Then the idea is to fix $x$, and to consider for every $N \in \NN$, the function $g_N:\, y \in \boule \coneone_x^1 \mapsto f(x+ y) - Tf^N(x \mid y)$. It is then possible to show that for any $n \in \NN$, and $\vec u \in \boule\coneone_y^n$, $\diff{} n {g_N} (0 \mid \vec u) = \diff{} n f(x \mid \vec u) -  \diff{} n (Tf^N_x)(0 \mid \vec u)$, with $Tf^N_x: y \mapsto Tf^N(x \mid y)$. By a computation on the $\diff{} n (Tf^N_x)(0 \mid \vec u)$, we see that the $\diff {} n {g_N}(0 \mid \vec u)$ are non-negative. Then, we conclude using the fact that $Rf_x(y) = \inf_{N \in \NN} Tf^N_x(y)$. 
\item The fact that $Rf_x(0) = 0$ is a direct consequence of the $n$-linearity of the map $\vec u \mapsto \ders n f(x \mid \vec u)$ for $n \geq 1$.
  \item The fact that $T(Rf^y)(0 \mid x) =0$ is shown in~\cite{mcmillan} in Lemma 5.26. It is based on the fact that the Scott-continuity of $f \mapsto \ders n f (x \mid \vec u)$ allows us to show that if we take $g(y) = \ders n f(x\mid \vec u)$, then $\ders k g(x \mid \vec v) = \ders {n+k} f(y_0 \mid \vec u, \vec v)$, and from there to compute the Taylor series of $Rf^y$.
     \end{itemize}
 \end{proof}}
 The second technical lemma gives us a way to decompose $Rf(x \mid y)$ into smaller pieces. It is stated in Theorem 5.3 in~\cite{mcmillan}.
\longv{
 \begin{lemma}\label{lemma:auxb2}
   Let be $x, y$ such that $x + y \in \boule \coneone$. Then $Rf(0 \mid x+y) \leq Rf(y \mid x) + Rf( x \mid y)$, and furthermore $Rf(0 \mid x+y) \geq Rf(x \mid y)$, and $Rf(0\mid x+y) \geq Rf(y \mid x)$, and moreover all the inequality are in the cone of pre-stable functions.
 \end{lemma}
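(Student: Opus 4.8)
The plan is to establish the three claimed inequalities about the remainder $Rf(0 \mid x+y)$ using the additive decomposition of the Taylor expansion with respect to a split direction, exploiting the fact that Taylor series can be computed by refining partitions. The key conceptual tool is the relation between derivatives at nested points: if we expand $f$ first in the direction $x$ starting from $0$, and then expand the resulting function in the direction $y$, we should recover the full expansion of $f$ in the direction $x+y$. This is precisely the kind of ``chain'' structure that Lemma~\ref{lemma:auxb1} hints at, where $T(Rf^y)(0 \mid x)=0$ encodes that iterated differentiation behaves coherently.

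First I would exploit the identity~\eqref{eq:idf} applied to a partition of $x+y$ that respects the decomposition into $x$ and $y$. Concretely, given a partition $\pi_x$ of $x$ and $\pi_y$ of $y$, their union $\pi_x + \pi_y$ is a partition of $x+y$, and summing higher-order differences over injections into this combined partition splits naturally into three groups: injections landing entirely in $\pi_x$, injections landing entirely in $\pi_y$, and mixed injections touching both. The first group reconstructs $f(x)$ plus the Taylor terms for the direction $x$, i.e. contributes $Tf(0 \mid x)$; the second, by the analogous computation but starting from $x$ rather than $0$, contributes the Taylor data for $y$; and the mixed terms carry the cross-derivatives $\ders{n+k} f(0 \mid x,\ldots,x,y,\ldots,y)$. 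By pre-stability every higher-order difference is non-negative, so discarding the mixed (non-negative) contributions yields an inequality in the cone of pre-stable functions. Taking suprema over refinements of $\pi_x$ and $\pi_y$ via the directed-set structure of $\parts{x}$ and $\parts{y}$ (Lemma~\ref{lemma:refinmentdirected}) and using Scott-continuity of $+$, I would obtain $Tf(0 \mid x+y) \geq Tf(0 \mid x) + (\text{Taylor contribution of }y\text{ based at }x)$, which after subtracting from $f(x+y)$ rearranges into the upper bound $Rf(0 \mid x+y) \leq Rf(x \mid y) + Rf(y \mid x)$.

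For the two lower bounds I would use monotonicity in the opposite direction: $Rf(0 \mid x+y) \geq Rf(x \mid y)$ should follow because the full Taylor series $Tf(0\mid x+y)$ at base $0$ overcounts at most the terms appearing in $Tf(x \mid y)$ together with $f(x)$ and the pure-$x$ terms, so that $f(x+y) - Tf(0 \mid x+y) \leq f(x+y) - Tf(x\mid y) = Rf(x\mid y)$ reverses into the claimed inequality once one checks that $Tf(0\mid x+y) \geq f(x) + \sum_k \tfrac1{k!}\ders k f(x\mid y,\ldots,y)$. The symmetric statement with the roles of $x$ and $y$ exchanged is identical by the symmetry of higher-order differences in their direction arguments. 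Throughout, every inequality is first proved pointwise and then promoted to an inequality in the cone of pre-stable functions by observing that each bound holds uniformly as the auxiliary direction varies, using that the cone of pre-stable functions has pointwise-computed order.

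The main obstacle I anticipate is the combinatorial bookkeeping in splitting the sum over injections $\sigma:\llbracket 1,m\rrbracket \hookrightarrow \llbracket 1, \#(\pi_x+\pi_y)\rrbracket$ into the three regimes and matching the multinomial coefficients $\tfrac{1}{m!}$ against the factored coefficients $\tfrac{1}{n!}\tfrac{1}{k!}$ for the pure and cross terms. Getting these factorials to align correctly — so that the pure-$x$ and pure-$y$ sums genuinely reassemble into $Tf(0\mid x)$ and the $y$-expansion based at $x$ rather than some rescaled variant — is exactly where McMillan's Theorem~5.3 does its careful work, and it is the step I would lean on the cited result for rather than redo in full. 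The secondary subtlety is ensuring the suprema over $\parts{x}$ and $\parts{y}$ can be taken jointly and commuted with the finite sums and with subtraction; this is where directed-completeness and the Scott-continuity lemmas are essential, since subtraction is only well-behaved under the order constraints guaranteed by pre-stability.
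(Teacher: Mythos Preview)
Your partition-splitting approach is a different route from the paper's, and it has a genuine gap at the point where you claim the ``pure-$y$'' injections reconstruct the Taylor contribution of $y$ \emph{based at $x$}. They do not: the identity~\eqref{eq:idf} expands $f(x+y)$ as $f(0)$ plus differences $\diff{}{k} f(0 \mid \cdot)$, all \emph{based at $0$}. Splitting the injections into $\pi_x$-only, $\pi_y$-only, and mixed therefore yields, after refinement, terms of the form $\ders{k} f(0 \mid y,\ldots,y)$, not $\ders{k} f(x \mid y,\ldots,y)$. So the quantity you assemble is $Tf(0\mid x) + \big(Tf(0\mid y)-f(0)\big) + (\text{mixed})$, which after subtracting from $f(x+y)$ does not rearrange into anything involving $Rf(x\mid y)$ or $Rf(y\mid x)$. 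This is not a bookkeeping issue with factorials; the obstacle is that your decomposition never produces derivatives at the shifted base points $x$ and $y$ at all.

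Your argument for the lower bounds has a sign confusion: you write ``$f(x+y)-Tf(0\mid x+y)\le f(x+y)-Tf(x\mid y)$'' and then claim this ``reverses into the claimed inequality once one checks $Tf(0\mid x+y)\ge Tf(x\mid y)$'', but that last inequality would give exactly the wrong direction (it yields $Rf(0\mid x+y)\le Rf(x\mid y)$, not $\ge$). The correct direction, $Tf(0\mid x+y)\le Tf(x\mid y)$, is true but requires knowing how $\ders{l} f(x\mid y^{(l)})$ decomposes in terms of $\ders{j+l} f(0\mid x^{(j)},y^{(l)})$ plus a non-negative remainder, i.e.\ again a base-point shift.

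The paper's proof avoids the combinatorics entirely. It introduces $Rf^{+x}:y\mapsto Rf(0\mid x+y)$ and relies on two identities (established in McMillan):
\[
T(Rf^{+x})(0\mid y)=T(Rf^x)(0\mid y),\qquad R(Rf^{+x})(0\mid y)=R(Rf_x)(0\mid y).
\]
These encode precisely the base-point relations you are missing. With them, one writes $Rf_x(y)+Rf^x(y)$, decomposes each summand as its own Taylor series plus remainder, swaps using the two identities, and reassembles into $Rf^{+x}(y)$ plus the non-negative leftovers $T(Rf_x)(0\mid y)+R(Rf^x)(0\mid y)$. That immediately gives all three inequalities in the cone of pre-stable functions. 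The core content of McMillan's Theorem~5.3 is these two identities about iterated Taylor/remainder, not a multinomial coefficient match.
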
}
\longv{ 
   \begin{proof}
     We give here a brief sketch of the proof of the first statement. More details can be found in~\cite{mcmillan}. We introduce the function $Rf^{+ x}: y \in \boule\coneone_x^1 \mapsto Rf(0 \mid x+y)$.
     The proof is based on the fact that it is possible to establish (see~\cite{mcmillan}):
     \begin{align}
       T(Rf^{+ x})(0 \mid y) &= T(Rf^x)(0 \mid y) \label{eq:proofmckey1}\\
       \text{ and }R(Rf^{+ x})(0 \mid y) &=R (Rf_x)(0 \mid y) \label{eq:proofmckey2}
     \end{align}
     As a consequence, we can write:
     \begin{align*}
       &Rf(x \mid y) + Rf(y \mid x) = Rf_x(y) + Rf^x(y)\\
       & = T(Rf_x)(0 \mid y) + R(Rf_x)(0 \mid y) + T(Rf^x)(0 \mid y) + R(R_f^x)(0 \mid y) \\
       & = T(Rf_x)(0 \mid y) + R(Rf^{+ x})(0 \mid y) + T(Rf^{+ x})(0 \mid y) + R(R_f^x)(0 \mid y)\\
       & \quad \text{by~\eqref{eq:proofmckey1} and~\eqref{eq:proofmckey2}}\\
       & = Rf^{+x}(y) + T(Rf_x)(0 \mid y) +  R(R_f^x)(0 \mid y) \\
       & \geq Rf^{+x}(y) = Rf(0 \mid x+y),
     \end{align*}
     and we see that we have also shown that the difference is pre-stable. The other two statement are shown in a similar way.
     \end{proof}
 
 We use Lemma~\ref{lemma:auxb2} to show the a more involved upper bound on $Rf(0 \mid x)$.}
 \begin{lemma}\label{lemma:auxbern1}
   Let be $x \in \boule \coneone$, and $\pi = [x_1, \ldots, x_n]$ a partition of $x$, such that for every $x_i \in \pi$, $x + x_i \in \boule \coneone$. Then $Rf(0 \mid x) \leq \sum_{1 \leq i \leq n} \inf_{\pi_i \mid \partit{\pi_i} x_i} \sum_{z \in \pi_i} Rf(x \mid z).$
 \end{lemma}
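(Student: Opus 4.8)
The plan is to reduce the statement to a bound valid for every common refinement of $\pi$, and to obtain that bound by telescoping a base-point-shifted form of Lemma~\ref{lemma:auxb2}, correcting the base points afterwards with the monotonicity of pre-stable functions. Since the $i$-th summand on the right-hand side depends only on $\pi_i$, since each $\parts{x_i}$ is directed (Lemma~\ref{lemma:refinmentdirected}), and since $\pi_i \mapsto \sum_{z \in \pi_i} Rf(x \mid z)$ is non-increasing along refinement\textemdash because $y \mapsto Rf(x \mid y)$ is pre-stable by Lemma~\ref{lemma:auxb1}, hence non-decreasing and super-additive with value $0$ at $0$\textemdash the directed-Scott-continuity of $+$ shows that the right-hand side is the infimum, over all tuples $(\pi_1,\ldots,\pi_n)$, of $\sum_i \sum_{z \in \pi_i} Rf(x \mid z)$, this family being directed for the reverse order. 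It therefore suffices to prove, for every partition $\Pi = [z_1,\ldots,z_N]$ of $x$ refining $\pi$, that $Rf(0 \mid x) \leq \sum_{z \in \Pi} Rf(x \mid z)$: granting this for all such $\Pi$ makes $Rf(0 \mid x)$ a lower bound of that reverse-directed family, hence below its infimum. Each piece $z$ of $\Pi$ satisfies $z \order\coneone x_i$ for some $i$, so $x + z \order\coneone x + x_i \in \boule\coneone$ and each $Rf(x \mid z)$ is well-defined.

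To prove this refined bound I would first establish a telescoped decomposition with base points just below $x$:
\[ Rf(0 \mid x) \leq \sum_{k=1}^{N} Rf(x - z_k \mid z_k), \]
where $x - z_k = \sum_{j \neq k} z_j$ (which exists by cancellation, since $\Pi$ partitions $x$). This I would prove by induction on $N$ using a base-point-shifted form of Lemma~\ref{lemma:auxb2}, namely $Rf(c \mid a + b) \leq Rf(c + a \mid b) + Rf(c + b \mid a)$. Peeling off $z_1$ yields the term $Rf(x - z_1 \mid z_1)$ together with $Rf(z_1 \mid z_2 + \cdots + z_N)$; applying the induction hypothesis to the latter, with the base point shifted to $z_1$, produces exactly the remaining terms $Rf(x - z_k \mid z_k)$ for $k \geq 2$, since $z_1 + \sum_{2 \leq j \neq k} z_j = x - z_k$.

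It then remains to raise each base point from $x - z_k$ to $x$. For a fixed direction $z_k$, the map $a \mapsto Rf(a \mid z_k)$ is the function $Rf^{z_k}$, which is pre-stable by Lemma~\ref{lemma:auxb1}; and a pre-stable function is non-decreasing for $\order\coneone$, its first-order difference $Rf(a + w \mid z_k) - Rf(a \mid z_k)$ being non-negative. As $x = (x - z_k) + z_k$ gives $x - z_k \order\coneone x$, monotonicity yields $Rf(x - z_k \mid z_k) \leq Rf(x \mid z_k)$. Summing over $k$ and combining with the telescoped decomposition gives $Rf(0 \mid x) \leq \sum_{z \in \Pi} Rf(x \mid z)$, which closes the reduction.

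I expect the main obstacle to be the shifted form of Lemma~\ref{lemma:auxb2}. I would derive it by applying Lemma~\ref{lemma:auxb2} to the translated map $g = f(c + \cdot) : \boule{\coneone_c^1} \to \conetwo$, after verifying that $g$ is pre-stable\textemdash its higher-order differences at $y$ are those of $f$ at $c + y$, hence non-negative\textemdash and that the derivative construction is translation-covariant, so that $\ders k g(a \mid \cdot) = \ders k f(c + a \mid \cdot)$ and therefore $Rg(a \mid b) = Rf(c + a \mid b)$. The delicate points are that $\coneone_c^1$ is again a directed-complete lattice cone, so that Lemma~\ref{lemma:auxb2} applies to $g$, and that the two derivative computations use the same partition systems; both rest on the fact that $\coneone_c^1$ shares the order, suprema and infima of $\coneone$ and differs from it only through its norm, which does not enter Definition~\ref{def:dersn}.
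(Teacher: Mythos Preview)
Your proposal is correct and follows essentially the same strategy as the paper: iterate the (base-point-shifted) inequality of Lemma~\ref{lemma:auxb2} over the pieces of a refining partition to obtain $Rf(0\mid x)\leq\sum_k Rf(x-z_k\mid z_k)$, then raise each base point to $x$ using the pre-stability of $Rf^{z_k}$ from Lemma~\ref{lemma:auxb1}, and finally pass to the infimum. The paper compresses the shifted form into the observation $Rg_x(0\mid y)=Rf(x\mid y)$ for $g_x=f(x+\cdot)$ and leaves the reduction to the infimum implicit, whereas you spell out the induction, the translation covariance of the derivative construction, and the directedness argument for commuting $\inf$ with the finite sum; these are exactly the points the paper elides.
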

\longv{ \begin{proof}
   For every $x \in \boule\coneone$, we denote $g_x: y \in \boule \coneone_x^1 \mapsto f(x+y)$. From the definitions of the $\ders n {}{}$, we see that it holds that $Rg_x(0 \mid y) = Rf(x \mid y)$.
   
         Let be $\pi_1, \ldots, \pi_n$ such that $\pi_i$ is a
         partition of $x_i$ over $J$. Then $\pi_1 + \ldots +\pi_n$ is a partition of $x$.
Lemma~\ref{lemma:auxb2} applied several times , combined with the fact that  $Rg_x(0 \mid y) = Rf(x \mid y)$, tells us that:
         $$Rf(0 \mid x_0) \leq \sum_{z \in \pi_1 + \ldots +\pi_n} Rf (z' \mid z), $$
         where $z' = \sum_{u \in \pi_1 + \ldots +\pi_n \mid u \neq z } u $.
         Moreover, we know that $Rf^z$ is pre-stable (by lemma~\ref{lemma:auxb1}). Since, for every $z \in \pi_1 + \ldots +\pi_n$, $z' \leq x$ (it is immediate, since $\pi_1 + \ldots +\pi_n$ is a partition of $x$), it folds that $ Rf(z' \mid z)  =Rf^z(z')  \leq Rf^z(x) = Rf (x \mid z) $.
                  As a direct consequence, we see that
         $Rf(0 \mid x_0) \leq \sum_i \sum_{z \in \pi_i} Rf (x \mid z), $
         which leads to the result.
       \end{proof}}
 We are now ready to show the main result of this section.
   \begin{proposition}[Extended Bernstein's Theorem]\label{prop:ebt}
   Let be $\coneone$, $\conetwo$ directed-complete lattice cones, and $f: \boule \coneone \rightarrow \conetwo$ a pre-stable function. 
   Then for every $x \in \bouleopen{{\coneone}}$, it holds that $f(x) = Tf(0 \mid x)$.
   \end{proposition}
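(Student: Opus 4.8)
The plan is to prove the equivalent statement $Rf(0\mid x)=0$ for every $x\in\bouleopen\coneone$, which by Definition~\ref{def:TS} is exactly $f(x)=Tf(0\mid x)$. One inequality comes for free: since $Tf(0\mid x)=\sup_N Tf^N(0\mid x)$ and Lemma~\ref{lemma:tfN} (with base point $0$) gives $Tf^N(0\mid x)\le f(0+x)=f(x)$, the element $Rf(0\mid x)=f(x)-Tf(0\mid x)$ is well defined and satisfies $Rf(0\mid x)\ge 0$. So it only remains to prove $Rf(0\mid x)\le 0$.

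First I would set up the key upper bound via Lemma~\ref{lemma:auxbern1}. Since $x\in\bouleopen\coneone$ we have $\norm\coneone x<1$, so we may pick $n\in\NN$ large enough that $\norm\coneone x\,(1+\tfrac1n)\le 1$; then $\pi=[x/n,\ldots,x/n]$ ($n$ copies) is a partition of $x$ with $x+x/n\in\boule\coneone$ for every piece, and Lemma~\ref{lemma:auxbern1} applies, giving $Rf(0\mid x)\le\sum_{i=1}^n\inf_{\pi_i\mid\partit{\pi_i}{x/n}}\sum_{z\in\pi_i}Rf(x\mid z)$.

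Next I would recognise each inner infimum as a first derivative. By Lemma~\ref{lemma:auxb1} the function $Rf_x\colon y\mapsto Rf(x\mid y)$ is pre-stable and satisfies $Rf_x(0)=0$; hence $\diff{}1{(Rf_x)}(0\mid z)=Rf_x(z)-Rf_x(0)=Rf(x\mid z)$, and by Definition~\ref{def:dersn} the infimum over partitions of $x/n$ of $\sum_{z\in\pi_i}Rf(x\mid z)$ is exactly $\ders 1{(Rf_x)}(0\mid x/n)$. Using the $n$-linearity of the derivative together with its linear extension to $\coneone_x^1$ (Lemma~\ref{lemma:additivity} and the remark following it), the $n$ identical terms add up to $\sum_{i=1}^n\ders 1{(Rf_x)}(0\mid x/n)=\ders 1{(Rf_x)}(0\mid x)$. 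The whole problem therefore reduces to showing $\ders 1{(Rf_x)}(0\mid x)=0$.

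The main obstacle is precisely this last vanishing, which is where the Taylor structure must be used and not merely the pre-stability of $Rf_x$. The idea is that $Tf_x\colon y\mapsto Tf(x\mid y)$ captures the exact first-order behaviour of $y\mapsto f(x+y)$ at $0$, so the first-order part of $Rf_x=f_x-Tf_x$ cancels. Concretely I would use that $g\mapsto\ders 1 g(0\mid h)$ is linear and directed Scott-continuous (Lemma~\ref{lemma:additivity}): the shift identity gives $\ders 1{(f_x)}(0\mid h)=\ders 1 f(x\mid h)$, while $\ders 1{(Tf_x)}(0\mid h)=\sup_N\ders 1{(Tf^N_x)}(0\mid h)$, in which the constant $f(x)$ and every monomial $y\mapsto\ders k f(x\mid y,\ldots,y)$ with $k\ge 2$ contribute $0$ (testing on $[h/m,\ldots,h/m]$ makes a degree-$k$ homogeneous term scale like $m^{1-k}\to 0$), leaving only the linear $k=1$ term $\ders 1 f(x\mid h)$. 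The cancellation law of cones then forces $\ders 1{(Rf_x)}(0\mid h)=0$ for every $h\in\boule{\coneone_x^1}$, and by the linear extension also $\ders 1{(Rf_x)}(0\mid x)=0$. Combining with the previous step gives $Rf(0\mid x)\le 0$, and with the free inequality $Rf(0\mid x)=0$, i.e. $f(x)=Tf(0\mid x)$. I expect the delicate points to be justifying the interchange of the derivative with the infinite Taylor sum (handled by Scott-continuity) and the pre-stability of the individual monomials, which are exactly the facts isolated in the technical lemmas of~\cite{mcmillan}.
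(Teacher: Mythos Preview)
Your argument is correct and follows the same overall route as the paper: partition $x$ uniformly into $[x/n,\ldots,x/n]$, apply Lemma~\ref{lemma:auxbern1}, and recognise the resulting bound as (a sum of copies of) $\ders 1{(Rf_x)}(0\mid x/n)$, which must then be shown to vanish. The only real difference is in this last step: the paper bounds $\ders 1{(Rf_x)}(0\mid x/n)\le T(Rf_x)(0\mid x/n)$ and invokes the black-box vanishing of the full Taylor series of the remainder from Lemma~\ref{lemma:auxb1} (imported from McMillan), whereas you compute $\ders 1{(Rf_x)}(0\mid h)$ directly by splitting $Rf_x=f_x-Tf_x$, using linearity and Scott-continuity of $g\mapsto\ders 1 g(0\mid h)$ (Lemma~\ref{lemma:additivity}), and killing the $k\ge2$ monomials by homogeneity on the partitions $[h/m,\ldots,h/m]$. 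Your route is a bit more self-contained\textemdash it only needs the first-derivative case rather than the full $T(Rf_x)(0\mid\cdot)=0$\textemdash but it is essentially the same mechanism, just unpacked by hand.
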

   
   \begin{proof}
     Let be $x \in \boule \coneone$.
     First, we consider the partition $\pi = [\frac x N, \ldots, \frac x N]$ of $x$, with $N$ taken such as $x + \frac x N \in \boule\coneone$. We know that such an $N$ exists since $x$ is in the open unit ball $\bouleopen \coneone$.
     We use Lemma~\ref{lemma:auxbern1} on $Rf(0\mid x)$, and the partition $\pi$, and it tells us that:
\begin{equation}\label{eq:auxeb2}
  Rf(0 \mid x) \leq \sum_{1 \leq j \leq N}\inf_{\pi=(u_1, \ldots, u_n) \in \parts {\frac x N}} \sum_{1 \leq i \leq n} Rf (x  \mid u_i) .
\end{equation}
Observe that  the above expression is valid, since for every $u_i$ in a partition $\pi$ of $\frac x N$, $x+u_i \in \boule \coneone$.
We know, by Lemma~\ref{lemma:auxb1} that $Rf(x \mid 0) = 0$. 
Therefore, we can rewrite~\eqref{eq:auxeb2} as:
     \begin{equation}\label{eq:auxeb3}
     Rf(0 \mid x) \leq \sum_{1 \leq j \leq N} \inf_{\pi=(u_1, \ldots, u_n) \in \parts {\frac x N}}\sum_{1 \leq i \leq n} Rf (x  \mid u_i) - Rf (x \mid 0) .
     \end{equation}
     Moreover, we are able to express the right part of~\eqref{eq:auxeb3} by the finite differences of the pre-stable function $Rf_{x}$; indeed for each $i$,
     \begin{equation}\label{eq:auxeb4}
       \diff {}1 {Rf_{x} }(0,u_i) = Rf (x  \mid u_i) - Rf (x \mid 0) .
       \end{equation}
     By the definition of derivatives (see Definition~\ref{def:dersn}), we see that
     \begin{equation}\label{eq:auxeb5}
       \ders 1 {Rf_{x}}(0 \mid \frac x N) = \inf_{\pi \in \parts {\frac x N}} \sum_{v \in \pi} \diff{} 1 {Rf_{x}}(0,v)
       \end{equation}
     
     We see now that combining~\eqref{eq:auxeb3},~\eqref{eq:auxeb4} and~\eqref{eq:auxeb5} leads us to $ Rf(0 \mid x) \leq \sum_{1 \leq j \leq N}\ders 1 {Rf_{x}}(0 \mid \frac x N).$
     Moreover, we know that for every $y \in \boule\coneone_x^1$, $\ders 1 {Rf_{x}}(0 \mid y) \leq T(Rf_{x} )(0 \mid y).$
     Hence by using again Lemma~\ref{lemma:auxb1}, which says that $T(Rf_{x})(0 \mid {\frac x N}) = 0$, it holds that $Rf(0 \mid x) = 0$.
     
  \end{proof}

\section{$\cstab$ is a conservative extension of $\pcoh_!$}\label{sect:pcs}

Probabilistic coherence spaces (PCS) were introduced by Ehrhard and Danos in~\cite{pcsaamohpc} as a model of higher-order probabilistic computation. It was successful in giving a fully abstract model both of $\pcf_\oplus$, and of a discrete probabilistic extension of Levy's Call-by-Push-Value. In this section, we present briefly basic definitions from~\cite{pcsaamohpc} and highlight an embedding from PCSs into cones.

\subsection{Probabilistic Coherence Spaces}
The definition of the PCS model of Linear Logic follows the tradition initiated by Girard with Coherence Spaces in~\cite{girard1986system}, and followed for instance by Ehrhard in~\cite{Ehrhard93} when defining hypercoherence spaces. A coherent space interpreting a type can be seen as a symmetric graph, and the interpretation of a program of this type is a \emph{clique} of this graph. Interestingly, such a graph $A$ can be alternatively characterized by giving its set of vertices (that we will call \emph{web}), and a family of subsets of this web, meant to be the family of the cliques of $A$. Then we know that an arbitrary family of subsets of a given web arises indeed as a family of cliques for some graph when some \emph{duality criterion} is verified.

 PCSs are designed to express probabilistic behavior of programs. As a consequence, a clique is not a subset of the web anymore, but a \emph{quantitative} way to associate a non-negative real coefficient to every element in the web. 
\begin{definition}[Pre-Probabilistic Coherent Spaces]\label{def:prepcs}
A Pre-PCS is a pair $\pcsone = (\web \pcsone, \prog \pcsone)$, where $\web \pcsone$ is a countable set called \emph{web of $\pcsone$},  $\prog \pcsone $ is a subset of $\subseteq \Rp{\web \pcsone}$ whose elements are called \emph{cliques of $\pcsone$}. 
\end{definition}
We need here to introduce some notations to deal with infinite dimensional $\RR$-vector spaces. Given a countable web $A$, and $a$ an element of $A$, we denote $e_a$ the vector in $\Rp A$ which is $1$ in $a$, and $0$ elsewhere. We are also going to introduce a scalar product on vectors in $\Rp A$: if $u, v \in \Rp A$, we will denote $\scal u v = \sum_{a \in \web \pcsone} u_a v_a \in \mathbb{R} \cup \{\infty\} $. Moreover, if $A$ and $B$ are countable sets, $x \in \Rp{A \times B}$, and $u \in \Rp{A}$, we denote by $x \cdot u$ the vector in $(\Rp{}\cup \{\infty\})^{B}$ given by $(x \cdot u)_b = \sum_{a \in A} x_{a,b} u_a$ for every $b \in B$.

We are going to give examples of pre-PCS modeling discrete data-types. First, we define a pre-PCS $\unit$ to correspond to unit type. Since unit-type programs have only one possible outcome (that they can reach or not), $\unit$ has only one vertex: $\web \unit = \{\star\}$. We want the denotation of a unit-type program to express its probability of termination: we take the set of cliques $\prog \unit$ as the interval $[0,1]$. 

Let us now look at what happens when we consider programs of type $N$: a program can now have a countable numbers of possible outcomes, so the web will consist of $\NN$, and cliques will be sub-distributions on these vertices.
\begin{example}[Pre-PCS of Natural Numbers]\label{ex:Nat}
We define the Pre-PCS $\NN^{\pcoh}$ by taking $\web \NN^\pcoh = \NN$, and $\prog {\NN^\pcoh} = \{u \in \Rp {\NN} \mid_{n \in \NN} \sum u_n \leq 1\}$. It corresponds to the denotational semantics of the base type $N$ of $\pcf_\oplus$ in $\pcoh_!$.
  \end{example}

We now need to give a \emph{quantitative} bi-duality criterion, to specify which one of the $\prog \pcsone \subseteq \Rp{\web \pcsone}$ are indeed \emph{valid} families of cliques. To do that, we first define a \emph{duality operator}: if $\pcsone = (\web \pcsone , \prog \pcsone)$ is a pre-PCS, we define the pre-PCS  $\dual \pcsone =  (\web \pcsone, \{u \in \Rp {\web \pcsone}, \, \forall v \in \prog \pcsone, \scal u v \leq 1 \})$. We are now ready to give conditions on pre-PCSs to actually be PCS:
\begin{definition}[Probabilistic Coherent Spaces]\label{def:pcs}
  A pre-PCS $\pcsone$ is a \emph{PCS} if $\dual {\dual \pcsone} = \pcsone$ and moreover the following two conditions hold:
  \begin{itemize}
  \item $ \forall a \in \web \pcsone$, there exists $\lambda >0$ such that $\lambda e_a \in \prog \pcsone$.
    \item $\forall a \in \web \pcsone$, there exists $M \geq 0$, such that  for every $u \in \prog \pcsone$, $u_a \leq M$.
    \end{itemize}
\end{definition}
We may see easily that both $\unit$ and $\NN^{\pcoh}$ are indeed PCSs.

 As highlighted in Example 4.4 from~\cite{pse}, we can associate in a generic way a cone to any PCS. The idea is that we consider the extension of the space of cliques by all uniform scaling by positive reals. We formalize this idea in Definition~\ref{def:pcstocone} below.
\begin{definition}\label{def:pcstocone}
  Let be $\pcsone$ a PCS. We define a cone $\pcstocone \pcsone$ as the $\Rp{}$ semi-module $\{\alpha \cdot x \text{ s.t. } \alpha \geq 0, x\in \prog \pcsone  \}$ where the $+$ is the usual addition on vectors. We endow it with $\norm {\pcstocone \pcsone}{\cdot}$ defined by:
  $$
    \norm {\pcstocone \pcsone}{x} = \sup_{y \in \prog{\dual \pcsone}} {\scal x y}  = \inf\{\frac 1 r \mid \, r \cdot x \in \prog \pcsone \}.$$
\end{definition}
It is easily seen that it is indeed a cone (the proof uses the so-called technical conditions from Definition~\ref{def:pcs}). Moreover, we can see that $\boule{\pcstocone \pcsone} $ consists exactly of the set $\prog \pcsone$ of cliques of $\pcsone$.
Looking at the cone order $\order{\pcstocone \pcsone}$, as defined in Definition~\ref{def:coneorder}, we see that it coincides on $\prog \pcsone$ with the pointwise order in $\Rp{\web \pcsone}$. It is relevant since we know already from~\cite{pcsaamohpc} that $\prog \pcsone$ is a bounded-complete and $\omega$-continuous cpo with respect to this pointwise order.

\begin{lemma}\label{lemma:pcsdccones}
For every PCS $\pcsone$, it holds that $\pcstocone \pcsone$ is a directed-complete lattice cone.
\end{lemma}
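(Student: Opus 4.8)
The plan is to exploit the fact, recorded just before the statement, that via Definition~\ref{def:pcstocone} the unit ball $\boule{\pcstocone\pcsone}$ is exactly the set of cliques $\prog\pcsone \subseteq \Rp{\web\pcsone}$, and that $\order{\pcstocone\pcsone}$ restricted there is the pointwise order. I will therefore compute both suprema and binary joins \emph{coordinatewise} in $\Rp{\web\pcsone}$; the whole difficulty is to check that the resulting vectors are again cliques, i.e. lie in $\dual{\dual\pcsone}=\prog\pcsone$. Two preliminary observations drive everything, both consequences of the biduality $\dual{\dual\pcsone}=\pcsone$: $\prog\pcsone$ is \emph{convex} and \emph{downward closed} in $\Rp{\web\pcsone}$, since for any $v \in \prog{\dual\pcsone}$ the pairing $\scal{\cdot}{v}$ is linear and monotone, hence stays $\leq 1$ on convex combinations and on pointwise-smaller vectors. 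Consequently $\pcstocone\pcsone$ is closed under addition, and $u \order{\pcstocone\pcsone} w$ holds iff $u \leq w$ pointwise, even for scaled cliques (using downward closure to witness the difference inside the cone).

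For directed completeness, I would take a directed $D \subseteq \boule{\pcstocone\pcsone} = \prog\pcsone$ and define $s \in \Rp{\web\pcsone}$ by $s_a = \sup_{u \in D} u_a$; this is finite at each $a$ by the second technical condition of Definition~\ref{def:pcs}. The main step is to prove $s \in \prog\pcsone$, which by biduality reduces to $\scal{s}{v} \leq 1$ for every $v \in \prog{\dual\pcsone}$. Since all terms are non-negative, $\scal{s}{v}$ is the supremum over finite $F \subseteq \web\pcsone$ of $\sum_{a\in F} s_a v_a$; for each such $F$ and each $\epsilon>0$, directedness of $D$ yields a single $u \in D$ with $u_a > s_a - \epsilon$ simultaneously for all $a \in F$, whence $\sum_{a\in F} s_a v_a \leq \scal{u}{v} + \epsilon\sum_{a \in F} v_a \leq 1 + \epsilon\sum_{a\in F}v_a$. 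Letting $\epsilon \to 0$ and then taking the supremum over $F$ gives $\scal{s}{v}\leq 1$. As the order is pointwise, $s$ is then at once an upper bound of $D$ and below every upper bound, hence $\sup D = s \in \boule{\pcstocone\pcsone}$.

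For the lattice property, given arbitrary $x,y \in \pcstocone\pcsone$ I would set $(x\vee y)_a = \max(x_a,y_a)$. Since $\max(x_a,y_a) \leq x_a + y_a$ and $x+y \in \pcstocone\pcsone$ by closure under addition, downward closure gives $x \vee y \in \pcstocone\pcsone$; and because $\order{\pcstocone\pcsone}$ is the pointwise order, $x\vee y$ is simultaneously an upper bound of $\{x,y\}$ and below any common upper bound, hence their least upper bound. This settles both required properties.

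The main obstacle is the directed-completeness step, precisely the proof that the coordinatewise supremum $s$ is still a clique: this is where the purely pointwise computation must be reconciled with the global constraint $\scal{s}{v}\leq 1$ defining $\prog\pcsone$, and it is exactly the interchange of the supremum over $D$ with the (possibly infinite) sum defining $\scal{s}{v}$ that forces the use of directedness together with the finite-approximation argument above. Everything else is routine bookkeeping with the pointwise order, once convexity and downward closure of $\prog\pcsone$ are in hand.
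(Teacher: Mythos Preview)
Your proof is correct and follows essentially the same coordinatewise approach as the paper. The paper's own proof is much terser: it merely cites the known fact that $\prog\pcsone$ is a complete partial order for directed completeness, and gives the identical formula $(x\vee y)_a = \max(x_a,y_a)$ for the lattice join; your argument simply spells out, via biduality, the details that the paper leaves implicit.
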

\begin{proof}
To show that $\pcstocone \pcsone$ is directed complete, we use the fact that $\prog \pcsone$ is a complete partial order. To show that it is a lattice, we see that $x \vee y $ can be defined as: $(x \vee y)_a = \max{x_a,y_a} \, \forall a \in \web \pcsone$. 
\end{proof}

\subsection{The Category $\pcoh$.}
Intuitively a morphism in $\pcoh(\pcsone,\pcstwo)$ is a linear map from $\Rp{\web \pcsone}$ to $\Rp{\web \pcstwo}$ \emph{preserving} the cliques.
\begin{definition}[Morphisms of PCSs]\label{def:morphismspcs}
Let be $\pcsone$, $\pcstwo$ two PCSs. A \emph{morphism of PCSs between $\pcsone$ and $\pcstwo$} is a matrix $x \in \Rp{\web \pcsone \times \web \pcstwo}$ such that for every $u \in \prog \pcsone$, it holds that $x \cdot u \in \prog \pcstwo$.
\end{definition}

We now illustrate Definition~\ref{def:morphismspcs} by looking at the morphisms from $\Bool$ to itself: they are the $x \in \Rp{\{\ttrue, \ffalse\} \times \{\ttrue, \ffalse\}}$ such that ${x_{\ttrue, \ttrue} + x_{\ttrue, \ffalse } \leq 1}$, and similarly ${x_{\ffalse, \ttrue} + x_{\ffalse, \ffalse } \leq 1}$. We see that they are exactly those matrices specifying the transitions for a probabilistic Markov chain with two states $\ttrue$ and $\ffalse$.

We call $\pcoh$ in the following the category of PCS and morphisms of PCS.
In~\cite{pcsaamohpc}, it is endowed with the structure of a model of linear logic. We are only going to recall here partly the exponential structure, since our main focus will be on the Kleisli category associated to $\pcoh$. 

In~\cite{pcsaamohpc}, the construction of the exponential was done by defining a functor $!$, as well as dereliction and digging making $\pcoh$ a Seely category, and consequently a model of linear logic. Here, we are only going to recall explicitly the effect of $!$ on PCSs. We denote by $\mfin {\web \pcsone}$ the set of finite multisets over the web of $\pcsone$, and we take it as the web of the PCS $\bang \pcsone$. If $\mu \in \mfin A$, we call \emph{support of $\mu$}, and we denote $\supp \mu$, the set of elements $a$ is $A$ such that $a$ appears in $\mu$. Moreover, we will use the following notation: for every $x \in \Rp{\web \pcsone}$, and $\mu \in \mfin{\web \pcsone}$, we denote $x^\mu = \prod_{a \in \supp \mu} x_a^{\mu(a)} \in \Rp{}$.
\begin{definition}
  Let be $\pcsone$ a PCS. We define the \emph{promotion} of $x \in \prog \pcsone$, as the element $x^! \in \Rp{\mfin{\web \pcsone}}$ given by $x^!_\mu = x^\mu.$
We define $!\pcsone = (\mfin {\web \pcsone}, \{x^! \mid x \in \pcsone\}^{\bot \bot})$.
  
\end{definition}

\subsection{The Kleisli Category of Probabilistic Coherence Spaces}\label{subsect:pcoh!}
The idea, as usual, is that morphisms in the Kleisli category can use several times their argument, while morphisms in the original category are \emph{linear}.
The Kleisli category for $\pcoh$, denoted $\pcoh_!$, has also PCSs for objects, while $\pcoh_!(\pcsone,\pcstwo) = \pcoh(\bang \pcsone, \pcstwo)$. 
We give here a direct characterization of $\pcoh_!$ morphisms.
\begin{lemma}[from~\cite{pcsaamohpc}]\label{lemma:morphpcoh}
Let be $f \in \Rp{\mfin{\web {\pcsone}} \times {\web\pcstwo}}$. Then $f$ is a morphism in $\pcoh_!(\pcsone, \pcstwo)$, if and only if for every $x \in \prog \pcsone$, $f \cdot {x^!} \in \prog \pcstwo$.
  \end{lemma}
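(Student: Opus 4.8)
The plan is to unfold the definition of the Kleisli hom-set and then reduce everything to the biduality property of the orthogonality operator. By definition $\pcoh_!(\pcsone,\pcstwo)=\pcoh(\bang\pcsone,\pcstwo)$, so by Definition~\ref{def:morphismspcs} the matrix $f$ is a morphism exactly when $f\cdot u\in\prog\pcstwo$ for \emph{every} $u\in\prog{\bang\pcsone}$, and by construction $\prog{\bang\pcsone}=S^{\bot\bot}$ where $S=\{x^!\mid x\in\prog\pcsone\}$. The ``only if'' direction is then immediate: since $S\subseteq S^{\bot\bot}$ for any set $S$ (the standard property of the orthogonality operator already implicit in Definition~\ref{def:pcs}), each $x^!$ with $x\in\prog\pcsone$ lies in $\prog{\bang\pcsone}$, whence $f\cdot x^!\in\prog\pcstwo$.

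For the ``if'' direction I would assume $f\cdot x^!\in\prog\pcstwo$ for every $x\in\prog\pcsone$ and promote this to all $u\in S^{\bot\bot}$. First I would use $\pcstwo=\dual{\dual\pcstwo}$ to rephrase the goal $f\cdot u\in\prog\pcstwo$ as: $\scal{f\cdot u}{v}\le 1$ for every $v\in\prog{\dual\pcstwo}$. The key move is a transposition: since all coefficients of $f$, $u$ and $v$ are non-negative, the double sum $\scal{f\cdot u}{v}=\sum_{\mu,b}f_{\mu,b}\,u_\mu\,v_b$ may be freely reordered, giving $\scal{f\cdot u}{v}=\scal{u}{g^v}$ where $g^v$ is defined by $g^v_\mu=\sum_{b\in\web\pcstwo}f_{\mu,b}\,v_b$. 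The same computation applied to $u=x^!$ shows $\scal{x^!}{g^v}=\scal{f\cdot x^!}{v}\le 1$ for every $x\in\prog\pcsone$, i.e. $g^v\in S^{\bot}$; hence for $u\in S^{\bot\bot}=\prog{\bang\pcsone}$ one gets $\scal{u}{g^v}\le 1$, which is exactly the desired inequality.

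The one point that needs care \textemdash{} and the main obstacle \textemdash{} is that for $g^v$ to qualify as an element of $S^\bot$ it must genuinely live in $\Rp{\mfin{\web\pcsone}}$, i.e. all its entries must be \emph{finite}. I would secure this from the two technical conditions of Definition~\ref{def:pcs} together with the convexity of $\prog\pcsone$ (which holds because $\prog\pcsone=\{w\mid\forall v\in\prog{\dual\pcsone},\,\scal{w}{v}\le 1\}$). Given $\mu$, its support $F=\supp\mu$ is finite, and for each $a\in F$ there is $\lambda_a>0$ with $\lambda_a e_a\in\prog\pcsone$; the convex combination $x=\frac{1}{\card F}\sum_{a\in F}\lambda_a e_a$ is then a clique with $x_a>0$ for all $a\in F$, so that $x^!_\mu=\prod_{a\in F}x_a^{\mu(a)}>0$. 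Were $g^v_\mu$ infinite, then $\scal{x^!}{g^v}\ge x^!_\mu\,g^v_\mu=\infty$ would contradict $\scal{f\cdot x^!}{v}\le 1$; hence $g^v_\mu<\infty$. Once this finiteness is established, the biduality argument of the previous paragraph applies verbatim and completes the proof.
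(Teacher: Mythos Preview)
The paper does not provide a proof of this lemma: it is stated with the attribution ``[from~\cite{pcsaamohpc}]'' and is quoted as a known fact from the original probabilistic coherence spaces paper. There is therefore no in-paper argument to compare your proposal against.

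That said, your proof is correct and is essentially the standard argument one finds in the PCS literature. The reduction of the ``if'' direction to the transposition $\scal{f\cdot u}{v}=\scal{u}{g^v}$ (valid by Tonelli for non-negative series), followed by the observation that the hypothesis forces $g^v\in S^\bot$ so that biduality of $\bang\pcsone$ closes the argument, is exactly the intended mechanism. Your care about the finiteness of the entries of $g^v$ is well placed: one does need $g^v\in\Rp{\mfin{\web\pcsone}}$ before invoking membership in $S^\bot$, and your use of the first technical condition of Definition~\ref{def:pcs} together with convexity of $\prog\pcsone$ to produce an $x$ with $x^!_\mu>0$ is the right way to rule out $g^v_\mu=\infty$.
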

What Lemma~\ref{lemma:morphpcoh} tells us is that any $f \in \pcoh_!(\pcsone, \pcstwo)$ is entirely characterized by the map $\mapm f : x \in  \prog \pcsone \rightarrow f \cdot x^! \in \prog \pcstwo$. We denote by $\mathcal E^{\pcsone, \pcstwo}$ the set of all maps $\prog \pcsone \rightarrow \prog \pcstwo$ that are equal to a $\mapm f$ with $f \in \pcoh_!(\pcsone, \pcstwo)$. It has been shown in~\cite{pcsaamohpc} that $\mapm{(\cdot)}$ is actually a bijection from $\pcoh_!(\pcsone, \pcstwo)$ to $\mathcal E^{\pcsone, \pcstwo}$.

Observe that we can see the maps in $\mathcal E^{\pcsone, \pcstwo}$ as \emph{entire series}, in the sense that they can be written as the supremum of a sequence of polynomials. Indeed, for any morphism $f$, and $x \in \prog \pcsone$, we can write:
$$\mapm f (x) = \sup_{N \in \NN} {\sum_{b \in \web \pcstwo} (\sum_{\mu \text{ with }\card \mu \leq N} f_{\mu,b } \cdot x^\mu ) \cdot e_b} $$

As the Kleisli category of the comonad $!$ in a Seely category, $\pcoh_!$ is a cartesian closed category. We give here explicitely the construction of the product and arrow constructs: if $\pcsone$ and $\pcstwo$ are PCSs, $\pcsone \Rightarrow \pcstwo$ is defined by  
$\web{\pcsone \Rightarrow \pcstwo} = \mfin {\web\pcsone} \times \web \pcstwo$ and $\prog{(\pcsone \Rightarrow \pcstwo )} = \pcoh_!(\pcsone, \pcstwo)$. If $(\pcsone_i)_{i \in I}$ is a family of PCSs, $\prod_{i \in I} \pcsone_i$ is defined by
    $ \web{\prod_{i \in I}{\pcsone_i}} = \cup_{i \in I}\{i\} \times \web {\pcsone_i}$ and $ \prog \pcsone = \{x \in \Rp{\web{\prod_{i \in I}{\pcsone_i}}} \mid \forall i \in I, \, \pi_i(x) \in \prog{\pcsone_i} \}$, where $\pi_i(x)_{a} = x_{(i,a)}$.
  
\subsection{A fully faithful functor $\functor: \pcoh_! \rightarrow \cstab$.}\label{sect:proofconsext}

Recall that Definition~\ref{def:pcstocone} gave a way to see a PCS as a cone. Moreover, as stated in Proposition~\ref{prop:fmsf} below, a morphism in $\pcoh_!$ can also be seen as a stable function, in the sense that $\mathcal E^{\pcsone, \pcstwo} \subseteq \cstab(\pcstocone \pcsone, \pcstocone \pcstwo)$.
\begin{proposition}\label{prop:fmsf}
Let be $f \in \pcoh_!(\pcsone, \pcstwo)$. Then $\mapm f$ is a stable function from ${\pcstocone \pcsone}$ to $\pcstocone \pcstwo$.
  \end{proposition}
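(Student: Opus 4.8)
The plan is to verify directly the three defining clauses of a stable function for the map $\mapm f : \prog\pcsone \to \prog\pcstwo$, recalling that $\boule{\pcstocone\pcsone} = \prog\pcsone$ and that, by Lemma~\ref{lemma:morphpcoh} and the power-series expansion stated after it, each coordinate of $\mapm f$ is a power series with non-negative coefficients, $\mapm f(x)_b = \sum_{\mu \in \mfin{\web\pcsone}} f_{\mu,b}\, x^\mu$ with $f_{\mu,b} \ge 0$ and $x^\mu = \prod_{a \in \supp\mu} x_a^{\mu(a)}$. The boundedness clause is immediate: since $f$ is a morphism of $\pcoh_!$ we have $\mapm f(\prog\pcsone) \subseteq \prog\pcstwo$, i.e. $\mapm f(\boule{\pcstocone\pcsone}) \subseteq \boule{\pcstocone\pcstwo}$, so $\lambda = 1$ works (this is moreover exactly the normalisation needed for $\mapm f$ to be a morphism of $\cstab$). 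For sequential Scott-continuity I would note that each monomial $x \mapsto x^\mu$ is Scott-continuous on $\prog\pcsone$: the order there is pointwise, so suprema of increasing sequences are taken coordinatewise, and $x^\mu$ is a finite product (as $\supp\mu$ is finite) of coordinate projections, each Scott-continuous; finite sums and products of non-negative Scott-continuous functions are Scott-continuous. Hence each coordinate $\mapm f(\cdot)_b = \sup_N \sum_{\card\mu \le N} f_{\mu,b}\,x^\mu$ is an increasing supremum of Scott-continuous functions, and since suprema in $\pcstocone\pcstwo$ are computed coordinatewise, $\mapm f$ is Scott-continuous.

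The main point, and the only real obstacle, is pre-stability: for every $n$, $x \in \prog\pcsone$ and $\vec u \in \boule{(\pcstocone\pcsone)_x^n}$ I must show $\diff - n (\mapm f)(x \mid \vec u) \order{\pcstocone\pcstwo} \diff + n (\mapm f)(x \mid \vec u)$. I would first reduce cone-membership of the difference to a coordinatewise estimate. Since $\prog\pcstwo$ is downward closed for the pointwise order (a vector dominated by a clique is a clique, immediate from $\pcstwo = \dual{\dual\pcstwo}$), it suffices to prove $0 \le \diff{}n(\mapm f)(x \mid \vec u) \le \mapm f(x + \sum_{i=1}^n u_i)$ coordinatewise, the upper bound being a clique; note that $\vec u \in \boule{(\pcstocone\pcsone)_x^n}$ gives $x + \sum_{i=1}^n u_i \in \prog\pcsone$, hence $x + \sum_{i \in I} u_i \in \prog\pcsone$ for every $I$ by downward closure, so all the relevant values of $\mapm f$ are defined. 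Fixing a coordinate $b$, the $b$-component of the difference is $\sum_{I \subseteq \{1,\ldots,n\}} (-1)^{n - \card I}\, \mapm f(x + \sum_{i\in I} u_i)_b$; expanding each term as its power series and interchanging the finite alternating sum over $I$ with the convergent non-negative series over $\mu$ (legitimate since every series involved converges, with value at most $1$), this equals $\sum_\mu f_{\mu,b}\, \diff{}n m_\mu(x \mid \vec u)$, where $m_\mu(x) = x^\mu$.

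It then remains to analyse a single monomial, which is where the combinatorial core lies. Writing $d = \card\mu$ and listing the factors of $m_\mu$ as $x_{a(1)}\cdots x_{a(d)}$ (so $a(p)$ is the coordinate carried by factor $p$), the expansion of $m_\mu(x + \sum_{i \in I} u_i)$ runs over colourings $c : \{1,\ldots,d\} \to \{0\}\cup I$, colour $0$ selecting $x_{a(p)}$ and a colour $i \ge 1$ selecting $(u_i)_{a(p)}$. For a fixed colouring $c : \{1,\ldots,d\}\to\{0,1,\ldots,n\}$ with $J(c) = \mathrm{Im}(c)\setminus\{0\}$, its product term appears with total coefficient $\sum_{I \supseteq J(c)} (-1)^{n - \card I} = (-1)^{n - \card{J(c)}}(1-1)^{n - \card{J(c)}}$, which vanishes unless $J(c) = \{1,\ldots,n\}$. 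Hence
$$\diff{}n m_\mu(x \mid \vec u) = \sum_{\substack{c:\{1,\ldots,d\}\to\{0,1,\ldots,n\}\\ \{1,\ldots,n\}\subseteq \mathrm{Im}(c)}} \prod_{p=1}^d w_{c(p)},$$
where each $w_{c(p)}$ is one of the non-negative reals $x_{a(p)}$ or $(u_i)_{a(p)}$.

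Every surviving term is thus a product of non-negative reals, giving $\diff{}n m_\mu(x\mid\vec u) \ge 0$; and since these colourings form a subset of all colourings, whose total sum is exactly $m_\mu(x + \sum_{i=1}^n u_i)$, I also obtain $\diff{}n m_\mu(x \mid \vec u) \le m_\mu(x + \sum_{i=1}^n u_i)$. Weighting by $f_{\mu,b}\ge0$ and summing over $\mu$ yields the bracketing $0 \le \diff{}n(\mapm f)(x\mid\vec u)_b \le \mapm f(x + \sum_{i=1}^n u_i)_b$ in every coordinate $b$, which by the downward-closure argument places $\diff{}n(\mapm f)(x\mid\vec u)$ in $\pcstocone\pcstwo$ and completes pre-stability, hence the proposition. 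I expect the coefficient-collapse computation $\sum_{I \supseteq J(c)}(-1)^{n-\card I} = (-1)^{n-\card{J(c)}}0^{\,n-\card{J(c)}}$ to be the one delicate step; everything else is bookkeeping of non-negative series.
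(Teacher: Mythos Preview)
Your proof is correct and follows the same outline as the paper's: boundedness from the definition of $\pcoh_!$-morphism, Scott-continuity from the power-series representation, and pre-stability from the non-negativity of all coefficients. The paper's own proof is a three-line sketch (citing \cite{pcsaamohpc} for Scott-continuity and stating only that pre-stability ``comes from the fact that $\mapm f$ can be written as a power series with all its coefficients non-negative''), whereas you actually carry out the combinatorial verification that $\diff{}n m_\mu(x\mid\vec u)$ is a sum of non-negative monomials via the inclusion--exclusion identity $\sum_{I\supseteq J}(-1)^{n-\card I}=[J=\{1,\dots,n\}]$. Your use of downward closure of $\prog\pcstwo$ to place the coordinatewise difference back in $\pcstocone\pcstwo$ is also a detail the paper leaves implicit. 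So: same approach, but you have supplied the argument the paper only gestures at.
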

\begin{proof}
  We know from~\cite{pcsaamohpc} that $\mapm f: \prog \pcsone \rightarrow \prog \pcstwo$ is sequentially Scott-continuous with respect to the orders $\order{\pcstocone \pcsone}$, $\order{\pcstocone \pcstwo}$.
  Moreover $\mapm f$ is pre-stable: it comes from the fact that $\mapm f$ can be written as a power series with all its coefficients non-negative.
  Finally, we have to show that $\mapm f(\boule \pcstocone \pcsone) \subseteq \boule \pcstocone \pcstwo)$. Since $\boule \pcstocone \pcsone = \prog \pcsone$, $\boule \pcstocone \pcsone = \prog \pcsone$, and moreover $f$ is a morphism in $\pcoh_!(\pcsone, \pcstwo)$, we see that the result holds. 
  \end{proof}
Thus we can define a functor $\functor: \pcoh_! \rightarrow \cstab$, by taking $\functor \pcsone = \pcstocone \pcsone$, and $\functor f = \mapm f$. Our goal now is to show that $\functor$ is full and faithful, which will make $\pcoh_!$ a full subcategory of $\cstab$.
As mentioned before, it was shown in~\cite{pcsaamohpc} that $\,\mapm{\cdot}\,$ is a bijection from $\pcoh_!(\pcsone, \pcstwo)$ to $\mathcal{E}^{\pcsone, \pcstwo}$. It tells us directly that $\functor$ is indeed \emph{faithful}.
In the remainder of this section, we are going to show that $\functor$ is actually also \emph{full}, hence makes $\cstab$ a conservative extension of $\pcoh_!$.

In the following, we fix $\pcsone$ and $\pcstwo$ two PCSs, and $g \in \cstab(\functor \pcsone, \functor \pcstwo)$. Our goal is to show that there exists $f \in \pcoh_!(\pcsone, \pcstwo)$ such that $\mapm f = g$.
First, recall that we have shown in Lemma~\ref{lemma:pcsdccones} that for every PCS $\pcsthree$, the cone $\functor \pcsthree$ is a directed complete lattice cone. It means that all results in Section~\ref{subsect:bernstein} can be used here: in particular, $g$ has higher-order derivatives $D^n g$, which makes Definition~\ref{def:fullfunctor} below valid.

\begin{definition}\label{def:fullfunctor}
We define $f\in \Rp{\mfin{\web {\pcsone}} \times \web \pcstwo}$ by taking:
$${f}_{[a_1, \ldots, a_k], b} = \frac {\alpha_{[a_1, \ldots, a_k]}}{k!}\left(\ders k g(0 \mid e_{a_1}, \ldots e_{a_k} ) \right)_b \in \RR^+.$$
where 
$\alpha_\mu = \# {\{(c_1, \ldots,c_k) \in \web \pcsone^k \text{ with }\mu = [c_1, \ldots, c_k] \}}.$
\end{definition}
We have to show now that $f \in \pcoh_!(\pcsone, \pcstwo)$, and that $\mapm {f}$ coincides with $g$ on $\prog \pcsone$. The key observation here is that we have actually built $f$ in such a way that it is going to coincide with $Tg(0 \mid \cdot)$\textemdash the Taylor series of $g$ defined in Definition~\ref{def:TS}. We first show it for the elements of $\prog \pcsone$ with \emph{finite support}, by using finite additivity of the $\ders k g (0 \mid \cdot)$. 
\begin{lemma}\label{lemma:densetaylor}
  Let be $x \in \prog{\pcsone}$, such that $\text{Supp}(x) = \{a \in \prog \pcsone \mid x_a > 0\}$ is finite. Then it holds that
 $ f \cdot x^!$ is finite (i.e for every $b \in \web \pcstwo$, $(f \cdot x^!)_b < \infty$), and moreover  $f \cdot x^!  = Tg(0 \mid x)$.
\end{lemma}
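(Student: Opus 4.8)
The plan is to compute $f \cdot x^!$ directly from Definition~\ref{def:fullfunctor} and compare it term-by-term with the Taylor series $Tg(0 \mid x)$ as given by Definitions~\ref{def:TSn} and~\ref{def:TS}. Fix $b \in \web \pcstwo$. Unfolding the definition of $f$ and of $x^!$, we have
$$
(f \cdot x^!)_b = \sum_{\mu \in \mfin{\web \pcsone}} f_{\mu, b} \cdot x^\mu = \sum_{\mu} \frac{\alpha_\mu}{\card \mu !} \left( \ders {\card \mu} g(0 \mid e_{a_1}, \ldots, e_{a_{\card\mu}}) \right)_b \cdot x^\mu,
$$
where $\mu = [a_1, \ldots, a_{\card\mu}]$. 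The first thing to establish is that the combinatorial factor $\alpha_\mu$ is exactly what is needed to turn the sum over multisets into a sum over tuples: since $\alpha_\mu$ counts the tuples $(c_1, \ldots, c_k)$ whose underlying multiset is $\mu$, and since $x^\mu = \prod_{a} x_a^{\mu(a)} = x_{c_1} \cdots x_{c_k}$ for any such tuple, the $\mu$-indexed sum rewrites as a sum over tuples $(c_1, \ldots, c_k) \in \web\pcsone^k$.

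**Using multilinearity.**
Here is where I would invoke Lemma~\ref{lemma:additivity}: the map $\vec u \mapsto \ders k g(0 \mid \vec u)$ is $k$-linear, so it commutes with the sums and the scalar multiplications by the $x_{c_i}$. Since $\text{Supp}(x)$ is finite, all these sums are finite, and I can push the scalars inside each argument and collapse the tuple sum into a single diagonal direction:
$$
\sum_{(c_1, \ldots, c_k)} \left(\ders k g(0 \mid e_{c_1}, \ldots, e_{c_k})\right)_b \cdot x_{c_1} \cdots x_{c_k} = \left(\ders k g\Bigl(0 \,\Big|\, \textstyle\sum_{c} x_c e_c, \ldots, \sum_c x_c e_c\Bigr)\right)_b = \left(\ders k g(0 \mid x, \ldots, x)\right)_b,
$$
using that $x = \sum_{c \in \text{Supp}(x)} x_c\, e_c$ on the finite support. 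Grouping the multiset sum by cardinality $k = \card\mu$, the whole expression becomes $\sum_{k \geq 0} \frac{1}{k!} \left(\ders k g(0 \mid x, \ldots, x)\right)_b$, where the $k=0$ term is $g(0)_b$. This matches the coordinatewise value of $\sup_N Tf^N(0 \mid x) = Tg(0 \mid x)$ from Definition~\ref{def:TS}.

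**Finiteness and convergence.**
The remaining point, and the one requiring genuine care, is finiteness: I must show $(f \cdot x^!)_b < \infty$ and justify that the supremum defining $Tg(0 \mid x)$ is computed coordinatewise as the numerical sum of the nonnegative terms. For this I would appeal to Lemma~\ref{lemma:tfN}, which gives $Tf^N(0 \mid x) \leq g(x)$ in the cone $\functor\pcstwo$ for every $N$; hence each partial sum $\sum_{k=0}^N \frac{1}{k!}(\ders k g(0 \mid x, \ldots, x))_b$ is bounded above by $g(x)_b < \infty$, so the monotone sequence of partial sums converges and its limit is at most $g(x)_b$, giving finiteness termwise. Since lubs in $\functor\pcstwo$ agree with the pointwise order on $\prog\pcstwo$ (noted after Definition~\ref{def:pcstocone}), the coordinatewise numerical limit coincides with $Tg(0 \mid x) = \sup_N Tf^N(0 \mid x)$.

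**The main obstacle.**
The technical heart is the interchange of the infinite multiset sum with the coordinatewise structure: I am freely reordering an infinite sum of nonnegative reals (legitimate, since all $f_{\mu,b}$, $x^\mu$, and derivative coordinates are nonnegative by Lemma~\ref{lemma:aux1}) and applying $k$-linearity, which is only stated for finitely many arguments. The finite-support hypothesis on $x$ is exactly what makes the inner tuple sums finite so that Lemma~\ref{lemma:additivity} applies without any continuity argument; the only genuinely infinite sum is the outer one over cardinalities $k$, and that is handled by the monotone-convergence bound from Lemma~\ref{lemma:tfN}. I expect no difficulty beyond bookkeeping the factorials and the $\alpha_\mu$ factor correctly, which is precisely why $f$ was defined with the $\alpha_\mu / k!$ normalization in the first place.
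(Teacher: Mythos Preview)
Your proposal is correct and follows essentially the same route as the paper: rewrite the multiset sum as a tuple sum via the combinatorial factor $\alpha_\mu$, then use the $k$-linearity of $\ders k g(0\mid\cdot)$ from Lemma~\ref{lemma:additivity} together with the finite-support decomposition $x=\sum_c x_c e_c$ to collapse each degree-$k$ layer to $\ders k g(0\mid x,\ldots,x)$. Your explicit treatment of finiteness via Lemma~\ref{lemma:tfN} and the coordinatewise computation of the supremum is a welcome addition that the paper leaves implicit.
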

\begin{proof}
  Let $A= \{a_1, \ldots, a_m\} \subseteq \web \pcsone$ be the set $\text{Supp}(x)$. For any $b \in \web \pcstwo$, we can deduce from the definition of $f$ that:
  \begin{align*}
    (f & \cdot x^!)_b = \sum_{k=0}^{\infty} \sum_{\mu = [c_1, \ldots, c_k] \in \mfink k {A}}\frac {\alpha_\mu}{k !}\cdot \ders {k} g(0 \mid e_{c_1}, \ldots e_{c_k} )_b \cdot x^\mu,
  \end{align*}
  where $\mfink k A$ stands for the set of multisets over $A$ of cardinality $k$.
Looking at the definition of $\alpha_\mu$, we see that this implies:
  \begin{equation}\label{eq:pcsbern1}
     (f \cdot x^!)_b  = \sum_{k=0}^{\infty} \sum_{(c_1, \ldots, c_k) \in  A^k}\frac {1}{k !} \ders {k} g(0 \mid e_{c_1}, \ldots e_{c_k} )_b \cdot \prod_{i=1}^k x_{c_i} 
  \end{equation}
  By Lemma~\ref{lemma:additivity}, we know that $\ders k g(0 \mid \cdot)$ is $k$-linear. As a consequence, and since $x = \sum_{i=1}^m x_{c_i}\cdot e_{c_i}$ and that moreover $A$ is \emph{finite}, we see that~\eqref{eq:pcsbern1} implies the result:
  $$(f \cdot x^!)_b  = \sum_{k=0}^{\infty} \frac {1}{k !} \ders {k} g(0 \mid x, \ldots x )_b  = (Tg(0 \mid x))_b.$$
\end{proof}

We are now going to apply the generalized Bernstein's theorem, as stated in Proposition~\ref{prop:ebt}, to the stable function $g$ from $\functor \pcsone$ to $\functor \pcstwo$. It tells us that:
\begin{equation}\label{eq:bernkey}
  \forall x \in \bouleopen{\pcstocone \pcsone}, \quad g( x) = Tg(0 \mid x).
  \end{equation}


Combining ~\eqref{eq:bernkey} with Lemma~\ref{lemma:densetaylor}, we obtain that:
\begin{equation}\label{eq:eqdense}
  \forall x \in \bouleopen{\pcstocone \pcsone} \text{ with } \text{Supp}(x) \text{ is finite},\, f \cdot x^! = g(x).
  \end{equation}

We can now use~\eqref{eq:eqdense} to show that $\mapm f$ and $g$ coincide on $\prog \pcsone$: the key point is that the subset of elements in $\prog \pcsone$ of norm smaller than $1$ and finite support is dense, and that moreover $g$ is Scott-continuous.
\begin{lemma}\label{lemma:fullnessaux2}
   $\forall x \in \prog \pcsone, \, f \cdot x^! = g(x)$, and moreover $f \in \pcoh_!(\pcsone, \pcstwo)$.
\end{lemma}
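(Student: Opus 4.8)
The plan is to deduce the general equality $f \cdot x^! = g(x)$ from the finite-support, open-ball case~\eqref{eq:eqdense} by a monotone approximation argument, using that the finite-support elements of norm strictly below $1$ are dense in $\prog\pcsone$ together with the sequential Scott-continuity of $g$. Fix $x \in \prog\pcsone$. Since $\web\pcsone$ is countable, choose an exhaustion $A_0 \subseteq A_1 \subseteq \cdots$ of $\web\pcsone$ by finite sets with union $\web\pcsone$, and a sequence of reals $r_n \uparrow 1$ with $r_n < 1$. Set $y_n = r_n \cdot x_{|A_n}$, the restriction of $x$ to $A_n$ scaled by $r_n$ (and $0$ outside $A_n$). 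Each $y_n$ has finite support; since $x_{|A_n} \order{\pcstocone\pcsone} x$ and the norm is monotone, $\norm{\pcstocone\pcsone}{y_n} = r_n\,\norm{\pcstocone\pcsone}{x_{|A_n}} \leq r_n < 1$, so $y_n \in \bouleopen{\pcstocone\pcsone}$. Moreover $(y_n)$ is non-decreasing for $\order{\pcstocone\pcsone}$ (which agrees with the pointwise order on $\prog\pcsone$), and its pointwise supremum is $x$, so $\sup_n y_n = x$ in $\pcstocone\pcsone$.

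By~\eqref{eq:eqdense} we have $f \cdot y_n^! = g(y_n)$ for every $n$, and I would pass to the supremum on both sides. On the right, $g$ is sequentially Scott-continuous and $(y_n)$ is non-decreasing with $\sup_n y_n = x$, hence $\sup_n g(y_n) = g(x)$. On the left, fix $b \in \web\pcstwo$: each coordinate $(y_n)_a$ is non-decreasing in $n$ with limit $x_a$, so the finite product $(y_n)^\mu$ increases to $x^\mu$ for every multiset $\mu$; as all coefficients $f_{\mu,b}$ are non-negative, the monotone convergence theorem for non-negative series yields $(f \cdot y_n^!)_b \uparrow (f \cdot x^!)_b$. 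Since the supremum in $\pcstocone\pcstwo$ is computed pointwise, combining the two sides gives $f \cdot x^! = \sup_n g(y_n) = g(x)$; in particular $f \cdot x^!$ is finite, being equal to the clique $g(x)$. This establishes $\mapm f = g$ on all of $\prog\pcsone$.

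It remains to check $f \in \pcoh_!(\pcsone,\pcstwo)$, which is now immediate: by Lemma~\ref{lemma:morphpcoh} it suffices that $f \cdot x^! \in \prog\pcstwo$ for every $x \in \prog\pcsone$, and we have just shown $f \cdot x^! = g(x)$, while $g \in \cstab(\functor\pcsone, \functor\pcstwo)$ maps $\boule{\functor\pcsone} = \prog\pcsone$ into $\boule{\functor\pcstwo} = \prog\pcstwo$. The main obstacle is the interchange of suprema on the left-hand side: one must verify that $(y_n)$ is genuinely monotone for the cone order and stays inside the open ball so that~\eqref{eq:eqdense} applies, and then justify $(f \cdot y_n^!)_b \to (f \cdot x^!)_b$ coordinatewise by monotone convergence on the non-negative double series, rather than through any uniform estimate.
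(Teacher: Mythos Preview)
Your proof is correct and follows essentially the same approach as the paper: approximate $x$ by a non-decreasing sequence $(y_n)$ of finite-support elements with norm strictly below $1$, apply~\eqref{eq:eqdense} to each $y_n$, and pass to the supremum using sequential Scott-continuity of $g$ on the right and monotone convergence on the left. The only cosmetic difference is that the paper cites the Scott-continuity of $x \mapsto x^!$ and of matrix application from~\cite{pcsaamohpc} for the left-hand side, whereas you argue it directly via monotone convergence of non-negative series; both are valid.
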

\begin{proof}
  
  Let be $x \in \prog \pcsone$. We define a sequence $(y_n)_n \in \NN$ of elements in $\prog \pcsone$, by taking:
  $$(y_n)_a = \begin{cases}
    (1 - \frac 1 {2^n})\cdot x_a \text{ if }\lambda(a) \leq n\\
    0 \text{ otherwise,}
  \end{cases}$$
  where we have fixed $\lambda$ an arbitrary enumeration of the elements of $\web \pcsone$\textemdash $\lambda$ exists since it is a countable set. Observe the the sequence $(y_n)_{n \in \NN}$ is non-decreasing, with $x = \sup_{n \in \NN} y_n$. 
  Moreover, for every $n$, $y_n$ has finite support and $\norm{\pcstocone \pcsone}{y_n} < 1$ .
  It means that for every $y_n$, we can use~\eqref{eq:eqdense}: we see that $g(y_n) = f \cdot y_n^!$. Since $g$ is a morphism in $\cstab$, $g$ is sequentially Scott-continuous, hence:
\begin{equation}\label{eq:auxf1}
  g(x) = \sup_{n \in \NN} g(y_n).
\end{equation}
  
Moreover, we know from~\cite{pcsaamohpc} that both $x \mapsto x^!$ and $x \mapsto u \cdot x$ are Scott continuous. It means that:
\begin{equation}\label{eq:auxf2}
  f\cdot x^! = \sup_{n \in \NN} f \cdot {y_n^!}.
  \end{equation}
  Combining~\eqref{eq:auxf1} and~\eqref{eq:auxf2}, we obtain $f \cdot x^! = g(x)$.
  Since  $g(\boule{\pcstocone \pcsone}) \subseteq \boule{\pcstocone \pcstwo}$, it implies also that $\mapm f(\prog \pcsone)\subseteq \prog \pcstwo$. Thus by Lemma~\ref{lemma:morphpcoh}  $f \in \pcoh_!(\pcsone, \pcstwo)$.
\end{proof}
Since we have indeed been able to show in Lemma~\ref{lemma:fullnessaux2} that for any fixed stable function $g$ in $\cstab(\functor \pcsone, \functor \pcstwo)$, there exists an $f \in \pcoh_!(\pcsone, \pcstwo)$ such that $\functor f = g$, we have indeed shown that $\functor$ is full.
\subsection{$\functor$ preserves the cartesian structure.}
We want now to give a stronger guarantee on the functor $\functor$: we want to show that it is a \emph{cartesian closed functor}, meaning that it embeds the cartesian closed category $\pcoh_!$ into the cartesian closed category $\cstab$ in such a way that:
\begin{itemize}
\item $\functor$ preserves the product: for every family $(\pcsone_i)_{i \in I}$ of PCSs, $\functor{(\prod_{i \in I}^{\pcoh_!} \pcsone_i)}$ is isomorphic to $\prod_{i \in I}^{\cstabm} \functor{\pcsone_i}$;
  \item $\functor$ preserves function spaces: for every $\pcsone, \pcstwo$ PCSs, $\functor{(\pcsone \Rightarrow \pcstwo)}$ is isomorphic to $\functor \pcsone \Rightarrow \functor \pcstwo$. 
  \end{itemize}

\begin{lemma}\label{lemma:fpresprod}
$\functor$ preserves cartesian products.
\end{lemma}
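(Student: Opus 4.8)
The plan is to produce an explicit isomorphism in $\cstab$ between $\functor(\pcsthree)$, where $\pcsthree = \prod_{i\in I}^{\pcoh_!}\pcsone_i$, and $\prod_{i\in I}^{\cstab}\functor\pcsone_i$, and to check that it is the canonical comparison map. Recall that $\web\pcsthree = \bigcup_{i\in I}\{i\}\times\web{\pcsone_i}$ and $\prog\pcsthree = \{x \mid \forall i\in I,\ \pi_i(x)\in\prog{\pcsone_i}\}$, with $\pi_i(x)_a = x_{(i,a)}$. I would set $\Phi:\pcstocone\pcsthree \to \prod_{i\in I}\pcstocone{\pcsone_i}$, $\Phi(x) = (\pi_i(x))_{i\in I}$; this is exactly the map induced, through the universal property of the product in $\cstab$, by the images $\functor(\mathrm{pr}_i)$ of the $\pcoh_!$-projections. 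Its candidate inverse $\Psi$ sends a tuple $(x_i)_{i\in I}$ to the vector $x$ defined by $x_{(i,a)} = (x_i)_a$. Both maps are visibly $\Rp{}$-linear and mutually inverse as set maps, so the only real content is (a) that $\Phi$ and $\Psi$ genuinely land in the two cones and (b) that both are $\cstab$-morphisms.

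The key step is to check that $\Phi$ is an \emph{isometry}, which simultaneously settles well-definedness. Using the characterisation $\norm{\pcstocone\pcsone}{x} = \inf\{\tfrac1r \mid r x\in\prog\pcsone\}$ from Definition~\ref{def:pcstocone}, together with the fact that $r x\in\prog\pcsthree$ holds iff $r\,\pi_i(x)\in\prog{\pcsone_i}$ for every $i$, I would argue
\[
\norm{\pcstocone\pcsthree}{x} \;=\; \inf\{\tfrac1r \mid \forall i\in I,\ r\,\pi_i(x)\in\prog{\pcsone_i}\} \;=\; \sup_{i\in I}\,\norm{\pcstocone{\pcsone_i}}{\pi_i(x)} \;=\; \norm{\prod_{i}\pcstocone{\pcsone_i}}{\Phi(x)}.
\]
The middle equality is the delicate point: for each $i$ the set $\{r>0 \mid r\,\pi_i(x)\in\prog{\pcsone_i}\}$ is an initial segment $(0,1/N_i]$ with $N_i = \norm{\pcstocone{\pcsone_i}}{\pi_i(x)}$, because $\prog{\pcsone_i} = \dual{\dual{\pcsone_i}}$ is downward closed for the pointwise order and topologically closed; intersecting these segments over $i$ gives $(0,1/\sup_i N_i]$, whence the claimed exchange of $\inf$ and $\sup$. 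In particular $\norm{\pcstocone\pcsthree}{x}<\infty$ forces $\sup_i\norm{\pcstocone{\pcsone_i}}{\pi_i(x)}<\infty$, so $\Phi(x)$ is a legitimate bounded element of the product cone; conversely a tuple $(x_i)_{i}$ of finite sup-norm $M$ has each $x_i/M\in\boule{\pcstocone{\pcsone_i}}=\prog{\pcsone_i}$, so $\Psi((x_i)_i)/M\in\prog\pcsthree$ and $\Psi$ indeed lands in $\pcstocone\pcsthree$.

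Finally I would observe that any $\Rp{}$-linear, norm-preserving map between cones is automatically a $\cstab$-morphism: it is pre-stable because all its higher-order differences vanish for $n\ge 2$ and reduce, for $n=1$, to the non-negative image of the increment; it is sequentially Scott-continuous because the coordinate projections $\pi_i$ and the coordinatewise reassembly commute with suprema of non-decreasing sequences, which are computed pointwise in both cones; and being an isometry it maps the unit ball into the unit ball. Applying this to $\Phi$ and $\Psi$ shows both are morphisms of $\cstab$, and since they are mutually inverse they witness $\functor(\pcsthree)\cong\prod_{i\in I}^{\cstab}\functor\pcsone_i$. The main obstacle is the norm identity of the second paragraph\textemdash concretely, justifying the $\inf$/$\sup$ exchange\textemdash which is where the down-closure and closedness of the clique sets $\prog{\pcsone_i}$ are essential; everything else is linearity bookkeeping.
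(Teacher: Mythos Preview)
Your proposal is correct and follows essentially the same route as the paper: you build the canonical comparison map $\Phi=\langle\functor(\pi_i)\rangle_{i\in I}$, write down its explicit coordinatewise inverse, and verify that both are linear, Scott-continuous, and norm-preserving (hence unit-ball preserving) $\cstab$-morphisms. The paper's proof is terser—it simply asserts the norm identity $\norm{\functor(\prod_i\pcsone_i)}{\Theta(y)}=\norm{\prod_i\functor\pcsone_i}{y}$ via the same chain of equalities—while you spell out more carefully the $\inf/\sup$ exchange using the down-closedness and closedness of each $\prog{\pcsone_i}$; this extra care is welcome but does not constitute a different argument.
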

\begin{proof}
  We fix a family $\fone = (\pcsone_i)_{i \in I}$ of PCSs.
  In order to construct an isomorphism, we have a canonical candidate, given by:
\begin{equation}\label{eq:morphismcartproduct}
  \Psi^{\fone} = \langle \functor (\pi_i) \mid i \in I \rangle \in \cstab( \functor{(\prod_{i \in I}^{\pcoh_!} \pcsone_i)} ,  \prod_{i \in I}^{\cstabm} \functor{\pcsone_i}) .
  \end{equation}
Let us now see that $\Psi^{\fone}$ is an isomorphism, \shortv{i.e. that it has an inverse}.\longv{ Looking at the definition of cartesian product in $\pcoh_!$ defined in Section~\ref{subsect:pcoh!}, and the one of cartesian product in $\cstab$, defined in Section~\ref{subsect:cstab}, we see that for every $x \in \boule \functor{(\prod_{i \in I}^{\pcoh_!} \pcsone_i)} $:
$$\Psi^\fone(x) = (y_i)_{i \in I} \quad \text{where} \quad \forall i \in I, \forall a \in \web {\pcsone_i}, \,  (y_i)_a = x_{(i,a)} .$$
We want now to show that $\Psi^{\fone}$ has an inverse.} The only candidate is $\Theta^{\fone}: y \in \boule (\prod_{i \in I}^{\cstab} \functor{\pcsone_i}) \mapsto \Theta(y) \in {(\functor{(\prod_{i \in I}^{\pcoh_!} \pcsone_i)} )}$, defined by:
$\forall i \in I, a \in \web{\pcsone_i}, \Theta(y)_{i,a} = (y_i)_a .$
 We see immediately that $\Theta^{\fone}$ is linear, hence pre-stable, and that moreover it is Scott-continuous. Besides, it is also preserves the unit ball, since $\forall y \in \boule \conetwo, \, \norm {\functor{(\prod_{i \in I}^{\pcoh_!} \pcsone_i)}}{\Theta^{\fone}(y)} = \norm {\prod_{i \in I}^{\cstab} \functor{\pcsone_i}}y$\shortv{ (the proof can be found in the long version)}.\longv{
We show now that $\Theta^\fone$ preserves the unit ball:
 \begin{align*}
  &\norm{\coneone}{\Theta^{\fone}(y)} =  \inf\{\frac 1 r \mid r \cdot \Theta^{\fone}(y) \in \prog \prod_{i \in I}^{\pcoh_!} \pcsone_i \} \\
  &\quad =\inf\{\frac 1 r \mid  \forall i \in I, r \cdot y_i \in \prog  \pcsone_i \} 
  =\sup_{i \in I} \norm{\functor \pcsone_i}{y_i} 
  =\norm{\prod_{i \in I}^{\cstab} \functor \pcsone_i} y. 
  \end{align*} }
Thus $\Theta^\fone$ is a morphism in $\cstab$.
\end{proof}
\begin{lemma}\label{lemma:fpresarr}
 $\functor$ preserves function spaces.
\end{lemma}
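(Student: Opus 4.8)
The plan is to build an explicit cone isomorphism
$\Psi\colon \functor(\pcsone \Rightarrow \pcstwo)\to(\functor\pcsone \Rightarrow \functor\pcstwo)$
out of the correspondence $\mapm{(\cdot)}$, which the previous subsections have already shown to be a bijection at the level of unit balls. Concretely, I would set $\Psi(g) = \mapm g$, where $\mapm g$ is the function $x \mapsto g\cdot x^!$, extending the notation $\mapm{(\cdot)}$ from cliques to the whole cone. First I would check $\Psi$ is well defined: by Definition~\ref{def:pcstocone} any $g \in \functor(\pcsone\Rightarrow\pcstwo)$ has the form $\alpha\cdot f$ with $\alpha\geq 0$ and $f\in\prog{(\pcsone\Rightarrow\pcstwo)}=\pcoh_!(\pcsone,\pcstwo)$, so $\Psi(g)=\alpha\,\mapm f$ is a non-negative scalar multiple of the stable function $\mapm f$ (Proposition~\ref{prop:fmsf}) and hence is itself a stable function, i.e. an element of $\functor\pcsone\Rightarrow\functor\pcstwo$. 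Linearity of $\Psi$ is then immediate, since $g\mapsto g\cdot x^!$ is linear coordinatewise and both the sum and the supremum in the function cone are computed pointwise.

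The key observation is that the restriction of $\Psi$ to unit balls is exactly the map $\mapm{(\cdot)}$. By the remark after Definition~\ref{def:pcstocone} and the definition of the arrow in $\pcoh_!$ we have $\boule{\functor(\pcsone\Rightarrow\pcstwo)}=\prog{(\pcsone\Rightarrow\pcstwo)}=\pcoh_!(\pcsone,\pcstwo)$, while the unit ball of the function cone is precisely $\cstab(\functor\pcsone,\functor\pcstwo)$. Faithfulness of $\functor$ (injectivity of $\mapm{(\cdot)}$, from~\cite{pcsaamohpc}) together with fullness (Lemma~\ref{lemma:fullnessaux2}) then says that $\mapm{(\cdot)}$ is a bijection between these two unit balls. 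Since every element of each of the two cones is a non-negative scalar multiple of a unit-ball element (for the function cone, a stable function of finite norm $\lambda$ equals $\lambda$ times its normalisation), $\Psi$ is in fact a linear bijection of the full cones whose restriction to unit balls is a bijection of unit balls.

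From this I would deduce norm preservation by a homogeneity argument: both $\Psi$ and $\Psi^{-1}$ send the unit ball into the unit ball, so for $g\neq 0$ with $r=\norm{}{g}$ the element $\tfrac1r g$ lies in $\boule{\functor(\pcsone\Rightarrow\pcstwo)}$, whence $\tfrac1r\Psi(g)=\Psi(\tfrac1r g)$ lies in the target unit ball and $\norm{}{\Psi(g)}\leq r$; applying the same reasoning to $\Psi^{-1}$ gives the reverse inequality, so $\norm{}{\Psi(g)}=\norm{}{g}$. It remains to see that $\Psi$ is an isomorphism in $\cstab$, i.e. that both $\Psi$ and its inverse are $\cstab$-morphisms. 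The map $\Psi$ is linear (hence pre-stable), norm-preserving (hence bounded and unit-ball-preserving), and sequentially Scott-continuous because $g\mapsto g\cdot x^!$ is Scott-continuous (monotone convergence on the defining sums, the suprema being pointwise). Its inverse $\Theta=\Psi^{-1}$ is the map recovering the matrix $f$ from a stable $g$ through its derivatives at $0$, exactly as in Definition~\ref{def:fullfunctor}; it is linear, and by the second item of Lemma~\ref{lemma:additivity} (linearity and directed Scott-continuity of $g\mapsto \ders k g(0\mid \vec u)$) it is also Scott-continuous, so $\Theta\in\cstab$ as well.

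I expect the only genuinely new content here to be bookkeeping: matching the two descriptions of the function space and checking that the unit-ball bijection $\mapm{(\cdot)}$ — which already packages the Bernstein/fullness machinery — lifts to a norm-preserving cone isomorphism. The two delicate points are norm preservation, which the homogeneity argument reduces to the unit-ball bijection, and Scott-continuity of the inverse $\Theta$, for which I rely on Lemma~\ref{lemma:additivity}; no further analytic work is needed.
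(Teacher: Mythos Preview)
Your proposal is correct and follows essentially the same route as the paper: the forward map is $f\mapsto\mapm f$ (the paper obtains it as the currying of $\functor(\mathrm{eval})\circ\Theta^{\pcsone\Rightarrow\pcstwo,\pcsone}$ and then unfolds it to exactly this), its inverse is the derivative formula of Definition~\ref{def:fullfunctor}, and Lemma~\ref{lemma:additivity} supplies linearity and Scott-continuity of that inverse. The only cosmetic difference is that the paper introduces $\Upsilon^{\pcsone,\pcstwo}$ categorically via currying (so its being a $\cstab$-morphism is automatic) and computes norm preservation of $\Xi^{\pcsone,\pcstwo}$ directly from the PCS norm, whereas you define the map explicitly and deduce norm preservation from the unit-ball bijection by homogeneity.
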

\begin{proof}
Let $\pcsone, \pcstwo$ two PCSs. As previously, there is a canonical candidate for the isomorphism: we define $\Upsilon^{\pcsone, \pcstwo}$ as the currying in $\cstab$ of the morphism:
$$ \functor{(\pcsone \Rightarrow \pcstwo)} \times \functor \pcsone\stackrel{\Theta^{\pcsone \Rightarrow \pcstwo, \pcsone}}{\xrightarrow{\hspace*{1.2cm}}} \functor(\pcsone \Rightarrow \pcstwo \times \pcsone) \stackrel{\functor(\text{eval}_{\pcsone, \pcstwo})}{\xrightarrow{\hspace*{1.3cm}}} \functor \pcstwo,$$
where $\Theta^{\pcsone \Rightarrow \pcstwo, \pcsone}$ is as defined in the proof of Lemma~\ref{lemma:fpresprod} above.

Unfolding the definition, we see that actually:
$\Upsilon^{\pcsone, \pcstwo}: f \in \boule{\functor(\pcsone \Rightarrow \pcstwo)} \mapsto \mapm f \in (\functor \pcsone \Rightarrow \functor \pcstwo)$. 
Since we have shown that $\functor$ is full and faithful, we can consider $\Xi^{\pcsone, \pcstwo}$ the inverse function of $\Upsilon^{\pcsone, \pcstwo}$.
Recall from the proof of the fullness of $\functor$ in Section~\ref{sect:proofconsext} that for every $\mu = [a_1, \ldots, a_k] \in \mfin {\web \pcsone}$, and $b \in \web \pcstwo$:
$${\Xi^{\pcsone, \pcstwo}(f)}_{\mu, b} = \frac {\alpha_{[a_1, \ldots, a_k]}}{k!}\left(\ders k f(0 \mid e_{a_1}, \ldots e_{a_k} ) \right)_b. $$ Recall from Lemma~\ref{lemma:additivity} that for any $\vec u \in  \boule {(\coneone_x^k)}$, the function $f \in \cstab(\functor \pcsone, \functor \pcstwo) \mapsto \ders k f(x \mid \vec u) \in \functor \pcstwo$ is linear and Scott-continuous. As a consequence, $\Xi^{\pcsone, \pcstwo}$ too is linear and Scott-continuous. \shortv{Moreover, it also preserves the unit ball (see the proof in the long version), hence is a morphism in $\cstab(\functor \pcsone \Rightarrow \functor \pcstwo, \functor (\pcsone \Rightarrow \pcstwo))$.}\longv{

  To know that $\Xi^{\pcsone, \pcstwo}$ is stable, we have still to show that it is bounded: we are actually going to show that it preserves the norm. Indeed, for every $f \in \boule (\functor \pcsone \Rightarrow \functor \pcstwo)$, we see using the definition of the norm on a cone obtained from a PCS (see Definition~\ref{def:pcstocone}), that:
  \begin{equation}\label{eq:nonexpansivexi1}
\norm{\functor(\pcsone \Rightarrow \pcstwo)}{\Xi^{\pcsone, \pcstwo}(f)} = \inf \{\frac 1 r \mid r \cdot \Xi^{\pcsone, \pcstwo}(f) \in \prog(\pcsone \Rightarrow \pcstwo)\}  
  \end{equation}
  It was shown in ~\cite{pcsaamohpc} that:
  \begin{equation}\label{eq:nonexpansivexi}
 r \cdot \Xi^{\pcsone, \pcstwo}(f) \in \prog(\pcsone \Rightarrow \pcstwo) \,\Leftrightarrow\, \forall x \in \prog \pcsone, \, (r \cdot \Xi^{\pcsone, \pcstwo}(f)) \cdot x^! \in \prog \pcstwo.
 \end{equation}
  We see that $(r \cdot \Xi^{\pcsone, \pcstwo}(f)) \cdot x^! = r \cdot f(x)$ since $\Xi^{\pcsone, \pcstwo}$ has been defined as the inverse of $\Upsilon^{\pcsone, \pcstwo}$. It means that we can rewrite~\eqref{eq:nonexpansivexi} as:
   \begin{equation}\label{eq:nonexpansivexibis}
 r \cdot \Xi^{\pcsone, \pcstwo}(f) \in \prog(\pcsone \Rightarrow \pcstwo) \,\Leftrightarrow\, \forall x \in \prog \pcsone, \, r\cdot f(x) \in \prog \pcstwo.
   \end{equation}
Since for every PCS $\pcsthree$, it holds that $\prog \pcsthree = \boule \functor \pcsthree$, we can now use~\eqref{eq:nonexpansivexibis} to rewrite~\eqref{eq:nonexpansivexi1} as:
   \begin{equation}\label{eq:nonexpansivexi3}
\norm{\functor(\pcsone \Rightarrow \pcstwo)}{\Xi^{\pcsone, \pcstwo}(f)} = \inf \{\frac 1 r \mid \forall x \in \boule \functor \pcsone, r \cdot f(x) \in \boule \functor \pcstwo\}  
  \end{equation}
   Looking now at the definition of the norm in the cone $\functor \pcsone \Rightarrow \functor \pcstwo$, we can complete the proof using~\eqref{eq:nonexpansivexi3} and the homogeneity of the norm. Indeed:
   \begin{align*}
     \norm{\functor(\pcsone \Rightarrow \pcstwo)}{\Xi^{\pcsone, \pcstwo}(f)} &= \inf \{\frac 1 r \mid \norm{\functor \pcsone \Rightarrow \functor \pcstwo}{ r \cdot f} \leq 1\} \\
     & = \inf \{\frac 1 r \mid r \cdot  \norm{\functor \pcsone \Rightarrow \functor \pcstwo}{ f} \leq 1\} \\
     &= \norm {\functor \pcsone \Rightarrow \functor \pcstwo}{ f}
  \end{align*}
}
  \end{proof}

As a direct consequence of Lemma~\ref{lemma:fpresprod} and Lemma~\ref{lemma:fpresarr}, we can state the following theorem:

\begin{theorem}\label{th:fullness}
$\functor$ is full and faithful, and it respects the cartesian closed structures.
  \end{theorem}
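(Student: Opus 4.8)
The plan is to assemble the theorem from the pieces established throughout this section, since each of its three assertions—faithfulness, fullness, and preservation of the cartesian closed structure—has already been reduced to an earlier result. So the proof is essentially an act of collection, and I would present it in that order.

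First I would record faithfulness. Recall that $\functor$ acts on morphisms by $\functor f = \mapm f$, and it was shown in~\cite{pcsaamohpc} that $\mapm{(\cdot)}$ is a bijection from $\pcoh_!(\pcsone, \pcstwo)$ onto $\mathcal{E}^{\pcsone, \pcstwo}$; in particular it is injective, so $\functor$ is faithful with no further work. Next I would treat fullness, which is the substantial part. Fixing $\pcsone$, $\pcstwo$ and an arbitrary $g \in \cstab(\functor \pcsone, \functor \pcstwo)$, I note that by Lemma~\ref{lemma:pcsdccones} both $\functor \pcsone$ and $\functor \pcstwo$ are directed-complete lattice cones, so the derivatives $\ders k g$ of Definition~\ref{def:dersn} exist and Definition~\ref{def:fullfunctor} produces a candidate matrix $f \in \Rp{\mfin{\web \pcsone} \times \web \pcstwo}$ from the normalized Taylor coefficients of $g$ at $0$. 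I would then invoke Lemma~\ref{lemma:fullnessaux2}, which states both that $f \in \pcoh_!(\pcsone, \pcstwo)$ and that $f \cdot x^! = g(x)$ for every $x \in \prog \pcsone$, i.e.\ $\mapm f = g$. This exhibits a preimage of $g$ under $\functor$, so $\functor$ is full. Finally, preservation of the cartesian closed structure is exactly the conjunction of Lemma~\ref{lemma:fpresprod} (products) and Lemma~\ref{lemma:fpresarr} (function spaces): these supply the canonical isomorphisms $\Psi^{\fone}$ and $\Upsilon^{\pcsone, \pcstwo}$, whose inverses $\Theta^{\fone}$ and $\Xi^{\pcsone, \pcstwo}$ were verified to be morphisms in $\cstab$.

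The main obstacle is entirely contained in the fullness step, and more precisely in the chain standing behind Lemma~\ref{lemma:fullnessaux2}: one must know that the stable function $g$ is recovered by its own Taylor series. This rests on the extended Bernstein theorem (Proposition~\ref{prop:ebt}), which yields $g(x) = Tg(0 \mid x)$ on the open ball $\bouleopen{\functor \pcsone}$, together with the identification of $Tg(0 \mid x)$ with $f \cdot x^!$ on finitely supported $x$ (Lemma~\ref{lemma:densetaylor}), and a density-plus-Scott-continuity argument to pass from finitely supported cliques to all of $\prog \pcsone$.

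The delicate points there, which I would flag as the places where the cone structure really earns its keep, are the $k$-linearity of the derivatives (Lemma~\ref{lemma:additivity}), used to turn the multilinear Taylor data into the monomial coefficients $f_{\mu,b}$, and the exchange of finite additivity with the infinite sum over $k$ in Lemma~\ref{lemma:densetaylor}, which is why one must first restrict to $x$ of finite support before extending by continuity. Everything else—faithfulness and the two preservation statements—is comparatively routine once these analytic facts are in hand.
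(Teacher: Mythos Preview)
Your proposal is correct and matches the paper's approach exactly: the paper states Theorem~\ref{th:fullness} as a direct consequence of Lemma~\ref{lemma:fpresprod} and Lemma~\ref{lemma:fpresarr} for the cartesian closed structure, with fullness and faithfulness already established earlier in Section~\ref{sect:proofconsext} via the bijection $\mapm{(\cdot)}$ and Lemma~\ref{lemma:fullnessaux2}. Your write-up is simply a more explicit assembly of the same pieces the paper invokes.
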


\section{Adding Measurability Requirements}
In~\cite{pse}, the authors developed a sound and adequate model of $\pcf_{\text{sample}}$ based on stable functions. However, as explained in more details in~\cite{pse}, they need to add to their morphisms some \emph{measurability requirements}, both on cones and on functions between them, since the denotational semantics of the $\texttt{let}(x, \termone, \termtwo)$ construct uses an integral, to model the fact that $\termone$ is evaluated before being passed as argument to $\termtwo$.

We call measurable functions $\RR^n \rightarrow \RR^k$ the functions measurable when both $\RR^n$ and $\RR^k$ are endowed with the Borel $\Sigma$-algebra associated with the standard topology of $\RR$. The relevant properties of the class of measurable functions $\RR^n \rightarrow \RR^k$ is that they are closed by arithmetic operations, composition, and pointwise limit, see for example Chapter 21 of~\cite{schechter1996handbook}.
\subsection{The category $\cstabm$}

$\cstabm$ is built as a \emph{refinement} of the category $\cstab$.
The objects of $\cstabm$ are going to be complete cones, endowed with a family of \emph{measurability tests}.

If $\coneone$ is a complete cone, we denote by $\coneone'$ the set of linear and Scott-continuous functions $\coneone \rightarrow \RR_+$. 
\begin{definition}\label{def:meascone}
  A \emph{measurable cone} (MC) is a pair consisting of a cone $\coneone$, and a collection of \emph{measurability tests} $ (\meastests \coneone n )_{n \in \NN})$, where for every $n$, $\meastests \coneone n \subseteq {{\coneone'}^{\RR^n }}$, such that:
  \begin{itemize}
  \item for every $n \in \NN$, $0 \in  \meastests \coneone n$;
  \item for every $n,p \in \NN$, if $l \in \meastests \coneone n$, and $h : \RR^p \rightarrow \RR^n$ is a measurable function, then $l \circ h \in \meastests \coneone p$;
    \item for any $l \in \meastests \coneone n$, and $x \in \coneone$, the function $u \in \RR^n \mapsto l(u)(x) \in \RR$ is measurable.
    \end{itemize}
\end{definition}
\begin{example}[from~\cite{pse}]
  Let $X$ be a measurable space.
 We endow the cone of finite measures $\meas X$  with the family $\meastests{X} {}$ of measurable tests defined as:
  $$\meastests {X}n = \{\epsilon_U \mid U \in \Sigma_X \}  \quad \text{where}\quad \epsilon_U(\vec r)(\mu) = \mu(U),$$
  where $\Sigma_X$ is the set of all measurable subsets of $X$. Observe that in this case, the measurable tests correspond to the measurable sets. In the following, we will denote $\measm X$ the measurable cone $(\meas X, (\meastests X n)_{n \in \NN})$.
  \end{example}
We define now \emph{measurable paths}, which are meant to be the \emph{admissible} ways to send $\RR^n$ into a MC $\coneone$.
\begin{definition}[Measurable Paths]
Let be $(\coneone, (\meastests \coneone n)_{n \in \NN})$ a measurable cone. A \emph{measurable path of arity $n$ at $\coneone$} is a function $\pathone:\RR^n \rightarrow \coneone$, such that $\gamma(\RR^n)$ is bounded in $\coneone$, and for every $k \in \NN$, for every $l \in \meastests \coneone k$, the function $(\vec r, \vec s) \in \RR^{k+n} \mapsto l(\vec r) (\pathone(\vec s)) \in \Rp{}$ is a measurable function.
\end{definition}
We denote $\pathes \coneone n$ the set of measurable paths of arity $n$ for the MC $\coneone$ . When a measurable path $\pathone$ verify  $\pathone(\RR^n) \subseteq \boule \coneone$ , we say it is \emph{unitary}. 
Using measurable paths, the authors of ~\cite{pse} add \emph{measurability requirements} to their definition of stable functions.
\begin{definition}\label{def:measstabfunctions}
  Let be $\coneone, \conetwo$ two MCs. 
A stable function $f:\boule \coneone \rightarrow \conetwo$ is \emph{measurable} if for all unitary $\pathone \in \pathes \coneone n$,  $f \circ \gamma \in \pathes \conetwo n$.
  \end{definition}
The category $\cstabm$ is therefore the category whose objects are MCs, and whose morphisms are measurable stable functions between MCs.
\begin{example}
  Recall the function $\semm {\texttt{real}}$ defined in Section~\ref{sect:overview}:
  $$\semm {\texttt{real}}: \mu \in \meas \NN \mapsto (U \in \Sigma_\RR \mapsto \sum_{n \in \NN \cap U} \mu(n) ) \in \meas {\RR}.$$
  We can see that $\semm{\texttt{real}}$ is a measurable function from $\measm \NN$ into $\measm \RR$. Moreover it is linear and Scott-continuous, and norm-preserving, which makes it a morphism in $\cstabm$. In the same way, taking  $\measm \NN$ as the denotational semantics of type $N$, we could complete the denotationnal semantics given in~\cite{pse} for $\pcf_{\texttt{sample}}$ in $\cstabm$ into a denotational semantics for $\pcf_{\oplus, \texttt{sample}}$.

  Observe that $\semm{\texttt{real}}$ would not be measurable, if we endowed $\meas \NN$ with for instance $\{0\}$ as measurability tests instead of $\meastests \NN{}$: indeed in that case, every $\gamma: \RR^n \rightarrow \meas \NN$ would be a measurability path. As a consequence, to be a measurable function, $\semm {\texttt{real}}$ should verify: for every arbitrary function $\gamma: \RR^n \rightarrow \meas \NN$,
$\semm{\texttt{real}} \circ \gamma$ is a measurable path on $\measm \RR$. However, we can see this is not the case,
for instance by considering $\gamma$ of the form $\gamma (s)=\alpha(s)\cdot \dirac{0}$, where $\alpha: \RR \rightarrow \Rp{}$ is not Borel measurable.
  
  \end{example}

In~\cite{pse}, the cartesian
closed structure of $\cstabm$ is derived from that of $\cstab$ by endowing its exponentials and products with the measurability tests presented in Figure~\ref{fig:cccstabm}.
\begin{figure}[!h]
\begin{center}
  \fbox{
  \begin{minipage}{0.45 \textwidth}
    \footnotesize
    \begin{center}
      $$\meastests {\prod_{i \in I} \coneonem_i} n = \{
      \bigoplus_{i \in I} l_i \mid \forall i \in I, l_i \in \meastests {\coneonem_i} n \} \quad \text{ with } I \text{ finite set.}$$
      $$\meastests {\coneonem \Rightarrow_m \conetwom} n = \{\trmeas \gamma m  \mid \gamma \in \pathes{\coneonem} n, m \in \meastests {\conetwom} n \}, $$
      with
$(\bigoplus_{i \in I} l_i(\vec r))((x_i)_{i \in I}) = \sum_{i \in I} l_i(\vec r)(x_i) \in \Rp{};$
     { $\text{ and }(\trmeas \gamma m) (\vec r)(f) = m(\vec r)(f(\gamma(\vec r))) .$}
    \end{center}
  \end{minipage}}
  \end{center}
  \caption{Cartesian Closed structure of $\cstabm$.}\label{fig:cccstabm}
  \end{figure}

\subsection{$\pcoh_!$ is a full subcategory of $\cstabm$}
We want now to convert the functor $\functor: \pcoh_! \rightarrow \cstab$ into a functor $\functorm: \pcoh_! \rightarrow \cstabm$. To build $\functorm$, we are going to endow each $\functor \pcsone$ with measurability tests, in such a way that $\functor(f)$ will be a \emph{measurable} stable function for any morphism $f \in \pcoh_!$.

Observe that this requirement does not determine uniquely the choice of measurability tests. For instance, it would be verified if we choose $\{0\}$ as measurability tests for every $\functor \pcsone$. However, as explained in Section~\ref{sect:overview},
we want also $\functorm (\NN^{\pcoh})$ to be isomorphic to $\semm{N}$: we would like to be able to inject any discrete distribution on $\NN$ into a distribution on $\RR$.
%
A natural way to ensure this is to use the discrete structure of the web to give the following definition of the MC arising from a PCS.
\begin{definition}
  For any $\pcsone \in \pcoh$, we define $\pcstoconem \pcsone$ as the measurable cone $\pcstocone \pcsone$ endowed with the family ${\meastests \pcsone n}_{n \in \NN}$ of measurability tests defined as
   $\meastests \pcsone n = \{0 \} \cup\{ \epsilon_a \mid a \in \web \pcsone\}$, where $\epsilon_a(\vec r,x) = x_a.$
  \end{definition}
We see that the $\epsilon_a$ are indeed linear (i.e commuting with linear combinations), and moreover Scott-continuous: hence they are indeed element of $\pcstocone {\pcsone}'$. It is easy to verify that the other conditions are verified, and so $\pcstoconem \pcsone$ is indeed a MC.

\begin{lemma}\label{lemma:mpforpcs}
  Let be $\pcsone$ a PCS. Then $\pathes {\pcstoconem \pcsone} n$ is the set of those $\gamma: \RR^n \rightarrow \pcstocone \pcsone$ such that:
  \begin{itemize}
  \item $\exists \lambda \in \RR, \gamma(\RR^n) \subseteq \lambda \boule \pcstocone \pcsone$
  \item $\forall a \in \web \pcsone$, $\gamma_a: \vec r \in \RR^n \mapsto \gamma(\vec r)_a \in \Rp{}$ is measurable.
    \end{itemize}
\end{lemma}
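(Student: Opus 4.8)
The plan is to unfold the definition of a measurable path at the concrete cone $\pcstoconem\pcsone$ and to observe that, because the chosen tests $\meastests\pcsone k = \{0\}\cup\{\epsilon_a \mid a \in \web\pcsone\}$ are just coordinate evaluations, the measurability clause of the definition degenerates into coordinatewise measurability. I would first dispose of the boundedness requirement: asking $\gamma(\RR^n)$ to be bounded in $\pcstocone\pcsone$ is by definition asking for some $\lambda$ with $\gamma(\RR^n) \subseteq \lambda\boule{\pcstocone\pcsone}$, so the first bullet of the statement matches the boundedness clause of a measurable path verbatim. It then remains to match the second bullet with the measurability clause.

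The key observation is that every nonzero test is independent of its $\RR^k$-argument and reads off a single coordinate: for $l = \epsilon_a$ we have, for all $k$ and all $(\vec r, \vec s) \in \RR^{k+n}$,
$$\epsilon_a(\vec r)(\gamma(\vec s)) = \gamma(\vec s)_a = \gamma_a(\vec s),$$
while the test $0$ yields the constant function $0$. Hence the function $(\vec r, \vec s) \mapsto l(\vec r)(\gamma(\vec s))$ whose measurability the definition demands is, in both cases, a function of $\vec s$ alone\textemdash either identically $0$, or $\gamma_a$ composed with a projection.

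From here the two implications are immediate. For the forward direction, assuming $\gamma$ is a measurable path, I would instantiate the measurability clause with $l = \epsilon_a$ (for any fixed $k$) to get that $(\vec r,\vec s) \mapsto \gamma_a(\vec s)$ is measurable on $\RR^{k+n}$, and then precompose with the continuous (hence measurable) inclusion $\vec s \mapsto (\vec 0, \vec s)$ to recover measurability of $\gamma_a$ on $\RR^n$. Conversely, assuming $\gamma$ bounded with every $\gamma_a$ measurable, I would check the clause test by test: the test $0$ gives the constant $0$, and a test $\epsilon_a$ gives $\gamma_a \circ \pi$, with $\pi : \RR^{k+n} \to \RR^n$ the projection onto the last $n$ coordinates, measurable as a composite of measurable maps.

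There is no genuine obstacle in this lemma; its entire content is the reduction above. The only point worth flagging is \emph{why} it goes through: the tests chosen for $\pcstoconem\pcsone$ form a countable family of coordinate evaluations that never couples distinct coordinates, so joint measurability over the enlarged space $\RR^{k+n}$ collapses to separate measurability of each $\gamma_a$ over $\RR^n$. This is precisely the design feature of the tests $\meastests\pcsone n$ that makes the measurable paths of $\pcstoconem\pcsone$ as permissive as possible while still reflecting the discrete structure of the web.
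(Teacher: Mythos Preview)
The paper states this lemma without proof, treating it as an immediate consequence of the definitions. Your argument is correct and is exactly the unfolding the paper leaves implicit: the boundedness clause is literal, and since each nonzero test $\epsilon_a$ ignores its $\RR^k$-argument, the joint measurability requirement collapses to coordinatewise measurability of the $\gamma_a$, with the two directions handled by precomposing with the section $\vec s \mapsto (\vec 0,\vec s)$ and the projection $\pi$ respectively.
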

Two MCs with the same underlying cone, but different measurability tests may be isomorphic in $\cstab$: it is enough for them to have the same \emph{measurable paths}. It is what happens in the example below, where we consider $\pcstoconem{\NN^\pcoh}$ and $\measm \NN$. It is actually also what happens at higher-order types, as we will explain in Section~\ref{subsect:fmcc}.
\begin{example} 
The two measurable cones $\pcstoconem {\NN^\pcoh}$ and $\measm \NN$ have the same underlying cone, but they do not have the same measurable tests. Indeed:
$$
  \meastests {\pcstoconem{\NN^\pcoh}}n = \{\epsilon_n \mid n \in \NN \}; \quad \meastests{\measm \NN} n = \{ \epsilon_U \mid U \subseteq \NN\}.$$
  But we can prove that they have the same measurable paths. It is immediate that $\pathes{\measm \NN} n \subseteq \pathes {\pcstoconem{\NN^\pcoh}} n$, since $\meastests {\pcstoconem{\NN^\pcoh}}n $ is a subset of $ \meastests{\measm \NN} n $.  We detail now the proof of
  the reverse inclusion. Let  $\gamma \in \pathes  {\pcstoconem{\NN^\pcoh}}n$. We have to show: for every $U \subseteq \NN$, the function $$\vec r, \vec s \in \RR^{k+n} \mapsto \epsilon_U(\vec r)(\gamma(\vec s)) \qquad \text{is Borel measurable.}$$
  The key observation now is that $\epsilon_U(\vec r)(\gamma(\vec s) = \sum_{m \in U} \epsilon_m(\vec r)(\gamma(\vec s))$.
  Since $\gamma \in \pathes  {\pcstoconem{\NN^\pcoh}}n$ it holds that for every $m \in \NN$, the function $((\vec r,\vec s) \in \RR^{k+n} \mapsto \epsilon_m(\vec r,\gamma(\vec s)) \in \Rp{})$ is Borel measurable. Since the class of Borel measurable functions are closed by finite sum and pointwise limit, it leads to the result.


  \end{example}

\begin{lemma}\label{lemma:functormorph1}
  Let  $\pcsone, \pcstwo$ be two PCSs, and
  $f \in \pcoh_!(\pcsone, \pcstwo)$. Then $\functor f$ is measurable from $\pcstoconem \pcsone$ into $\pcstoconem \pcstwo$. 
\end{lemma}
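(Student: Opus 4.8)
The plan is to unfold Definition~\ref{def:measstabfunctions}: I must show that for every $n$ and every unitary measurable path $\pathone \in \pathes{\pcstoconem \pcsone}{n}$, the composite $\functor f \circ \pathone = \mapm f \circ \pathone$ lies in $\pathes{\pcstoconem \pcstwo}{n}$. The key simplification is that Lemma~\ref{lemma:mpforpcs} gives an entirely concrete description of the measurable paths of a cone arising from a PCS: $\mapm f \circ \pathone$ is such a path precisely when (i) its image is bounded in $\pcstocone \pcstwo$, and (ii) for each $b \in \web \pcstwo$ the scalar map $\vec r \mapsto (\mapm f(\pathone(\vec r)))_b$ is Borel measurable. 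So the whole proof reduces to checking these two conditions.

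Condition (i) is immediate. Since $\pathone$ is unitary, $\pathone(\RR^n) \subseteq \boule{\pcstocone \pcsone} = \prog \pcsone$; and since $f \in \pcoh_!(\pcsone, \pcstwo)$, Lemma~\ref{lemma:morphpcoh} gives $\mapm f(x) = f \cdot x^! \in \prog \pcstwo = \boule{\pcstocone \pcstwo}$ for every $x \in \prog \pcsone$. Hence $(\mapm f \circ \pathone)(\RR^n) \subseteq \boule{\pcstocone \pcstwo}$ and we may take $\lambda = 1$.

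Condition (ii) is the heart of the argument, and I would carry it out using the power-series representation of $\mapm f$ recalled in Section~\ref{subsect:pcoh!}: for each $b \in \web \pcstwo$, $(\mapm f(x))_b = (f \cdot x^!)_b = \sum_{\mu \in \mfin{\web \pcsone}} f_{\mu,b}\, x^\mu$ with $x^\mu = \prod_{a \in \supp \mu} x_a^{\mu(a)}$. Substituting $x = \pathone(\vec r)$, I analyse this series termwise. By Lemma~\ref{lemma:mpforpcs} each coordinate map $\vec r \mapsto \pathone(\vec r)_a$ is Borel measurable; since every multiset $\mu$ has finite support, $\vec r \mapsto \pathone(\vec r)^\mu$ is a finite product of measurable maps, hence measurable, and scaling by the constant $f_{\mu,b} \geq 0$ preserves this. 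Each partial sum over $\{\mu \mid \card \mu \leq N\}$ is a finite sum of measurable functions, hence measurable, and the full series is the pointwise increasing limit of these partial sums. This limit is finite at every $\vec r$ because $\mapm f(\pathone(\vec r)) \in \prog \pcstwo$, so it genuinely converges; as Borel measurable functions are closed under finite sums, finite products, and pointwise limits, the limit $\vec r \mapsto (\mapm f(\pathone(\vec r)))_b$ is measurable.

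The one point that needs care\textemdash and the main (if mild) obstacle\textemdash is the passage from finite to countable sums: measurability is transparent for each monomial and for each finite partial sum, but the countable sum requires invoking closure under pointwise limits, which in turn relies on the convergence guaranteed precisely by $f$ being a $\pcoh_!$-morphism. Once both conditions hold, Lemma~\ref{lemma:mpforpcs} yields $\mapm f \circ \pathone \in \pathes{\pcstoconem \pcstwo}{n}$; as this holds for all $n$ and all unitary $\pathone$, Definition~\ref{def:measstabfunctions} gives that $\functor f$ is measurable from $\pcstoconem \pcsone$ into $\pcstoconem \pcstwo$.
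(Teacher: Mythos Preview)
Your proposal is correct and follows essentially the same approach as the paper's own proof: both reduce the measurability check to the concrete characterization of measurable paths given by Lemma~\ref{lemma:mpforpcs}, verify boundedness using that $\mapm f$ maps $\prog\pcsone$ into $\prog\pcstwo$, and obtain coordinate-wise measurability from the power-series expansion of $\mapm f$ together with closure of Borel-measurable functions under products, finite sums, and limits of non-decreasing sequences. Your write-up is in fact slightly more careful than the paper's in making explicit why the countable-sum step is legitimate.
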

\longv{\begin{proof}
  We have to show that $\functor f$ preserves measurable paths. Let $\gamma$ a unitary path in $\pathes {\pcstoconem \pcsone} n$: our goal is to show that $f \circ \gamma \in \pathes{\pcstoconem \pcstwo} n$. Recall that Lemma~\ref{lemma:mpforpcs} gives us a characterization of $\pathes{\pcstoconem \pcstwo} n$. Since $\gamma$ and $\functor f$ are bounded, we see immediately that $\functor f \circ \gamma$ is bounded. Let $b$ be in $\web \pcstwo$. We see that:
  $$(\functor f\circ \gamma)_b (\vec r) = \sum_{\mu \in \mfin {\web \pcsone}} f_{\mu,b}\cdot \prod_{a \in \text{Supp}(\mu)} {\gamma_a(\vec r)}^{\mu(a)} .$$
  Since $\gamma \in \pathes{\pcstoconem \pcsone} n$, it holds that $\gamma_a : \RR^n \rightarrow \Rp{}$ is measurable for all $a \in \web \pcsone$. We conclude by using the fact that the class of measurable functions $\RR^n \rightarrow \Rp{}$ is closed under multiplication, finite sums and limit of non-decreasing sequences: it tells us that $\vec r \in \RR^n \mapsto (\functor f\circ \gamma)_b (\vec r) \in \Rp{}  $ is measurable, and the result folds. 
\end{proof}}
\shortv{
    The proof can be found in the long version. It uses the characterization of $\pathes{\pcstoconem \pcsone} n$ given in Lemma~\ref{lemma:mpforpcs}.
}

\begin{theorem}
  The functor $\functorm : \pcoh_! \rightarrow \cstabm$ defined as $\functorm \pcsone = \pcstoconem \pcsone$, and $\functorm f = \functor f$, is full and faithful.
  \end{theorem}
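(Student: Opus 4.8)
The plan is to reduce everything to the already-established fullness and faithfulness of $\functor$ (Theorem~\ref{th:fullness}) together with the measurability-preservation result of Lemma~\ref{lemma:functormorph1}. The first thing I would record is that $\functorm$ is well-defined and functorial: by Lemma~\ref{lemma:functormorph1}, for any $f \in \pcoh_!(\pcsone, \pcstwo)$ the stable function $\functor f$ is measurable as a map $\pcstoconem \pcsone \rightarrow \pcstoconem \pcstwo$, hence a genuine morphism of $\cstabm$. Since $\functorm$ agrees with $\functor$ on morphisms and the underlying cone of $\pcstoconem \pcsone$ is exactly $\functor \pcsone$, functoriality of $\functorm$ is inherited directly from that of $\functor$.

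Next I would treat faithfulness. Let $\forgetful : \cstabm \rightarrow \cstab$ be the forgetful functor sending a measurable cone to its underlying cone and a measurable stable function to itself. It is faithful, being the identity on morphisms, and by construction $\forgetful \circ \functorm = \functor$. Hence if $\functorm f = \functorm g$ then, forgetting measurability, $\functor f = \functor g$, and the faithfulness of $\functor$ forces $f = g$. Equivalently, $\functorm f$ and $\functor f$ are the \emph{same} underlying function, so faithfulness of $\functorm$ is literally faithfulness of $\functor$.

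The crux is fullness, and here the key observation is that endowing the cones $\functor \pcsone$ with measurability tests can only \emph{restrict} the relevant hom-sets: every measurable stable function is in particular stable with matching underlying cones, so $\cstabm(\pcstoconem \pcsone, \pcstoconem \pcstwo) \subseteq \cstab(\functor \pcsone, \functor \pcstwo)$. So fix $g \in \cstabm(\pcstoconem \pcsone, \pcstoconem \pcstwo)$. Applying $\forgetful$, $g$ is in particular a stable function in $\cstab(\functor \pcsone, \functor \pcstwo)$. By fullness of $\functor$ (Theorem~\ref{th:fullness}) there is $f \in \pcoh_!(\pcsone, \pcstwo)$ with $\functor f = g$ as functions $\boule{\pcstocone \pcsone} \rightarrow \pcstocone \pcstwo$. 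By Lemma~\ref{lemma:functormorph1} the function $\functorm f = \functor f$ is measurable, hence a morphism of $\cstabm$; since it coincides pointwise with $g$ and both are $\cstabm$-morphisms, $\functorm f = g$ in $\cstabm$. This exhibits $g$ in the image of $\functorm$, proving fullness.

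I do not expect a genuine obstacle in this final step: the whole difficulty has already been discharged, on one side by Theorem~\ref{th:fullness} (which settles fullness at the level of $\cstab$ via the extended Bernstein theorem of Proposition~\ref{prop:ebt}), and on the other by Lemma~\ref{lemma:functormorph1} (which guarantees that each $\functor f$ remains a legitimate $\cstabm$-morphism for the chosen measurability tests on the $\pcstoconem \pcsone$). The only point that genuinely deserves explicit care is confirming that the measurability requirements do not \emph{enlarge} the class of representable morphisms beyond $\cstab$, so that no new function could escape the reach of Theorem~\ref{th:fullness}; but this is automatic, since every $\cstabm$-morphism forgets to a $\cstab$-morphism, so fullness of $\functor$ over the larger hom-set $\cstab(\functor \pcsone, \functor \pcstwo)$ a fortiori yields fullness of $\functorm$ over the smaller one.
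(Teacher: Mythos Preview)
Your proposal is correct and follows essentially the same approach as the paper: both arguments factor $\functor = \forgetful \circ \functorm$ through the forgetful functor, use faithfulness of $\forgetful$ together with the already-established full-faithfulness of $\functor$, and rely on Lemma~\ref{lemma:functormorph1} for well-definedness. The only cosmetic difference is that the paper phrases the fullness step by contradiction whereas you give it directly, which is arguably cleaner.
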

\begin{proof}
  Observe that we can decompose the functor $\functor$ as $\functor = \forgetful \circ \functorm$, where $\forgetful$ is the forgetful functor from $\cstabm$ to $\cstab$. We know from Section~\ref{sect:proofconsext} that $\functor$ is full and faithful. Moreover, it holds that $\forgetful$ is faithful. From there, we are able to deduce the result\shortv{ (see the long version for more details).}\longv{:
    \begin{itemize}
    \item $\functorm$ is faithful: it is implied by the fact that $\functor$ is faithful.
      \item $\functorm$ is full: indeed suppose that it is not the case: then there exist two PCSs $\pcsone, \pcstwo$, and  $f \in \cstabm(\functorm \pcsone, \functorm \pcstwo$, such that $f$ is not in the image by $\functorm$ of $\pcoh(\pcsone, \pcstwo)$. Then we consider $g \in \cstab(\functor \pcsone, \functor \pcstwo)$ defined by $g = \forgetful (f)$. Since $\forgetful$ is faithful, there is no other $f'$ such that $g = \forgetful {(f')}$: it means that $g$ is not in the image by $\forgetful \circ \functorm$ of $\pcoh(\pcsone, \pcstwo)$. But since $\functor = \forgetful \circ \functorm$ is full, we have a contradiction. 
      \end{itemize}}
  \end{proof}
\subsection{$\functorm$ is cartesian closed.}\label{subsect:fmcc}
We want now to show that just as $\functor$, $\functorm$ is cartesian closed. Since the forgetful functor from $\cstabm$ to $\cstab$ is cartesian closed, we see that we have only to show that the $\cstab$-morphisms $\Psi^{\fone}$, $\Theta^{\fone}$, $\Upsilon^{\pcsone, \pcstwo}$ and $\Xi^{\pcsone, \pcstwo}$ defined in Lemmas~\ref{lemma:fpresprod} and Lemma~\ref{lemma:fpresarr} proofs, are also morphisms in $\cstabm$.

\begin{lemma}\label{lemma:auxmeaspath}
  Let $\pcsone$ be a PCS, $\coneonem$ any measurable cone, and $f \in \cstab(\forgetful ({\coneonem}), \functor \pcsone)$. We suppose that for every unitary $\gamma \in \pathes \coneonem n$:
  $$\forall a \in \web \pcsone, \, (f\circ \gamma)_a : \RR^n \rightarrow \Rp{} \text{is (Borel) measurable.}$$
  Then it holds that $f \in \cstabm(\coneonem, \functorm \pcsone)$.
\end{lemma}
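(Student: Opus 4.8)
The plan is to observe that membership of $f$ in $\cstabm(\coneonem, \functorm\pcsone)$ demands two things: that $f$ be a stable function, and that it be \emph{measurable} in the sense of Definition~\ref{def:measstabfunctions}, i.e. that $f \circ \gamma \in \pathes{\functorm\pcsone}{n}$ for every unitary $\gamma \in \pathes{\coneonem}{n}$. The stability is already granted, since by hypothesis $f \in \cstab(\forgetful(\coneonem), \functor\pcsone)$ and the underlying cones of $\functor\pcsone$ and $\functorm\pcsone$ coincide. So the only content to establish is the measurability requirement, and the whole argument reduces to checking path-preservation.

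To carry this out, I would invoke the explicit description of the measurable paths into a cone arising from a PCS given by Lemma~\ref{lemma:mpforpcs}: a function $\RR^n \to \pcstocone\pcsone$ belongs to $\pathes{\pcstoconem\pcsone}{n}$ exactly when it is bounded and each of its web-coordinates $\vec r \mapsto \gamma(\vec r)_a$ is Borel measurable. Fix a unitary $\gamma \in \pathes{\coneonem}{n}$. Boundedness of $f \circ \gamma$ is automatic: since $\gamma$ is unitary we have $\gamma(\RR^n) \subseteq \boule{\coneonem}$, and since $f$ is a morphism of $\cstab$ it maps the unit ball into the unit ball, whence $(f\circ\gamma)(\RR^n) \subseteq \boule{\functor\pcsone}$. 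For the coordinate condition, the map $(f\circ\gamma)_a : \RR^n \to \Rp{}$ is precisely the quantity that the hypothesis of the lemma asserts to be measurable, for every $a \in \web{\pcsone}$. Thus $f\circ\gamma$ satisfies both criteria of Lemma~\ref{lemma:mpforpcs}, so $f\circ\gamma \in \pathes{\functorm\pcsone}{n}$.

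Since this holds for every unitary measurable path $\gamma$, the stable function $f$ is measurable, and therefore $f \in \cstabm(\coneonem, \functorm\pcsone)$, as desired. I do not expect any genuine obstacle at this stage: the statement is engineered so that its hypothesis supplies exactly the per-coordinate measurability required by Lemma~\ref{lemma:mpforpcs}, while the only supplementary ingredient—boundedness of $f\circ\gamma$—comes for free from $f$ being a norm-bounded $\cstab$-morphism. The substantive work has already been discharged in proving Lemma~\ref{lemma:mpforpcs}, which is what lets us reduce the abstract path-preservation condition to a condition on individual web coordinates.
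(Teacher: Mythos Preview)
Your proposal is correct and follows essentially the same approach as the paper: reduce to path-preservation, invoke Lemma~\ref{lemma:mpforpcs} to characterize $\pathes{\functorm\pcsone}{n}$, obtain the coordinate measurability from the hypothesis, and obtain boundedness from the fact that $f$ is a $\cstab$-morphism composed with a unitary path. Your argument for boundedness is in fact slightly more explicit than the paper's (which simply says ``since both $f$ and $\gamma$ are bounded''), but the structure is identical.
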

\begin{proof}
 Since we already know that $f$ is a morphism in $\cstab$, hence we have only to show that it preserves measurable paths. Let $\gamma$ be unitary in $\pathes \coneonem n$. We are going to use Lemma~\ref{lemma:mpforpcs} to show  that $f \circ \gamma$ is a measurable path for $\functorm \pcsone$. The second condition in Lemma~\ref{lemma:mpforpcs} holds by hypothesis. The first condition also holds: since both $f$ and $\gamma$ are bounded,  $f \circ \gamma$ is bounded too. So  Lemma~\ref{lemma:mpforpcs} tells us that $f \circ \gamma \in \pathes {\functorm \pcsone} n$.
  \end{proof}


\begin{lemma}\label{lemma:functormcartesian}
  For all $\fone = (\pcsone_i)_{i \in I}$ a finite family of PCSs,
  \begin{align*}
    \Psi^{\fone} & \in \cstabm( {\functorm{(\prod_{i \in I}^{\pcoh_!} \pcsone_i)} }, \prod_{i \in I}^{\cstabm} \functorm{\pcsone_i} )\\
\text{and} \quad    \Theta^{\fone} & \in \cstabm( \prod_{i \in I}^{\cstabm} \functorm{\pcsone_i}, {\functorm{(\prod_{i \in I}^{\pcoh_!} \pcsone_i)} } ).
    \end{align*}
\end{lemma}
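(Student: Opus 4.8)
The plan is to exploit that $\Psi^\fone$ and $\Theta^\fone$ are \emph{already} morphisms in $\cstab$ (this is the content of the proof of Lemma~\ref{lemma:fpresprod}), so the only thing left to check is that each of them preserves measurable paths. The crux is that the two cones $\functorm{(\prod_{i \in I}^{\pcoh_!}\pcsone_i)}$ and $\prod_{i \in I}^{\cstabm}\functorm{\pcsone_i}$ have the same underlying cone but carry \emph{different} families of measurability tests: the former has the single-coordinate tests $\epsilon_{(i,a)}$ inherited from $\pcstoconem{(\cdot)}$, whereas the latter has the tests $\bigoplus_{i\in I}l_i$ of Figure~\ref{fig:cccstabm}, with each $l_i \in \{0\}\cup\{\epsilon_a \mid a \in \web{\pcsone_i}\}$. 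Since $I$ is \emph{finite}, each family of tests is obtained from the other by finite sums, and Borel measurability is stable under finite sums; this is the mechanism that will make both directions go through.

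For $\Theta^\fone$, whose codomain $\functorm{(\prod_{i\in I}^{\pcoh_!}\pcsone_i)}$ arises from a PCS, I would invoke Lemma~\ref{lemma:auxmeaspath} with $\coneonem = \prod_{i\in I}^{\cstabm}\functorm{\pcsone_i}$. Given a unitary $\gamma \in \pathes{\coneonem}{n}$ and a web element $(i,a)$, I must verify that $\vec r \mapsto (\Theta^\fone\circ\gamma)_{(i,a)}(\vec r) = (\gamma(\vec r)_i)_a$ is measurable. This will follow by feeding into the measurable-path condition for $\gamma$ the product test $\bigoplus_{j\in I} l_j$ with $l_i = \epsilon_a$ and $l_j = 0$ for $j\neq i$: unfolding $\bigoplus$ shows that $(\vec s,\vec r)\mapsto (\gamma(\vec r)_i)_a$ is measurable, and since this does not depend on $\vec s$ its section in $\vec r$ is measurable as well. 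Lemma~\ref{lemma:auxmeaspath} then delivers $\Theta^\fone \in \cstabm(\coneonem, \functorm{(\prod_{i\in I}^{\pcoh_!}\pcsone_i)})$.

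For $\Psi^\fone$ the codomain is the $\cstabm$-product, so Lemma~\ref{lemma:auxmeaspath} does not apply and I would argue straight from the definition of measurable path. Take $\gamma \in \pathes{\functorm{(\prod_{i\in I}^{\pcoh_!}\pcsone_i)}}{n}$ unitary; by Lemma~\ref{lemma:mpforpcs} this means $\gamma$ is bounded and each coordinate $\vec r \mapsto \gamma(\vec r)_{(i,a)}$ is measurable. Boundedness of $\Psi^\fone\circ\gamma$ is immediate since $\Psi^\fone$ preserves the unit ball, and for any test $\bigoplus_{i\in I} l_i$ of the $\cstabm$-product the definitions give
$$(\bigoplus_{i\in I} l_i)(\vec r)\bigl((\Psi^\fone\circ\gamma)(\vec s)\bigr) = \sum_{i \in I,\, l_i = \epsilon_{a_i}} \gamma(\vec s)_{(i,a_i)},$$
a finite sum of measurable functions of $\vec s$, hence measurable in $(\vec r,\vec s)$. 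This shows $\Psi^\fone\circ\gamma \in \pathes{\prod_{i\in I}^{\cstabm}\functorm{\pcsone_i}}{n}$, so $\Psi^\fone$ is a morphism in $\cstabm$.

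The only genuinely nontrivial point — and hence the main obstacle — is bookkeeping the two distinct families of tests and checking that passing between the single-coordinate tests $\epsilon_{(i,a)}$ and the $\bigoplus$-tests of the $\cstabm$-product never leaves the realm of finite sums. This is exactly where the hypothesis that $I$ is finite is essential, as it is what keeps all the functions involved Borel measurable.
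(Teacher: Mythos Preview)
Your proof is correct, and for $\Theta^{\fone}$ it matches the paper's argument exactly: both invoke Lemma~\ref{lemma:auxmeaspath} and produce the test $\bigoplus_{j\in I}l_j$ with $l_i=\epsilon_{a_i}$ and $l_j=0$ otherwise.

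For $\Psi^{\fone}$ you take a different route from the paper. The paper observes that $\Psi^{\fone}=\langle \functor(\pi_i)\mid i\in I\rangle$ is built from the projections via the pairing of the cartesian structure; since each $\functor(\pi_i)=\functorm(\pi_i)$ is already a $\cstabm$-morphism (Lemma~\ref{lemma:functormorph1}) and the pairing in $\cstabm$ agrees with that in $\cstab$, the result is immediate from the universal property of the product in $\cstabm$. You instead verify path-preservation by hand: you unfold a generic test $\bigoplus_{i\in I}l_i$ of the $\cstabm$-product and exhibit the result as a finite sum of the measurable coordinate functions $\gamma(\cdot)_{(i,a_i)}$. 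Your approach is more elementary and makes explicit where the finiteness of $I$ enters, while the paper's categorical argument is shorter and avoids touching measurability tests at all for this direction; both are perfectly valid.
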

\begin{proof}
  \begin{itemize}
  \item Recall that $\Psi^{\fone}$ is defined canonically in Equation~\eqref{eq:morphismcartproduct} as $\Psi^{\fone} = \langle \functor (\pi_i) \mid i \in I \rangle$, where $\langle \cdot \rangle$ is the cartesian product on morphisms in $\cstab$. Since the cartesian product on morphisms in $\cstabm$ is the same as the one in $\cstab$ (see~\cite{pse}), and moreover $\functor(\pi_i) = \functorm(\pi_i)$, we see that $\Psi^{\fone}$ is also a morphism of $\cstabm$.
  \item Using Lemma~\ref{lemma:auxmeaspath}, we see that it is enough to show that for all $\gamma$ in $\pathes {\prod_{i \in I}^{\cstabm} \functorm{\pcsone_i}} n$,
    for all $(i,a_i) \in \web {\prod_{i \in I}^{\pcoh_!} \pcsone_i}  $, $(\Theta^\fone\circ \gamma)_{(i,a_i)}$ is measurable.
By looking at the definition of $\Theta^\fone$,
  we see that $(\Theta^\fone\circ \gamma)_{(i,a_i)} (\vec r) = (\gamma(\vec r)_i)_a$. We see now that we can construct a measurability test $m \in\meastests{{\prod_{i \in I}^{\cstabm} \functorm{\pcsone_i}}} 0 $ such that $ (\gamma(\vec r)_i)_a = m(\cdot)(\gamma(\vec r))$: it is enough to take $m = \oplus_{j \in I} l_j$, with $l_j = 0$ if $j \neq i$, and $l_i = \epsilon_{a_i}$. Since $\gamma$ is a measurability test, it means that $\vec r \in \RR^n \mapsto  m(\cdot)(\gamma(\vec r)) \in \Rp{}$ is measurable, and so the result holds.
  \end{itemize}
\end{proof}
Lemma~\ref{lemma:functormcartesian} allows us to see that $\functorm$ is a cartesian functor. We want now to show that it also respects the $\Rightarrow$ construct. First, we show that the $\cstab$ morphism $\Upsilon^{\pcsone, \pcstwo}$ is also a morphism in $\cstabm$.
\begin{lemma}
 For all $\pcsone$, $\pcstwo$ PCSs,
  $$\Upsilon^{\pcsone, \pcstwo} \in \cstabm( \functorm (\pcsone \Rightarrow \pcstwo), \functorm \pcsone \Rightarrow \functorm \pcstwo)$$
\end{lemma}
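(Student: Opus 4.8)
The plan is to leverage that $\Upsilon^{\pcsone, \pcstwo}$ is already known to be a morphism of $\cstab$ by Lemma~\ref{lemma:fpresarr}, so that it only remains to verify the measurability condition of Definition~\ref{def:measstabfunctions}, namely that $\Upsilon^{\pcsone, \pcstwo}$ preserves measurable paths. Concretely, I would fix an arity $n$ and a unitary measurable path $\Gamma \in \pathes{\functorm(\pcsone \Rightarrow \pcstwo)} n$, and show that $\Upsilon^{\pcsone, \pcstwo} \circ \Gamma$ belongs to $\pathes{\functorm \pcsone \Rightarrow \functorm \pcstwo} n$. Boundedness of the composite is immediate, since both $\Upsilon^{\pcsone, \pcstwo}$ and $\Gamma$ are bounded, so the real content is the measurability requirement against the tests of the exponential $\functorm \pcsone \Rightarrow \functorm \pcstwo$ described in Figure~\ref{fig:cccstabm}.

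Next I would unfold a generic such test $l = \trmeas \gamma m \in \meastests{\functorm \pcsone \Rightarrow \functorm \pcstwo} k$, where $\gamma \in \pathes{\functorm \pcsone} k$ and $m \in \meastests{\functorm \pcstwo} k$. Using the identity $(\Upsilon^{\pcsone, \pcstwo} \circ \Gamma)(\vec s) = \mapm{\Gamma(\vec s)}$ together with the defining formula $(\trmeas \gamma m)(\vec r)(h) = m(\vec r)(h(\gamma(\vec r)))$, the function whose measurability we must establish becomes $(\vec r, \vec s) \mapsto m(\vec r)(\mapm{\Gamma(\vec s)}(\gamma(\vec r)))$. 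Since the measurability tests of $\functorm \pcstwo$ are exactly $\{0\} \cup \{\epsilon_b \mid b \in \web \pcstwo\}$, the case $m = 0$ is trivial and it is enough to treat $m = \epsilon_b$; this reduces the goal to the measurability of the map $(\vec r, \vec s) \mapsto (\mapm{\Gamma(\vec s)}(\gamma(\vec r)))_b$.

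The heart of the argument is to expand this coordinate as the power series $(\mapm{\Gamma(\vec s)}(\gamma(\vec r)))_b = \sum_{\mu \in \mfin{\web \pcsone}} \Gamma(\vec s)_{\mu, b} \cdot (\gamma(\vec r))^\mu$ and to check measurability termwise. Here I would invoke Lemma~\ref{lemma:mpforpcs} twice: applied to $\Gamma$, a path on $\functorm(\pcsone \Rightarrow \pcstwo)$ whose web is $\mfin{\web \pcsone} \times \web \pcstwo$, it yields that $\vec s \mapsto \Gamma(\vec s)_{\mu, b}$ is measurable for each $(\mu, b)$; applied to $\gamma$, a path on $\functorm \pcsone$, it yields that $\vec r \mapsto \gamma(\vec r)_a$ is measurable for each $a \in \web \pcsone$, whence $\vec r \mapsto (\gamma(\vec r))^\mu = \prod_{a \in \supp \mu} (\gamma(\vec r)_a)^{\mu(a)}$ is measurable as a finite product. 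Each summand is therefore the product of a measurable function of $\vec s$ with a measurable function of $\vec r$, hence measurable in $(\vec r, \vec s)$, and the sum over $\mu$ is countable because $\web \pcsone$ is countable. Closure of the class of measurable functions under multiplication, finite sums, and limits of non-decreasing sequences then delivers measurability of the whole series, and Lemma~\ref{lemma:mpforpcs} lets us conclude that $\Upsilon^{\pcsone, \pcstwo} \circ \Gamma$ is a measurable path.

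The main obstacle I anticipate is keeping the interaction between the exponential measurability tests (those of the form $\trmeas \gamma m$) and the power-series presentation of $\mapm{\Gamma(\vec s)}$ clean: one must carefully separate the dependence on $\vec s$, carried by the path coefficients $\Gamma(\vec s)_{\mu, b}$, from the dependence on $\vec r$, carried by $\gamma(\vec r)$, and ensure that passing to the countable sum over multisets does not destroy measurability. This is precisely why the discreteness of the web and the explicit characterization of paths in Lemma~\ref{lemma:mpforpcs}, rather than the bare definition of a measurable path, are the decisive tools here.
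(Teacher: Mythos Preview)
Your argument is correct, but it follows a different route from the paper. The paper's proof is purely structural: it recalls that $\Upsilon^{\pcsone,\pcstwo}$ was \emph{defined} as the currying of $\functor(\text{eval}_{\pcsone,\pcstwo}) \circ \Theta^{\pcsone \Rightarrow \pcstwo,\, \pcsone}$, and then simply observes that each ingredient is already a $\cstabm$ morphism\textemdash $\Theta^{\pcsone \Rightarrow \pcstwo,\,\pcsone}$ by Lemma~\ref{lemma:functormcartesian}, $\functor(\text{eval})=\functorm(\text{eval})$ by Lemma~\ref{lemma:functormorph1}, and currying in $\cstabm$ coincides with currying in $\cstab$ on the underlying maps\textemdash so the curried composite is automatically a $\cstabm$ morphism. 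No power series are unfolded at all. Your approach instead bypasses the categorical infrastructure and checks Definition~\ref{def:measstabfunctions} by hand, reducing everything via Lemma~\ref{lemma:mpforpcs} to the measurability of the countable series $\sum_\mu \Gamma(\vec s)_{\mu,b}\,(\gamma(\vec r))^\mu$. This is essentially the same computation that sits behind Lemma~\ref{lemma:functormorph1}, reworked with an extra parameter $\vec s$. What you gain is a self-contained argument that does not depend on knowing that the $\cstabm$ exponential structure agrees with that of $\cstab$; what the paper gains is brevity and the reuse of lemmas it has already established.
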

\begin{proof}
Recall that $\Upsilon^{\pcsone, \pcstwo}$ is defined using currying in $\cstab$, $\Theta^{\pcsone \Rightarrow \pcstwo, \pcsone}$, and the eval morphism in $\cstab$. Since currying and structural morphisms are the same in $\cstabm$ as in $\cstab$, and moreover we have shown in Lemma~\ref{lemma:functormcartesian} that $\Theta^{\pcsone \Rightarrow \pcstwo, \pcsone}$ is a morphism in $\cstab$, we have the result.
\end{proof}
We show now that $\Xi^{\pcsone, \pcstwo}$ is also a $\cstabm$ morphism, by
using Lemma~\ref{lemma:auxmeaspath}.  To do that, we need to show that  $(\Xi^{\pcsone, \pcstwo} \circ \gamma)_{\mu,b}$ is Borel measurable for every $(\mu, b) \in \web{\pcsone \Rightarrow \pcstwo}$. Our proof strategy is the following: first we show that it can be written as a higher-order partial derivative of a (Borel) measurable function, and then we show that under some conditions on the domain, the partial derivative of a Borel measurable function is again Borel measurable.
\begin{lemma}\label{lemma:derivaux1}
  Let be $\mu \in \mfin {\web \pcsone}$, $b \in \web \pcstwo$. Let be $\{a_1, \ldots, a_p\}$ the support of $\mu$. Then there exists $\delta \in \pathes{\functorm \pcsone} p$, and $\alpha_\mu >0$, such that for every $f \in \prog{(\pcsone \Rightarrow \pcstwo)}$, the function:
  $$\psi_{\delta}^f: \vec t \in \RR^p \mapsto (\trmeas \delta {\epsilon_b})(\vec t)(\mapm f) \in \Rp{} $$
  verify: $\exists c>0$, such that the partial derivative  $\frac {\partial {(\psi_\delta^f)}^{\card \mu}}{\partial {t_1}^{\mu(a_1)} \ldots \partial{t_p}^{\mu(a_p)}}$ exists on $[0,c]^p \subseteq \RR^p$, and moreover its value in $\vec 0$ is $\alpha_\mu \cdot f_{\mu,b}$.
  \end{lemma}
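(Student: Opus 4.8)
The plan is to build the measurable path $\delta$ explicitly from the discrete structure of $\pcsone$, to unfold $\psi_\delta^f$ into a multivariate power series whose coefficient of $\prod_i t_i^{\mu(a_i)}$ is precisely $f_{\mu,b}$, and then to isolate $f_{\mu,b}$ by differentiating at the origin. Throughout I use that $\prog\pcsone=\boule{\functor\pcsone}$ is \emph{convex} (it is cut out by the linear inequalities $\scal{\cdot}{v}\le 1$ for $v\in\prog{\dual\pcsone}$) and \emph{downward closed} for the pointwise order (the same inequalities, together with $v\ge 0$, so each $\scal{\cdot}{v}$ is monotone).

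First I would construct $\delta$. Write $\supp\mu=\{a_1,\dots,a_p\}$. By the technical conditions of Definition~\ref{def:pcs}, for each $i$ there is $\lambda_i>0$ with $\lambda_i e_{a_i}\in\prog\pcsone$. Set $c=\tfrac{1}{2p}\min_i\lambda_i$ and $c'=\tfrac1p\min_i\lambda_i$, so $c<c'$ and both $c\sum_i e_{a_i}$ and $c'\sum_i e_{a_i}$ are pointwise below the convex combination $\tfrac1p\sum_i\lambda_i e_{a_i}\in\prog\pcsone$, hence lie in $\prog\pcsone$ by convexity and downward closure. Let $\theta:\RR\to[0,c]$ be the continuous clamping $\theta(t)=\max(0,\min(t,c))$ and put $\delta(\vec t)=\sum_{i=1}^p\theta(t_i)\,e_{a_i}$. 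Since $\delta(\vec t)\le c\sum_i e_{a_i}\in\prog\pcsone$ pointwise, downward closure gives $\delta(\RR^p)\subseteq\prog\pcsone=\boule{\functor\pcsone}$; moreover each coordinate $\vec t\mapsto\theta(t_i)$ is (Borel) measurable and $\delta_a\equiv 0$ for $a\notin\supp\mu$. By Lemma~\ref{lemma:mpforpcs}, $\delta$ is a unitary measurable path in $\pathes{\functorm\pcsone}p$.

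Next I would unfold $\psi_\delta^f$. By the definition of the test $\trmeas\delta{\epsilon_b}$ in Figure~\ref{fig:cccstabm} and of $\epsilon_b$, one has $\psi_\delta^f(\vec t)=\epsilon_b(\vec t)\big(\mapm f(\delta(\vec t))\big)=(f\cdot\delta(\vec t)^!)_b=\sum_{\nu\in\mfin{\web\pcsone}}f_{\nu,b}\,\delta(\vec t)^\nu$. On $[0,c]^p$ the clamping is inactive, so $\delta(\vec t)^\nu=0$ unless $\supp\nu\subseteq\supp\mu$, in which case $\delta(\vec t)^\nu=\prod_{i=1}^p t_i^{\nu(a_i)}$. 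Thus on $[0,c]^p$, $\psi_\delta^f$ coincides with the power series $P(\vec t)=\sum_{\nu:\,\supp\nu\subseteq\supp\mu}f_{\nu,b}\prod_{i=1}^p t_i^{\nu(a_i)}$, whose coefficients are non-negative. Because $c'\sum_i e_{a_i}\in\prog\pcsone$ and $\mapm f$ maps $\prog\pcsone$ into $\prog\pcstwo$, the value $P(c',\dots,c')=(\mapm f(\sum_i c'\,e_{a_i}))_b$ is finite; a non-negative series convergent at a point with all positive coordinates converges absolutely on the polydisc $\{\vec t:\,|t_i|<c'\text{ for all }i\}$ and is real-analytic there. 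Since this polydisc contains $[0,c]^p$, all iterated partial derivatives of $P$—and hence, by the agreement on $[0,c]^p$, those of $\psi_\delta^f$—exist on $[0,c]^p$ and are obtained by term-by-term differentiation. Applying $\partial_{t_1}^{\mu(a_1)}\cdots\partial_{t_p}^{\mu(a_p)}$, the $\nu$-th term becomes $f_{\nu,b}\prod_i\frac{\nu(a_i)!}{(\nu(a_i)-\mu(a_i))!}\,t_i^{\nu(a_i)-\mu(a_i)}$ (and $0$ when some $\nu(a_i)<\mu(a_i)$); evaluating at $\vec t=\vec 0$ annihilates every term except $\nu=\mu$, leaving $\big(\prod_{a\in\supp\mu}\mu(a)!\big)\,f_{\mu,b}$. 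Setting $\alpha_\mu=\prod_{a\in\supp\mu}\mu(a)!>0$ yields the claim.

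The main obstacle is the construction of $\delta$: one must exhibit it as a genuine \emph{unitary} measurable path—this is exactly where the technical conditions of Definition~\ref{def:pcs}, together with the convexity and downward closure of $\prog\pcsone$, are used to force the box $c\sum_i e_{a_i}$ inside $\prog\pcsone$—and one must keep a margin $c'>c$ so that $[0,c]^p$ lies strictly inside the region of convergence, legitimizing term-by-term differentiation up to the corner $\vec 0$. Once $\delta$ is fixed, the identification of $\psi_\delta^f$ with the analytic function $P$ and the final differentiation are routine power-series bookkeeping.
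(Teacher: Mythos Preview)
Your proof is correct and takes essentially the same approach as the paper: construct $\delta$ explicitly so that on a small cube it parameterizes the span of the $e_{a_i}$, check via Lemma~\ref{lemma:mpforpcs} that it is a measurable path, unfold $\psi_\delta^f$ into a non-negative multivariate power series, and differentiate term by term to isolate the coefficient $f_{\mu,b}$, obtaining $\alpha_\mu=\prod_{a\in\supp\mu}\mu(a)!$. The paper uses the simplex-shaped path $\delta(\vec t)=\sum_i t_i e_{a_i}$ on $\{\vec t\ge 0,\ \sum_i t_i\le 1\}$ (and $0$ elsewhere) and appeals to ``theorems of real analysis for normally convergent series''; your clamped variant and your explicit use of the technical conditions together with convexity and downward closure of $\prog\pcsone$ to guarantee that $\delta$ is \emph{unitary} (so that $\mapm f(\delta(\vec t))$ is well defined) are a bit more careful, but the strategy is the same.
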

\begin{proof}
  We take
$$ \delta:  \vec t \in \RR^p \mapsto \begin{cases}
    \sum_{1 \leq i\leq m} t_i \cdot e_{a_i} \text{ if } t_i \geq 0 \forall i \text{ and }\sum_{1 \leq i \leq m} t_i \leq 1; \\
    0 \text{ otherwise.}
  \end{cases} $$
  Using Lemma~\ref{lemma:mpforpcs}, we see that indeed $\delta \in \pathes {\functorm \pcsone} p$. 
  Observe that $\psi_{\delta}^f(\vec t) = \sum_{\nu \mid \supp \nu \subseteq{\{a_1, \ldots,a_n\}}} f_{\nu,b} \cdot \vec t^\nu.$
  From there, by using therorems of real analysis for normally convergent series of functions, we can deduce the result (the complete proof may be found in the long version). 
  \end{proof}


\longv{\begin{proof}
    Since $\gamma$ is a measurable path, we know that for every $p \in \NN$, and $l \in \pathes {\functorm \pcsone} p$, $\trmeas {\epsilon_b}{l}$ is a measurability test on ${\functorm \pcsone \Rightarrow \functorm \pcstwo}$, and therefore:
\begin{equation}\label{eq:fmarroweqaux1}
  (\vec r, \vec u) \in \RR^{p+n} \mapsto (\trmeas {\epsilon_b} l (\vec r))(\gamma(\vec u)) \in \Rp{} \text{ is measurable.}
  \end{equation}
We are going to apply~\eqref{eq:fmarroweqaux1} to a particular measurable path on ${\functorm \pcsone}$. Let $p$ be the cardinality of $\text{Supp}(\mu)$, and $\{a_1, \ldots, a_p\} = \text{Supp}(\mu)$. We define $l^{\mu}: \RR^{p} \rightarrow \functorm \pcsone$ as:
 $$ l^{\mu}:  \vec r \in \RR^p \mapsto \begin{cases}
    \sum_{1 \leq i\leq m} r_i \cdot e_{a_i} \text{ if } r_i \geq 0 \forall i \text{ and }\sum_{1 \leq i \leq m} r_i \leq 1; \\
    0 \text{ otherwise.}
\end{cases} $$
We see that $l^{\mu}(\RR^p)$ is bounded in $\functorm \pcsone$, and moreover for every $a \in \web \pcsone$, the function $\vec r \in \RR^p \mapsto l^{\mu}(\vec r)_a \in \RR^+$ is measurable. Using the characterization of $\pathes {\functorm \pcsone} p$ in Lemma~\ref{lemma:mpforpcs}, we see that $l^{\mu}$ is  in $\pathes {\functorm \pcsone} p$. Thus we can apply~\eqref{eq:fmarroweqaux1} with $l = l^{\mu}$. Observe that
    $$(\trmeas {\epsilon_b} l^{\mu} (\vec r))(\gamma(\vec u)) =  \left(\gamma(\vec u)(l^\mu(\vec r))\right)_b.$$  Therefore~\eqref{eq:fmarroweqaux1} tells us that $\phi^{\mu,b}: \RR^{p+n} \rightarrow \Rp{}$ is measurable,
with $\phi^{\mu,b}$ defined as
$\phi^{\mu,b}: (\vec r, \vec u) \in \RR^{p+n} \mapsto \gamma(\vec u)(l^\mu(\vec r))_b \in \RR_+.$
We define $J \subseteq \RR^p$ as $[0,\frac 1 p[^p$.
We are going to look at the restriction of the function $\phi^{\mu,b}$ to $J \times \RR^n$: indeed we are going to show that $\phi^{\mu,b}$ has partial derivatives on that interval.
We define $\psi^{\mu,b} : J \times \RR^n \rightarrow \Rp{}$ as the restriction of $\phi^{\mu,b}$ to $J \times \RR^n$.
Since $\phi^{\mu,b}$ is a measurable function, and $J \times \RR^n$ a measurable subset of $\RR^{p+n}$, $\psi^{\mu,b}$ also is measurable. 

Lemma~\ref{lemma:partder1} below (which is proved in the long version) is key: it says that we can recover the coefficients of the power series $\psi^{\mu,b}$ by looking at its partial derivatives. We will then show that we can do it in a \emph{measurable} way. 
\begin{lemma}\label{lemma:partder1}
  For every multiset $\nu \in \mfin{\{1, \ldots, p\}}$, there exists an interval $K$ of the form $[0,c]^p$ such that the partial derivative 
  $\partial^{\nu} \psi^{\mu,b} = \frac {\partial {(\psi^{\mu,b}_{\mid K \times \RR^n})}^{\card \nu}}{\partial {r_1}^{\nu(1)} \ldots \partial{r_p}^{\nu(p)}}: K \times \RR^n \rightarrow \Rp{}$ exists, and moreover:
   $$\partial^{\nu}\psi^{\mu,b}(\vec 0, \vec u) =  {\Xi^{\pcsone, \pcstwo}(\gamma(\vec u))}_{\nu,b}  \cdot \prod_{1 \leq i \leq p} {\nu(i)!} $$
\end{lemma}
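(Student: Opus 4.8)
The plan is to exhibit $\psi^{\mu,b}$, for each fixed $\vec u$, as a genuine real power series in the variables $\vec r = (r_1, \ldots, r_p)$ with non-negative coefficients, and then to recover those coefficients by the classical term-by-term differentiation of power series. First I would rewrite $\psi^{\mu,b}$ as such a series. Fix $\vec u \in \RR^n$. Since the path $\gamma$ fixed above is unitary, $\gamma(\vec u)$ is a stable function lying in $\boule{(\functor \pcsone \Rightarrow \functor \pcstwo)}$, and for $\vec r \in J = [0, \frac 1 p[^p$ we have $l^\mu(\vec r) = \sum_{1 \leq i \leq p} r_i \cdot e_{a_i}$, an element of finite support with $\norm{\functor \pcsone}{l^\mu(\vec r)} = \sum_i r_i < 1$, hence $l^\mu(\vec r) \in \bouleopen{\functor \pcsone}$. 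Because $\Xi^{\pcsone, \pcstwo}(\gamma(\vec u))$ is by construction the $\pcoh_!$ morphism with $\mapm{\Xi^{\pcsone, \pcstwo}(\gamma(\vec u))} = \gamma(\vec u)$, we get $\psi^{\mu,b}(\vec r, \vec u) = (\Xi^{\pcsone, \pcstwo}(\gamma(\vec u)) \cdot (l^\mu(\vec r))^!)_b$, and applying Lemma~\ref{lemma:densetaylor} (valid thanks to the finite support, and resting on Proposition~\ref{prop:ebt}) this expands, after reindexing the support $\{a_1, \ldots, a_p\}$ by $\{1, \ldots, p\}$, as
\[
\psi^{\mu,b}(\vec r, \vec u) = \sum_{\nu \in \mfin{\{1, \ldots, p\}}} \Xi^{\pcsone, \pcstwo}(\gamma(\vec u))_{\nu, b} \cdot \prod_{1 \leq i \leq p} r_i^{\nu(i)},
\]
a power series in $\vec r$ whose coefficients $c_\nu(\vec u) := \Xi^{\pcsone, \pcstwo}(\gamma(\vec u))_{\nu, b}$ are all non-negative (by Lemma~\ref{lemma:additivity}).

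Next I would fix the box $K$ and justify differentiating term by term. Since $\gamma$ is unitary and $\Xi^{\pcsone, \pcstwo}$ preserves the norm (proof of Lemma~\ref{lemma:fpresarr}), every $\Xi^{\pcsone, \pcstwo}(\gamma(\vec u))$ lies in $\prog{(\pcsone \Rightarrow \pcstwo)}$, so the coefficients $c_\nu(\vec u)$ are bounded, and the series converges at every $\vec r$ with $\sum_i r_i < 1$. Choosing any $c < \frac 1 p$, so that $K = [0,c]^p \subseteq J$ stays strictly inside the domain of convergence, the standard results on normally convergent series of functions guarantee that, for each fixed $\vec u$, the series and each of its formal partial-derivative series converge uniformly on $K$; hence $\partial^\nu \psi^{\mu,b}(\cdot, \vec u)$ exists on $K$ and is obtained by differentiating the series monomial by monomial. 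As the differentiation touches only the $\vec r$ variables, this defines $\partial^\nu \psi^{\mu,b}$ pointwise on all of $K \times \RR^n$.

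Finally I would evaluate at $\vec 0$. Applying $\partial^\nu = \partial^{\card \nu}/(\partial r_1^{\nu(1)} \cdots \partial r_p^{\nu(p)})$ to the monomial $\prod_i r_i^{\rho(i)}$ yields a monomial that vanishes at $\vec r = \vec 0$ unless $\rho = \nu$, in which case it equals $\prod_i \nu(i)!$; so only the $\rho = \nu$ term survives and $\partial^\nu \psi^{\mu,b}(\vec 0, \vec u) = c_\nu(\vec u) \cdot \prod_{1 \leq i \leq p} \nu(i)! = \Xi^{\pcsone, \pcstwo}(\gamma(\vec u))_{\nu, b} \cdot \prod_{1 \leq i \leq p} \nu(i)!$, as required.

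The main obstacle is the middle step: securing uniform (normal) convergence of the differentiated series on a box $K$ independent of $\vec u$, which is precisely where the boundedness of the unitary path $\gamma$ and the norm-preservation of $\Xi^{\pcsone, \pcstwo}$ are needed to turn the abstract Taylor expansion coming from Bernstein's theorem into a classically infinitely-differentiable real power series. Everything else is the elementary algebra of differentiating monomials.
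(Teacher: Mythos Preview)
Your proposal is correct and follows essentially the same route as the paper: both write $\psi^{\mu,b}(\cdot,\vec u)$ as the multivariate power series $\sum_{\nu}\Xi^{\pcsone,\pcstwo}(\gamma(\vec u))_{\nu,b}\,\vec r^{\,\nu}$ with non-negative coefficients, justify term-by-term differentiation on a small box (the paper spells this out via two auxiliary lemmas on ``multiset series'' where you invoke standard results on normally convergent series), and then evaluate at $\vec 0$ to isolate the coefficient. One small imprecision worth flagging: the identity $\norm{\functor\pcsone}{l^\mu(\vec r)}=\sum_i r_i$ is not true for an arbitrary PCS $\pcsone$, but this is harmless since all you actually need is that $l^\mu(\vec r)\in\prog\pcsone$ for $\vec r$ in some box $[0,c]^p$, and that follows from the PCS axioms and convexity of $\prog\pcsone$.
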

\longv{
\begin{proof}
 Since $l^{\mu}(\vec r) = \sum_{1 \leq i \leq p} r_i \cdot e_i$ for $\vec r \in J$, we see that:
$$
 \phi^{\mu,b}(\vec r, \vec u) = \sum_{\nu \in \mfin{\web \pcsone}} {\Xi^{\pcsone, \pcstwo}({\gamma(\vec u)})}_{\nu,b} \cdot \vec r^{\nu} \in \Rp{}. $$
 For a fixed $\vec s$, we can see it as a generalization of entire series in real analysis. There are well-known results about the differentiation of such series: for instance, a uniformly convergent entire series is differentiable on its (open) domain of convergence.
Here, we are going to show the counterpart of some properties on entire series, on what we call \emph{multisets series of $p$ real variables}: those are the series of the form $$S(\vec r) = \sum_{\nu \in \mfin{1, \ldots,p}} a_\nu \cdot \vec r^\nu \quad \text{where} \quad \vec r \in \RR^p.$$  
 First, we observe that for each $\vec r$, we can look at $S(\vec r)$ as an infinite sum over natural numbers:
 $$S(\vec r) = \sum_{n \in \NN} (\sum_{\mu \in \mfin{1, \ldots,m} \mid \card \mu = n} a_\mu \cdot\vec r^\mu) .$$
We recall here a classical result of real analysis on power series, that we will use in the following.
\begin{lemma}\label{derserieonev}[Derivation of a series]
Let $f_n : I \rightarrow \RR$ be a sequence of functions from a bounded interval $I$. We suppose that $f(x) = \sum_{n \in \NN} f_n(x)$ is convergent for every $x \in I$, and moreover for each $n \in \NN$, $f_n$ is derivable and $\sum_{n \in \NN} f_n'$ is uniformly convergent on $I$. Then $f$ is derivable, and moreover $f' = \sum_{n \in \NN} f'_n$.
  \end{lemma}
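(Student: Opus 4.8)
The plan is to reduce this to the classical theorem on term-by-term differentiation, for which the standard route passes through the Mean Value Theorem applied to the partial sums. Write $S_N = \sum_{n=0}^N f_n$ for the partial sums of $f$, and let $g = \sum_{n \in \NN} f_n'$ denote the uniform limit of the $S_N'$, which exists by hypothesis. Each $S_N$ is derivable with $S_N' = \sum_{n=0}^N f_n'$, and $S_N \to f$ pointwise on $I$. The goal is to show that for every $x_0 \in I$ the difference quotient of $f$ at $x_0$ tends to $g(x_0)$, i.e.\ that $f$ is derivable at $x_0$ with $f'(x_0) = g(x_0)$.

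First I would fix $x_0 \in I$ and introduce, for each $n$, the difference-quotient function $\phi_n(x) = (f_n(x) - f_n(x_0))/(x - x_0)$ defined for $x \in I \setminus \{x_0\}$; note that $\phi_n$ extends continuously at $x_0$ with value $f_n'(x_0)$, since $f_n$ is derivable. Because $\sum_n f_n$ converges pointwise, we have $\sum_n \phi_n(x) = (f(x) - f(x_0))/(x - x_0)$ for every $x \neq x_0$. The crux is then to prove that $\sum_n \phi_n$ converges \emph{uniformly} on $I \setminus \{x_0\}$. To this end I would invoke the uniform Cauchy criterion for $\sum_n f_n'$: given $\epsilon > 0$ there is an $N_0$ with $\big| \sum_{k=n+1}^m f_k'(t) \big| < \epsilon$ for all $m > n \geq N_0$ and all $t \in I$. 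Applying the Mean Value Theorem to the finite block $h = \sum_{k=n+1}^m f_k$ between $x_0$ and $x$ yields, for some $\xi$ strictly between them, $\sum_{k=n+1}^m \phi_k(x) = h'(\xi) = \sum_{k=n+1}^m f_k'(\xi)$, whose absolute value is therefore below $\epsilon$. This is precisely the uniform Cauchy criterion for $\sum_n \phi_n$, so the series of difference quotients converges uniformly.

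Once uniform convergence is in hand, I would conclude by interchanging the two limiting operations. Each $\phi_n$ has limit $f_n'(x_0)$ at $x_0$, and $\sum_n \phi_n$ converges uniformly; the standard theorem on the interchange of a limit with a uniform sum then gives
\begin{align*}
\lim_{x \to x_0} \frac{f(x) - f(x_0)}{x - x_0} &= \lim_{x \to x_0} \sum_{n \in \NN} \phi_n(x) = \sum_{n \in \NN} \lim_{x \to x_0} \phi_n(x) \\
&= \sum_{n \in \NN} f_n'(x_0) = g(x_0).
\end{align*}
Hence $f$ is derivable at $x_0$ with $f'(x_0) = g(x_0)$, and since $x_0 \in I$ was arbitrary this proves $f' = \sum_{n \in \NN} f_n'$.

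The main obstacle is the uniform convergence of the difference quotients $\sum_n \phi_n$: pointwise convergence of $\sum_n f_n$ alone never suffices for term-by-term differentiation, and it is exactly the Mean Value Theorem that converts the uniform control we already have on the derivatives $\sum_n f_n'$ into uniform control on the difference quotients. The boundedness of $I$ is not essential for the local argument at each $x_0$, but it is what allows a single $N_0$ to serve uniformly in $x$, so that the Cauchy estimate holds on all of $I$ at once; everything else (the continuous extension of the $\phi_n$ and the interchange of limit and sum) is routine.
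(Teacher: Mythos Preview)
Your proof is correct and is the standard Mean Value Theorem argument for term-by-term differentiation. The paper, however, does not prove this lemma at all: it simply states it as ``a classical result of real analysis'' to be recalled and used. So there is nothing to compare against; you have supplied a full proof where the paper only gives a citation-free statement. One small quibble: your closing remark that the boundedness of $I$ is what allows a single $N_0$ to work uniformly is not quite right\textemdash that uniformity comes directly from the assumed uniform convergence of $\sum_n f_n'$ on $I$, and the boundedness of $I$ plays no role in the argument as you have written it.
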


\begin{lemma}\label{lemma:derpart1}
  Let $p \in \NN$, and $S(\vec r) = \sum_{\nu \in \mfin{\{1, \ldots,p\}}} a_\nu \cdot \vec r^\nu $ with non-negative coefficients $a_\mu$, such that $S$ is convergent on an interval $I=[ -c ,c ]^p$, with $c >0$.

  Then there exists $0 <b < c$, such that the function
     $g:\vec r \in ]-b, b[^p \mapsto S(\vec r) \in \RR $ is partially derivable in each of the $r_i$ variables, and moreover:
     $$\frac {\partial g}{\partial {r_i}}(\vec r) = \sum_{\nu \in \mfin{\{1, \ldots,p\}} \mid i \in \nu} a_\nu \cdot \vec r^{\nu - [i]}  \cdot \nu(i).$$
\end{lemma}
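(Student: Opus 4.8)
The plan is to reduce the statement to the classical one-variable termwise differentiation result recalled in Lemma~\ref{derserieonev}, exploiting the non-negativity of the coefficients to obtain strong (normal) convergence bounds. First I would record the crucial consequence of non-negativity: since $\vec c = (c, \ldots, c) \in I$ and all $a_\nu \geq 0$, the number $M := \sum_{\nu \in \mfin{\{1,\ldots,p\}}} a_\nu\, c^{\card \nu}$ is finite. For any $\vec r \in [-c,c]^p$ we then have $|a_\nu\, \vec r^\nu| \leq a_\nu\, c^{\card \nu}$, so $S$ converges absolutely (indeed normally) on the whole closed cube, dominated by $M$.

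Next, fix any $b$ with $0 < b < c$ and set $\rho = b/c < 1$. I would show that the formally differentiated series $\sum_{\nu : i \in \nu} a_\nu\, \nu(i)\, \vec r^{\nu - [i]}$ converges normally on $[-b,b]^p$. For $\vec r$ in this cube each term is bounded by $a_\nu\, \nu(i)\, b^{\card \nu - 1}$, and using $\nu(i) \leq \card \nu$ together with $b^{\card \nu - 1} = b^{-1} \rho^{\card \nu} c^{\card \nu}$ one gets $a_\nu\, \nu(i)\, b^{\card \nu -1} \leq b^{-1} K\, a_\nu\, c^{\card \nu}$, where $K := \sup_{n \in \NN} n \rho^n < \infty$ since $\rho < 1$. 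Summing, the differentiated series is dominated by $b^{-1} K M$, so by the Weierstrass $M$-test it converges uniformly on $[-b,b]^p$.

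Finally I would freeze all variables $r_j$ for $j \neq i$ at values in ${]-b,b[}$ and view $S$ as a function of $r_i$ alone. Grouping terms by total cardinality, write $S(\vec r) = \sum_{n \in \NN} f_n(r_i)$ with $f_n(r_i) = \sum_{\card \nu = n} a_\nu\, \vec r^\nu$; each $f_n$ is a polynomial in $r_i$ (a finite sum, as there are finitely many multisets of given cardinality over a finite index set), hence differentiable, and its derivative is the cardinality-$n$ part of the differentiated series. The two previous steps supply exactly the hypotheses of Lemma~\ref{derserieonev}: $\sum_n f_n(r_i)$ converges and $\sum_n f_n'(r_i)$ converges uniformly on $[-b,b]$. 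Hence $r_i \mapsto S(\vec r)$ is differentiable on ${]-b,b[}$ with $\frac{\partial S}{\partial r_i}(\vec r) = \sum_n f_n'(r_i) = \sum_{\nu : i \in \nu} a_\nu\, \nu(i)\, \vec r^{\nu - [i]}$, which is the claimed formula; and the same $b$ serves simultaneously for every $i$, so $g : \vec r \in {]-b,b[^p} \mapsto S(\vec r)$ is partially differentiable in each variable.

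The main obstacle is the passage from the joint several-variable series to a legitimate application of the one-variable theorem: one must group the multiset-indexed terms into finitely supported packets $f_n$ so that each is a genuine differentiable function of $r_i$ and so that termwise differentiation of the packets coincides with the desired formula. The non-negativity hypothesis is what makes this harmless, since it reduces all convergence questions to the single bound $M$ obtained by evaluating at the corner $\vec c$, and rules out any cancellation that could invalidate the rearrangement into packets.
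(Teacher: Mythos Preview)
Your proof is correct and follows essentially the same route as the paper: freeze the other variables, group the multiset series into packets, bound the termwise-differentiated series via the ratio $\rho=b/c<1$ and the factor $\sup_n n\rho^n$, and invoke Lemma~\ref{derserieonev}. The only cosmetic difference is that you group terms by total cardinality $\card\nu$ (so each $f_n$ is a genuine polynomial in $r_i$), whereas the paper groups by the exponent of $r_i$ itself (so each packet is a single monomial $c_n r_i^{\,n}$) and uses a bound $M$ depending on the frozen point rather than your global $M=S(c,\ldots,c)$; neither choice affects the argument.
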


\begin{proof}
  We take $b = \frac c 2$, and set $J=]-b, b[$. To simplify the notations, we suppose here that $i = 1$, but the proof is the same in other cases.
   We want to show that for any fixed $\vec u \in J^{p-1}$, the function
     $ h_{\vec u}:  r \mapsto g(r, \vec u)$ is derivable on $J$.
     Let us fix $\vec u \in ]-b, b[^{p-1}$. We are going to use Lemma~\ref{derserieonev} on $h_{\vec u}$.
  We see that $h_{\vec u}( r) = \sum_{n \in \NN} h_n(r)$, where $$h_n(r) = (\sum_{\nu \in \mfin{2, \ldots,p} } a_{\nu+[1^n]} \cdot \vec u ^\nu ) \cdot r^n,$$
  where $[1^n]$ is the multiset consisting of $n$ occurrences of $1$.
We see that for every $n \in \NN$, the function $h_n$ is derivable on $J$, and:
     $$h_n'(r) =(\sum_{\nu \in \mfin{2, \ldots,p}}  a_{\nu+[1^n]} \cdot \vec u ^\nu )\cdot n \cdot r^{n-1} .$$
      We see now that the series $\sum_{n \in \NN} h_n'$ is uniformly convergent on $J$: for every $r \in J$, it holds that:
     \begin{align*}
       \lvert h_n'(r) \rvert &\leq (\sum_{\nu \in \mfin{2, \ldots,p}}  a_{\nu+[1^n]}  \cdot \lvert \vec u \rvert ^\nu ) \cdot n \cdot \lvert r \rvert ^{n-1} \\
       &= (\sum_{\nu \in \mfin{2, \ldots,p}}  a_{\nu+[1^n]}  \cdot \lvert \vec u \rvert ^\nu \cdot c^n ) \cdot n \cdot \frac{1}{c}\cdot \left(\frac{\lvert r \rvert}{c}\right) ^{n-1} \\
       & = (\sum_{\eta \in \mfin{1, \ldots,p} \mid \eta(1) = n}  a_{\eta}  \cdot \lvert (c,\vec u) \rvert ^\eta ) \cdot n \cdot \frac{1}{c}\cdot \frac{\lvert r \rvert^{n-1}}{c^{n-1}}
     \end{align*}
     Since the series $S(\vec r) = \sum a_\mu \cdot \vec r ^\mu$ is convergent on $I$, and $(c, \lvert \vec u \rvert ) \in I$, it holds that there exists $M \geq 0$, with $\sum_{\nu \in \mfin{1, \ldots,p} \mid \nu(1) = n} a_{\nu}  \cdot \lvert (c,\vec u) \rvert ^\nu \leq M.$
 As a consequence, and since moreover for each $r \in J$, it holds that $\lvert r \rvert \leq b$, we can now write:
     \begin{equation}\label{eq:auxderivhn}
       \forall r \in J, \quad \lvert h_n'(r) \rvert \leq  M \cdot \frac {n} c \cdot  \left(\frac{\lvert r \rvert}{c}\right)^{n-1} \leq  M \cdot \frac {n} c \cdot  \left(\frac{b}{c}\right)^{n-1}
     \end{equation}
     Since $ b< c$, we know that the quantity in the right part of~\eqref{eq:auxderivhn} defines a convergent series. As a consequence, the series $\sum_{n \in \NN} h_n'$ is uniformly convergent on $J$, which means that we are able to apply Lemma~\ref{derserieonev}: we see that $\frac {\partial g}{\partial {r_1}}$ exists on $J^n$, and moreover:
\begin{align*}
  \frac {\partial g}{\partial {r_1}} ( \vec r) &= \sum_{n \in \NN} h_n'(r, (r_2, \ldots,r_p)) \\
  & = \sum_{n \in \NN}(\sum_{\nu \in \mfin{\{2, \ldots,p\}}}  a_{\nu+[1^n]} (r_2, \ldots,r_n) ^\nu )\cdot n \cdot r^{n-1}\\
  & =   \sum_{\nu \in \mfin{\{1, \ldots,p\}} \mid 1 \in \nu} a_\nu \cdot \vec r^{\nu-[1]} \cdot \nu(1) \quad \text{and the result holds.}
\end{align*}

\end{proof}
We iterate now Lemma~\ref{lemma:derpart1} in order to look at higher-order partial derivatives.
\begin{lemma}\label{lemma:derpart2}
  Let $S(\vec r) = \sum_{\nu \in \mfin{\{1, \ldots,p\}}} a_\nu\cdot (\vec r)^\nu $  with $a_\nu \geq 0$. We suppose $S$ convergent on $I=] - b ,b [^p$, with $b>0$.
     Then for every multiset $\nu \in \mfin{\{,1 \ldots, p\}}$, there exists $0 <b \leq a$, such that, when we define 
     $g:\vec r \in [-b, b]^p \mapsto S(\vec r) $, the partial higher-order derivative
     $\frac {\partial g^{\card \nu}}{\partial {r_1}^{\nu(1)} \ldots \partial{r_p}^{\nu(p)}}$ exists, and moreover:
     $$\frac {\partial g^{\card\nu}}{\partial {r_1}^{\nu(1)} \ldots \partial{r_p}^{\nu(p)}}(\vec r) = \sum_{\eta \in \mfin{\{1, \ldots,p\}} } a_{\nu+\eta} \cdot  {\vec r}^{\eta} \cdot \prod_{1 \leq i \leq p} \frac {(\eta+\nu)(i)!}{\eta(i) !}. $$
\end{lemma}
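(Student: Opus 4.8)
The plan is to prove the statement by induction on $\card \nu$, using Lemma~\ref{lemma:derpart1} as the one-step differentiation engine. Write $\partial^{\nu} g := \frac{\partial g^{\card \nu}}{\partial r_1^{\nu(1)} \cdots \partial r_p^{\nu(p)}}$ for brevity. The base case $\nu = []$ is immediate: the zeroth-order derivative is $g$ itself, and the right-hand side collapses to $\sum_{\eta} a_\eta\, \vec r^{\eta} \prod_{i} \frac{\eta(i)!}{\eta(i)!} = S(\vec r)$, so nothing is to be shown.

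For the inductive step, I would fix $\nu$ with $\card \nu = k+1$, choose an index $i$ with $\nu(i) \geq 1$, and write $\nu = \nu' + [i]$ with $\card{\nu'} = k$. The induction hypothesis applied to $\nu'$ yields an open box $]-b',b'[^p$ on which $\partial^{\nu'} g$ exists and is represented by the multiset series
\[
T(\vec r) = \sum_{\eta} c_\eta\, \vec r^{\eta}, \qquad c_\eta = a_{\nu'+\eta}\prod_{1 \leq j \leq p} \frac{(\eta+\nu')(j)!}{\eta(j)!}.
\]
Since the $a$'s are non-negative and every factorial factor is positive, all $c_\eta$ are non-negative, and $T$ converges on the box because it coincides there with $\partial^{\nu'} g$. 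Thus $T$ has exactly the shape required by Lemma~\ref{lemma:derpart1}, which I would apply to differentiate once more in $r_i$, passing to a still-nonempty box and obtaining, after the reindexing $\eta = \theta + [i]$,
\[
\frac{\partial T}{\partial r_i}(\vec r) = \sum_{\theta} c_{\theta+[i]}\,(\theta(i)+1)\, \vec r^{\theta}.
\]

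The heart of the argument is then the factorial bookkeeping: I would verify the identity $c_{\theta+[i]}\,(\theta(i)+1) = a_{\nu+\theta}\prod_{j} \frac{(\theta+\nu)(j)!}{\theta(j)!}$. Using $\nu'+[i] = \nu$ gives $a_{\nu'+\theta+[i]} = a_{\nu+\theta}$; for $j \neq i$ the factor is unchanged since $\nu'(j) = \nu(j)$; and for $j = i$ the factor $\frac{(\theta+\nu)(i)!}{(\theta(i)+1)!} \cdot (\theta(i)+1) = \frac{(\theta+\nu)(i)!}{\theta(i)!}$ collapses precisely to the target. This identifies $\frac{\partial T}{\partial r_i}$ with the claimed expression for $\partial^{\nu} g$.

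The one genuine subtlety — and the main obstacle — is reconciling $\frac{\partial}{\partial r_i}\partial^{\nu'} g$ with the \emph{ordered} derivative $\partial^{\nu} g$ appearing in the statement, while controlling the shrinking convergence domains. For the first point I would observe that each intermediate series, being again of non-negative multiset form, defines a function whose partial derivatives are continuous on an open box, so Schwarz's theorem allows the mixed partials to be taken in any order; one may therefore always peel off the chosen index $i$ last. For the second, since each application of Lemma~\ref{lemma:derpart1} at worst halves the radius and we apply it exactly $\card \nu$ times, the final box $[-b,b]^p$ still has $b > 0$, which is all that is required.
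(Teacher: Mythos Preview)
Your proof is correct and follows essentially the same approach as the paper: induction on $\card\nu$ with Lemma~\ref{lemma:derpart1} supplying the single-step differentiation, followed by the same factorial reindexing $\eta = \theta + [i]$. If anything you are more careful than the paper, which carries out the identical computation but does not explicitly comment on the order-of-differentiation issue or the shrinking of the box; your remarks on Schwarz's theorem and on the finitely many halvings are welcome but not strictly needed if one simply peels off the outermost index at each step.
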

\begin{proof}
  The proof is by induction on $\card \nu$, and uses Lemma~\ref{lemma:derpart1}. It is clear that the result holds for $\nu= \emptyset$. Now, we suppose that it holds for every $\nu$ of cardinality $N$. Let $\kappa$ be a multiset of cardinality $N+1$, and we take $\nu$, and $i$ such that $\kappa = \nu + [i]$. By the induction hypothesis, there exists $c >0$, such that, when we define  $g:\vec r \in [-c, c]^p \mapsto S(\vec r) $, the partial derivative $\frac {\partial g^{\card\nu}}{\partial {r_1}^{\nu(1)} \ldots \partial{r_p}^{\nu(p)}}$ exists, and is equal to: $$T(\vec r)= \sum_{\eta \in \mfin{\{1, \ldots,p\}} } a_{\eta+\nu} \cdot  {\vec r}^{\eta} \cdot \prod_{1 \leq j \leq p} \frac {(\eta+\nu)(j)!}{\eta(j) !}. $$
  We see we can apply Lemma~\ref{lemma:derpart1} with $T$ as multiset series, and $I = [-c, c]^p$. It means that there exist $0 < d < c$, such that $\frac{\partial T}{\partial {r_i}}$ exists, and
  \begin{align*}
    \frac{\partial T}{\partial {r_i}}(\vec r) &=\sum_{\eta \in \mfin{\{1, \ldots,p\}} \mid i \in \eta} a_{\eta+\nu} \cdot \vec r^{\eta-[i]}\cdot \eta(i) \cdot \prod_{1 \leq j \leq p} \frac {(\eta+\nu)(j)!}{\eta(j) !} \\
    & = \sum_{\iota \in \mfin{\{1, \ldots,p\}}} a_{\iota+\kappa} \cdot  \vec r^\iota \cdot  {(\iota+[i])(i)} \cdot \prod_{1 \leq j \leq p} \frac {(\iota+\kappa)(j)!}{(\iota+[i])(j) !} \\
     & = \sum_{\iota \in \mfin{\{1, \ldots,p\}}} a_{\iota+\kappa} \cdot  \vec r^\iota \cdot  \prod_{1 \leq j \leq p} \frac {(\iota+\kappa)(j)!}{\iota(j) !}.
  \end{align*}
\end{proof}
We end the proof of Lemma~\ref{lemma:partder1} by using Lemma~\ref{lemma:derpart2} for each $\vec u \in \RR^n$ on the multiset series given by
$$S_{\vec u}(\vec r)= \sum_{\nu \in \mfin{\{1, \ldots,p\}}} (\Xi^{\pcsone, \pcstwo}{\gamma(\vec u)})_{\nu,b} \cdot \vec r^{\nu}.$$
We see that it is indeed absolutely convergent on $I = ]-\frac 1 p, \frac 1 p[^p$, using the fact that for $\vec r \in [0, \frac 1 p]^p$, $S_{\vec u}(\vec r) = \psi^{\mu,b}(\vec r, \vec u)$  for $\vec r \in \RR^p$.
\end{proof}}
  \end{proof}}
  \begin{lemma}
For every unitary $\gamma \in \pathes{\functorm \pcsone \Rightarrow \functorm \pcstwo} n$, and $(\mu,b) \in \web{\pcsone \Rightarrow \pcstwo}$, it holds that  $(\Xi^{\pcsone, \pcstwo} \circ \gamma)_{\mu,b}$ is Borel measurable for every $(\mu, b) \in \web{\pcsone \Rightarrow \pcstwo}$. 
    \end{lemma}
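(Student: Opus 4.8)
The plan is to express $(\Xi^{\pcsone, \pcstwo} \circ \gamma)_{\mu,b}$, up to a positive constant, as a higher-order partial derivative at the origin of a \emph{jointly measurable} function, and then to argue that this derivative is itself measurable because it is a pointwise limit of finite difference quotients. First I would fix $(\mu,b) \in \web{\pcsone \Rightarrow \pcstwo}$, set $\{a_1, \ldots, a_p\} = \supp \mu$, and take $\delta \in \pathes{\functorm \pcsone} p$ and $\alpha_\mu > 0$ as provided by Lemma~\ref{lemma:derivaux1}. Since $\epsilon_b \in \meastests{\functorm \pcstwo}{}$ (it does not depend on the $\RR^p$ argument), the map $\trmeas \delta {\epsilon_b}$ is a measurability test of arity $p$ on $\functorm \pcsone \Rightarrow \functorm \pcstwo$ (see Figure~\ref{fig:cccstabm}). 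As $\gamma$ is a measurable path of arity $n$, the definition of measurable paths, applied with this test, yields that
$$ H : (\vec t, \vec u) \in \RR^{p+n} \mapsto (\trmeas \delta {\epsilon_b})(\vec t)(\gamma(\vec u)) = \left(\gamma(\vec u)(\delta(\vec t))\right)_b \in \Rp{} $$
is Borel measurable on $\RR^{p+n}$.

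Next I would fix $\vec u \in \RR^n$. Since $\gamma$ is unitary, $\gamma(\vec u)$ is a unitary element of $\functor \pcsone \Rightarrow \functor \pcstwo$; by Lemma~\ref{lemma:fpresarr} the map $\Upsilon^{\pcsone,\pcstwo}$ is an isomorphism with inverse $\Xi^{\pcsone, \pcstwo}$, so $f_{\vec u} := \Xi^{\pcsone, \pcstwo}(\gamma(\vec u)) \in \prog{(\pcsone \Rightarrow \pcstwo)}$ is the unique morphism with $\mapm {f_{\vec u}} = \gamma(\vec u)$. For this $f_{\vec u}$ the slice $H(\cdot, \vec u)$ is exactly the function $\psi_\delta^{f_{\vec u}}$ of Lemma~\ref{lemma:derivaux1}, so there is $c>0$ such that the mixed partial $\partial^\mu H(\cdot, \vec u) := \frac{\partial^{\card \mu} H(\cdot, \vec u)}{\partial t_1^{\mu(a_1)} \cdots \partial t_p^{\mu(a_p)}}$ exists on $[0,c]^p$ and satisfies
$$ \partial^\mu H(\vec 0, \vec u) = \alpha_\mu \cdot (f_{\vec u})_{\mu,b} = \alpha_\mu \cdot (\Xi^{\pcsone, \pcstwo} \circ \gamma)_{\mu,b}(\vec u). $$
Hence $(\Xi^{\pcsone, \pcstwo} \circ \gamma)_{\mu,b}(\vec u) = \tfrac 1 {\alpha_\mu} \partial^\mu H(\vec 0, \vec u)$, and it remains to show that $\vec u \mapsto \partial^\mu H(\vec 0, \vec u)$ is measurable.

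For the last step I would recover the mixed partial at the origin as a limit of \emph{forward} finite differences. By the proof of Lemma~\ref{lemma:derivaux1}, each slice $H(\cdot, \vec u) = \sum_\nu (f_{\vec u})_{\nu,b}\, \vec t^{\,\nu}$ is a convergent power series, hence smooth near $\vec 0$; consequently the iterated forward difference operator $\Delta^\mu_h$, taking the $\mu(a_i)$-th order forward difference with step $h$ in the $i$-th coordinate for each $i$, satisfies $\partial^\mu H(\vec 0, \vec u) = \lim_{m \to \infty} m^{\card \mu}\,(\Delta^\mu_{1/m} H)(\vec 0, \vec u)$. Now $\Delta^\mu_h H(\vec 0, \vec u)$ is a fixed finite signed combination, with binomial coefficients, of the values $H(h \vec j, \vec u)$ over a finite set of nonnegative integer vectors $\vec j$; so for each fixed $m$ the map $\vec u \mapsto m^{\card \mu}(\Delta^\mu_{1/m} H)(\vec 0, \vec u)$ is a finite linear combination of slices $\vec u \mapsto H(\tfrac 1 m \vec j, \vec u)$ of the jointly measurable $H$, hence measurable. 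Since measurable functions $\RR^n \to \Rp{}$ are closed under pointwise limits, $\vec u \mapsto \partial^\mu H(\vec 0, \vec u)$ is measurable, and dividing by $\alpha_\mu > 0$ gives the claim. The main obstacle is exactly this transfer of measurability through differentiation: it is legitimate only because Lemma~\ref{lemma:derivaux1} guarantees that the derivative exists and the slices are smooth, so that differentiation collapses to a countable pointwise limit of finite differences, an operation under which Borel measurability is preserved.
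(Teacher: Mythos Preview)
Your proof is correct and follows essentially the same route as the paper: build the jointly measurable function $H$ from the measurability test $\trmeas{\delta}{\epsilon_b}$ and the path $\gamma$, identify $(\Xi^{\pcsone,\pcstwo}\circ\gamma)_{\mu,b}$ with a constant multiple of a mixed partial of $H$ at the origin via Lemma~\ref{lemma:derivaux1}, and then recover that partial as a pointwise limit of finite difference quotients. The only cosmetic difference is that the paper iterates one-dimensional difference quotients coordinate by coordinate, while you take the multi-index forward difference in one shot; both are legitimate since the slice $H(\cdot,\vec u)$ is a convergent power series, hence smooth near $\vec 0$.
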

  \begin{proof}
    We take $\alpha_\mu, \delta, c$ as given by Lemma~\ref{lemma:derivaux1}. Since $\trmeas{\epsilon_b}{\delta}$ is a measurability tests for $\functorm \pcsone \Rightarrow \functorm \pcstwo$, we see that the function $$(\vec t, \vec s) \in \RR^{p+n} \mapsto \psi_\delta^{\gamma(\vec s )}(\vec t) = {(\trmeas {\epsilon_b}{\delta})(\vec t)(\gamma(\vec s))}  \in \Rp{}$$ is measurable. Since $K = [0,c[^p \times \RR^n$ is a measurable subset of $\RR^{p+n}$, the restriction (that we denote $\phi$) of this function to $K$ is measurable too.
        Moreover, observe that $\Xi^{\pcsone, \pcstwo} \circ \gamma (\vec s) \in \prog{(\pcsone \Rightarrow \pcstwo})$ and $\mapm{{\Xi^{\pcsone, \pcstwo} \circ \gamma(\vec s)}} = \gamma(\vec s) $. It tells us that we can apply Lemma~\ref{lemma:derivaux1}, and we see that:
\begin{equation}\label{eq:auxdermeaas}
  (\Xi^{\pcsone, \pcstwo}\circ \gamma)(\vec s)_{\mu,b} = \frac 1 {\alpha_\mu} \cdot  \frac {\partial {\phi}^{\card \mu}}{\partial {t_1}^{\mu(a_1)} \ldots \partial{t_p}^{\mu(a_p)}}(\vec 0, \vec s)
  \end{equation}
        (observe that this partial derivatives exists since it exists for $\psi_\delta^{\gamma(s)}$ for every fixed $\vec s$). 
        We are now going to show that every partial derivative of $\phi$, when it exists, is measurable too. It is based of the fact that the class of real-valued measurable functions is closed by addition, multiplication by a scalar and pointwise limit. Indeed, there exists a poitive sequence $(r_n)_{n \in \NN}$ in $[0,c[$ which tends towards $0$. As a consequence, the partial derivative of $\phi$ with respect to $t_1$ (for instance) may be written as: $ \frac {\partial \phi}{\partial t_1}(0, \vec t,\vec s) =  \lim_{n \rightarrow \infty} f_n( \vec t, \vec s)$, with
            $f_n(\vec t, \vec s) = \frac  {\phi((r_n, \vec t), \vec s) - \phi((0, \vec t), \vec s)}{r_n}$. It tells us that $((\vec t,\vec s) \in \RR^{p-1+n} \mapsto \frac {\partial \phi}{\partial t_1}(0, \vec t,\vec s)) $ is the pointwise limit of a sequence of measurable functions, hence is measurable. By iterating this reasonning, we see that it is also the case for higher-order partial derivatives (when they exists), and~\eqref{eq:auxdermeaas} allows us to conclude the proof. 
            
  \end{proof}
  
\begin{lemma}\label{lemma:fmarrowclosed}
  For all $\pcsone$, $\pcstwo$ PCSs,
  $$  \Xi^{\pcsone, \pcstwo} \in \cstabm(\functorm \pcsone \Rightarrow \functorm \pcstwo, \functorm (\pcsone \Rightarrow \pcstwo)).$$ 
\end{lemma}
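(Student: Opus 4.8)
The plan is to deduce Lemma~\ref{lemma:fmarrowclosed} from the reduction principle of Lemma~\ref{lemma:auxmeaspath}, which says that to promote a $\cstab$-morphism landing in a cone of the form $\functor \pcsthree$ to a genuine $\cstabm$-morphism, it suffices to verify that its web-coordinates are Borel measurable along every unitary measurable path of the source. Since we already know from Lemma~\ref{lemma:fpresarr} that $\Xi^{\pcsone, \pcstwo}$ is a morphism of $\cstab$, the only content left at this stage is to check that this coordinatewise measurability condition holds; once it does, the lemma applies mechanically.

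First I would instantiate Lemma~\ref{lemma:auxmeaspath} by taking the target PCS to be $\pcsone \Rightarrow \pcstwo$ (which is indeed a PCS, by the cartesian closed structure of $\pcoh_!$ recalled in Section~\ref{subsect:pcoh!}), the measurable cone $\coneonem$ to be $\functorm \pcsone \Rightarrow \functorm \pcstwo$, and $f$ to be $\Xi^{\pcsone, \pcstwo}$. The first hypothesis of that lemma requires $\Xi^{\pcsone, \pcstwo} \in \cstab(\forgetful (\functorm \pcsone \Rightarrow \functorm \pcstwo), \functor (\pcsone \Rightarrow \pcstwo))$. Because the forgetful functor $\forgetful : \cstabm \rightarrow \cstab$ is cartesian closed and $\functor = \forgetful \circ \functorm$, the underlying cone $\forgetful(\functorm \pcsone \Rightarrow \functorm \pcstwo)$ is exactly $\functor \pcsone \Rightarrow \functor \pcstwo$, so this hypothesis is precisely the statement of Lemma~\ref{lemma:fpresarr}.

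The second hypothesis demands that for every $n$, every unitary $\gamma \in \pathes{\functorm \pcsone \Rightarrow \functorm \pcstwo} n$, and every $(\mu,b) \in \web{\pcsone \Rightarrow \pcstwo} = \mfin{\web \pcsone} \times \web \pcstwo$, the coordinate function $(\Xi^{\pcsone, \pcstwo} \circ \gamma)_{\mu,b} : \RR^n \rightarrow \Rp{}$ is Borel measurable. This is exactly the lemma stated immediately above. With both hypotheses discharged, Lemma~\ref{lemma:auxmeaspath} yields $\Xi^{\pcsone, \pcstwo} \in \cstabm(\functorm \pcsone \Rightarrow \functorm \pcstwo, \functorm (\pcsone \Rightarrow \pcstwo))$, which is the claim.

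The genuine difficulty lies not in this assembly but upstream, in establishing the coordinatewise measurability used as the second hypothesis: there the point is that each $(\Xi^{\pcsone, \pcstwo} \circ \gamma)_{\mu,b}$ can be recovered, via Lemma~\ref{lemma:derivaux1}, as the value at $\vec 0$ of an iterated partial derivative of the measurable function $(\vec t, \vec s) \mapsto (\trmeas{\delta}{\epsilon_b})(\vec t)(\gamma(\vec s))$ for a suitably chosen auxiliary path $\delta$, and that partial derivatives of a Borel measurable function—being pointwise limits of difference quotients taken along a sequence tending to $0$—are again Borel measurable. Thus the only care needed at this final step is to confirm that the instantiation matches the hypotheses of Lemma~\ref{lemma:auxmeaspath} exactly, namely that $\functorm(\pcsone \Rightarrow \pcstwo)$ really is the measurable cone attached to the PCS $\pcsone \Rightarrow \pcstwo$ and that $\Xi^{\pcsone, \pcstwo}$ is the $\cstab$-morphism living there; everything substantive has already been carried out in the preceding lemmas.
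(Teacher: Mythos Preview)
Your proposal is correct and matches the paper's approach exactly: the paper explicitly announces, just before Lemma~\ref{lemma:derivaux1}, that $\Xi^{\pcsone,\pcstwo}$ will be shown to lie in $\cstabm$ ``by using Lemma~\ref{lemma:auxmeaspath}'', with the coordinatewise measurability supplied by the unlabeled lemma immediately preceding Lemma~\ref{lemma:fmarrowclosed}. The paper then simply states Lemma~\ref{lemma:fmarrowclosed} without a separate proof environment, leaving the assembly you spell out as implicit; your write-up is precisely that intended assembly.
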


\hide{\begin{proof}
Since the class of real-valued measurable functions is closed by addition, multiplication by a scalar and pointwise limit, and that $\psi^{\mu,b} : K \times \RR^p \rightarrow \Rp{}$ is measurable, it holds that the partial derivatives (when they exist) are measurable too. Indeed, observe that $\frac{\partial{\psi^{\mu,b}}}{\partial r_1}((r_1, \vec r), \vec u) = \lim_{n \rightarrow \infty} f_n((r_1, \vec r), \vec u)$, with
$$f_n((r_1, \vec r), \vec u) = n \cdot {f((r_1+ \frac 1 n, \vec r), \vec u) - f((r_1, \vec r), \vec u)}.$$
As a consequence, applying Lemma~\ref{lemma:partder1} with $\nu = \mu$ leads us to: 
$$\vec u \in \RR^n \mapsto \frac {\partial {(\psi^{\mu,b})}^{\card \mu}}{\partial {r_1}^{\mu(1)} \ldots \partial{r_p}^{\mu(p)}}(\vec 0, \vec u)  \text{ is measurable.}$$
Therefore (again by Lemma~\ref{lemma:partder1}), 
$(\vec u \in \RR^n \mapsto {\Xi^{\pcsone, \pcstwo}(\gamma(\vec u))}_{\mu,b}  \cdot \prod_{1 \leq i \leq p} {\mu(i)!} \text{ is measurable})$, and the result holds.
  \end{proof}
}
  
\begin{theorem}
$\functorm$ is a cartesian closed full and faithful functor.
\end{theorem}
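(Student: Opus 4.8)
The plan is to observe that full faithfulness of $\functorm$ was already established in the immediately preceding theorem, so the only thing left to prove here is that $\functorm$ respects the cartesian closed structure. Since $\functor = \forgetful \circ \functorm$ by construction and the forgetful functor $\forgetful : \cstabm \rightarrow \cstab$ is itself cartesian closed, the isomorphism witnesses already built for $\functor$ in Lemmas~\ref{lemma:fpresprod} and~\ref{lemma:fpresarr} are the only reasonable candidates, and it suffices to check that these maps, together with their inverses, are in fact $\cstabm$-morphisms and hence furnish isomorphisms in $\cstabm$.

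First I would treat products. By Lemma~\ref{lemma:fpresprod}, the maps $\Psi^{\fone}$ and $\Theta^{\fone}$ are mutually inverse morphisms in $\cstab$, witnessing $\functor{(\prod_{i \in I}^{\pcoh_!} \pcsone_i)} \cong \prod_{i \in I}^{\cstabm} \functor{\pcsone_i}$. Lemma~\ref{lemma:functormcartesian} shows that both $\Psi^{\fone}$ and $\Theta^{\fone}$ are moreover morphisms in $\cstabm$. Because the underlying functions, and hence composition, are unchanged when passing from $\cstab$ to $\cstabm$, the relation $\Psi^{\fone} \circ \Theta^{\fone} = \mathrm{id}$ and $\Theta^{\fone} \circ \Psi^{\fone} = \mathrm{id}$ still holds in $\cstabm$; thus $\Psi^{\fone}$ is an isomorphism in $\cstabm$ and $\functorm$ preserves finite products.

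Next I would treat function spaces in exactly the same fashion. Lemma~\ref{lemma:fpresarr} provides the mutually inverse $\cstab$-morphisms $\Upsilon^{\pcsone, \pcstwo}$ and $\Xi^{\pcsone, \pcstwo}$. The lemma establishing $\Upsilon^{\pcsone, \pcstwo} \in \cstabm(\functorm(\pcsone \Rightarrow \pcstwo), \functorm \pcsone \Rightarrow \functorm \pcstwo)$, together with Lemma~\ref{lemma:fmarrowclosed} for $\Xi^{\pcsone, \pcstwo}$, shows that both are $\cstabm$-morphisms; since they are mutually inverse in $\cstab$ they remain so in $\cstabm$, so $\functorm$ preserves exponentials. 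Combining the two preservation statements with the already-known full faithfulness yields that $\functorm$ is a cartesian closed full and faithful functor.

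The genuine difficulty is not in this final assembly, which is purely formal, but lives entirely in Lemma~\ref{lemma:fmarrowclosed}: its proof required recovering the matrix coefficients of a $\pcoh_!$-morphism as higher-order partial derivatives of the associated power series and verifying that these derivatives are obtained in a Borel-measurable way. Once that lemma and the corresponding statement for $\Upsilon^{\pcsone, \pcstwo}$ are in hand, the present theorem amounts to the observation that isomorphism witnesses in $\cstab$ lift verbatim to $\cstabm$.
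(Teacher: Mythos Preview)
Your proposal is correct and follows essentially the same approach as the paper. The paper does not even write out an explicit proof of this theorem: the section opens by stating that, since $\forgetful$ is cartesian closed, it suffices to check that the four $\cstab$-morphisms $\Psi^{\fone}$, $\Theta^{\fone}$, $\Upsilon^{\pcsone,\pcstwo}$, $\Xi^{\pcsone,\pcstwo}$ are also $\cstabm$-morphisms, then proves exactly those four lemmas and states the theorem as their immediate consequence --- which is precisely the assembly you describe.
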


\hide{\subsection{Programming Languages Considerations}
We want now to give a denotational semantics of $\pcf_\oplus$ in the category $\cstabm$. We want it to be as compatible as possible with the semantics of $\pcf_{\text{sample}}$ given in~\cite{pse}, in order to be able to see it as a fragment of the semantics of a language $\pcf_{\oplus, \text{sample}}$ containing both $\pcf_{\oplus}$ and $\pcf_{\text{sample}}$, which would be a language with continuous probabilities with an explicit discrete probabilistic fragment.

To ensure this compatibility, we take $\semm \NN = \meas \NN$, and the semantics of every type and typing context in $\pcf_\oplus$ is given inductively by $\semm \NN$, $\cstabm$ and $\times_m$.
We would like now to use the functor $\functorm$ in order to send $\sem{\pcf_\oplus}_{\pcoh_!}$ into $\cstabm$, in such a way that the semantics of types is as specified above. To do that, we need to have guarantees about the behavior of $\functorm$  regarding the cartesian structure. However, it is not true that $\functorm$ respects the cartesian structure of $\pcoh_!$: it comes from the fact that measurability tests for $\functorm{\pcsone} \Rightarrow_m \functorm {\pcstwo}$ are not the same as those in $\functorm{\pcsone \Rightarrow \pcstwo}$. But, it turns out that the \emph{measurable paths} of these two MCs are the same. It leads us to a weaker form of preservation of the cartesian structure, stated below in Theorem~\ref{th:cartclosedfunctorm}.
\begin{definition}
We say that two MCs $\coneone_m$ and $\conetwo_m$ are \emph{cone-isomorphic}, and we note $\coneone_m \equiv \conetwo_m$, if there exists an isomorphism $\phi$ in $\cstab$ between their underlying cones, and moreover $\pathes {\conetwo_m} n = \phi \circ \pathes {\coneone_m} n$.
\end{definition}
Observe that $\meas \NN \equiv \functor \NN$. The relevance of the equivalence relation comes from Theorem~\ref{th:cartclosedfunctorm} below:

\begin{theorem}\label{th:cartclosedfunctorm}
Let $\pcsone$ and $\pcstwo$  be two PCSs. Let $\coneone_m \equiv \functorm \pcsone$, and $\conetwo_m \equiv \functorm \pcstwo$. Then $(\coneone_m \Rightarrow_m \conetwo_m) \equiv \functorm{(\pcsone \Rightarrow \pcstwo)}$, and $(\coneone_m \times_m \conetwo_m) \equiv \functorm{(\pcsone \times \pcstwo)}$.
\end{theorem}
The proof, quite technical, of Theorem~\ref{th:cartclosedfunctorm} can be found in Appendix.


Now, we denote by $\discrm$ the subcategory of $\cstabm$ build inductively by $\coneone_\NN$, $\times_m$ and $\Rightarrow_m$, and $\discr$ the subcategory of $\pcoh_!$ obtained in the same way from $\NN$, $\Rightarrow$ and $\times$.
Observe that the $\pcoh_!$ semantics of $\pcf_{\oplus}$ actually live in $\discr$, and that we want to embed it into $\discrm$. We do this by using Theorem~\ref{th:cartclosedfunctorm} to convert $\functorm$ into a convenient functor $\discr \rightarrow \discrm$.

\begin{proposition}
  There exists a functor $\functord : \discr \rightarrow \discrm$, that respects the cartesian structure, and is full and faithful.
\end{proposition}
\begin{proof}
We first define the effect of $\functord$ on objects in $\discr$. For every $\typone$ a discrete type, we take $\functord{\sem \typone} = \semm \typone$. Now, we are going to define the effect of $\functord$ on morphisms. For every $\typone, \typtwo$ discrete type, we can see by iterating Theorem~\ref{th:cartclosedfunctorm} that $\functorm {\sem \typone} \equiv \semm \typone$, as well as $\functorm {\sem \typtwo} \equiv \semm \typtwo$. Let $\phi^\typone , \phi^\typtwo$ be the relevant cones isomorphisms between underlying cones. Let $f \in \pcoh_!(\sem \typone, \sem \typtwo)$: we take $\functorm f = \phi^{\typtwo} \circ \functor f \circ {\phi^{\typone}}^{-1}$. To show that it is full and faithful, we use the fact that $\functorm$ is so.
  \end{proof}

\begin{theorem}
$\functord(\sem{PCF_\oplus})$ is a fully abstract denotational model of $PCF_\oplus$ in $\cstabm$. 
\end{theorem}
\begin{proof}
  We know from~\cite{pcsaamohpc} that $\sem{{\pcf_\oplus}}$ is a fully abstract denotational model of $\pcf_\oplus$. 
Since $\functord$ is a full and faithful functor, it is also the case for $\functord(\sem{\pcf_\oplus})$.
  \end{proof}}
\begin{section}{Conclusion}
  Our full embedding of $\pcoh_!$ into $\cstab$ implies that every stable function $f$ from $\prog \pcsone$ to $\prog \pcstwo$ can be characterized by an element $\Xi^{\pcsone, \pcstwo} (f) \in \RR^{\mfin{\web{\pcsone}}\times \web \pcstwo}$, that has to be seen as a power series. It gives us a \emph{concrete representation} of stable functions on discrete cones, similar to the notion of trace introduced by Girard in~\cite{girard1986system} for stable functions on quantitative domains. There are well-known real analysis results on power series, as for instance the \emph{uniqueness theorem}\textemdash any power series which is null on an open subset has all its coefficients equal to $0$\textemdash on which is based the proof of full abstraction for $\pcf_\oplus$ in $\pcoh_!$~\cite{EPT15}.
  While we have not been able to extend such a concrete representation to cones which are not \emph{directed-complete}, as for instance the cone $\meas \RR \Rightarrow_m \meas \RR$, our result could hopefully be a first step in this direction. This kind of characterization could lead to a way towards a full abstraction result for the continuous language $\pcf_{\text{sample}}$ in $\cstabm$, and more generally gives us new tools to reason about continuous probabilistic programs. 
  \end{section}

\bibliographystyle{abbrv}
\bibliography{biblio}

\end{document}